\theoremstyle{plain}
\newtheorem{assumption}{\protect\assumptionname}
\theoremstyle{plain}
\newtheorem{lem}{\protect\lemmaname}
\theoremstyle{plain}
\newtheorem{thm}{\protect\theoremname}
\date{}
\providecommand{\assumptionname}{Assumption}
\providecommand{\lemmaname}{Lemma}
\providecommand{\theoremname}{Theorem}
\begin{document}
\title{Identification and Estimation of SVARMA models with Independent and
Non-Gaussian Inputs}
\author{Bernd Funovits}

\maketitle
\thispagestyle{empty}

\section*{Proposed Running Head}

Non-Gaussian SVARMA Identification

\section*{Affiliation}

\begin{singlespace}
\textbf{University of Helsinki}

Faculty of Social Sciences

Discipline of Economics

P. O. Box 17 (Arkadiankatu7)

FIN-00014 University of Helsinki
\end{singlespace}

and

\begin{singlespace}
\textbf{TU Wien}

Institute of Statistics and Mathematical Methods in Economics

Econometrics and System Theory

Wiedner Hauptstr. 8

A-1040 Vienna
\end{singlespace}

\section*{E-mail}

bernd.funovits@helsinki.fi

\pagebreak{}

\thispagestyle{empty}

\section*{Abstract}

This paper analyzes identifiability properties of structural vector
autoregressive moving average (SVARMA) models driven by independent
and non-Gaussian shocks. It is well known, that SVARMA models driven
by Gaussian errors are not identified without imposing further identifying
restrictions on the parameters. Even in reduced form and assuming
stability and invertibility, vector autoregressive moving average
models are in general not identified without requiring certain parameter
matrices to be non-singular. Independence and non-Gaussianity of the
shocks is used to show that they are identified up to permutations
and scalings. In this way, typically imposed identifying restrictions
are made testable. Furthermore, we introduce a maximum-likelihood
estimator of the non-Gaussian SVARMA model which is consistent and
asymptotically normally distributed.

Keywords: Structural vector autoregressive moving-average models,
non-Gaussianity, Identifiability

JEL classification: C32, C51, E52

\pagebreak{}

\section{Introduction}

\setcounter{page}{1}

Recently, \citet{LMS_svarIdent16} and \citet{GourierouxZakoianRenne17}
have shown that structural vector autoregressive (SVAR) models driven
by independent non-Gaussian components are identified up to scaling
and permutations which makes the typically imposed identifying restrictions
testable. If the error terms driving the economy are Gaussian or (cross-sectionally)
merely uncorrelated (as opposed to independent), one has to resort
to identifying restrictions in order to conclude on the fundamental
shocks driving the economy. From analysis in terms of second moments,
the true shocks can be identified only up to multiplication with orthogonal
matrices (which all lead to the same second moments of the  observed
process). Non-Gaussianity combined with cross-sectional independence,
however, allows to identify the shocks up to permutations and scalings.
In particular, infinitely many linear combinations of shocks generating
the same second moments are reduced to a finite set of linear combinations
generating the same distributional outcome. It is thus possible to
employ a data-driven approach instead of a story-telling approach.
Most importantly, the identifying (story-imposed) restrictions are
made testable when using the (data-driven) non-Gaussian SVARMA approach. 

Structural econometric analysis is usually conducted with SVAR models.
The situation for structural VARMA models driven by independent non-Gaussian
shocks is more complicated because one has to take additional identifiability
restrictions on the parameter space into account. In this paper, spectral
factorization techniques are employed to generalize the SVAR results
by \citet{LMS_svarIdent16} to the SVARMA case. While the literature
on SVAR models is abundant, see \citet{KilianLut17} and references
therein, the contributions regarding SVARMA models are easier to keep
track of, see, e.g., \citet{BoubacarFrancq11} and \citet{GourierouxMR_svarma19}.
In structural econometric analysis, the impulse response function
(IRF) and variance decompositions are the primary objects of interest
\citep{luet05,KilianLut17}. Especially in macroeconometrics, where
data is sometimes available only at quarterly instances, it is of
paramount importance to use a parsimoniously parameterized models
(like e.g. SVARMA models) for which the IRF and other can be obtained
straight-forwardly. It is widely known that SVARMA models are superior
to SVAR models in this respect, see, e.g., \citet{HannanDeistler12}.
Moreover, the articles \citet{Poskitt16}, \citet{PoskittYao17},
\citet{RaghavanAthSilvapulle16}, \citet{AthVahid08}, and \citet{AthVahid_JTSA_08}
provide ample evidence and make a strong point for using VARMA models
instead of VAR models for econometric analysis.

In a recent contribution, \citet{GourierouxMR_svarma19} consider
the dynamic identification problem in SVARMA models. While their focus
is a general treatment of whether it is possible to identify the root
location of determinantal roots of the associated MA polynomial matrix
in the structural VARMA case, we focus here on the precise derivation
of the properties of the maximum likelihood (ML) estimator of the
fundamental representation, including the first and second partial
derivatives with respect to all system and noise parameters.

One (perceived) disadvantage of VARMA models is increased complexity
of the estimation procedure compared to VAR models. Two rebuttals
are in order. First, there are many sophisticated (e.g. non-linear
threshold) VAR models whose estimation is arguably more involved than
the one of VARMA models. Second, there are many stable and openly
available software implementations which should put the complexities
of estimation of VAR and VARMA models on the same level. Examples
for implementations in the R software environment \citet{Rcore} are
\citet{Scherrer_rldm}, \citet{Tsay13,Tsay_R_MTS} and \citet{Gilbert_R_dse},
see also \citet{ScherrerDeistler2019_handbook} for a comparison and
further comments on these packages, and in MATLAB \citet{Gomez_matlab_15,Gomez16}.
The estimation procedure described in this article is implemented
in R and can be installed with the command \texttt{devtools::install\_github(``bfunovits/svarma\_id'')}
in the R console.

The rest of the paper is structured as follows. In section 2, the
SVARMA model is introduced. In section 3, the identification result
is stated and proved. In section 4, the maximum likelihood (ML) estimator
is derived and shown to be consistent and asymptotically normal. In
section 5, we illustrate the method. Proofs and technical details
are available in the Online Appendix.

We use $z$ as a complex variable as well as the backward shift operator
on a stochastic process, i.e. $z\left(y_{t}\right)_{t\in\mathbb{Z}}=\left(y_{t-1}\right)_{t\in\mathbb{Z}}$
and define $i=\sqrt{-1}$. The transpose of an $\left(m\times n\right)$-dimensional
matrix $A$ is denoted by $A'$. The column-wise vectorization of
$A\in\mathbb{R}^{m\times n}$ is denoted by $vec\left(A\right)\in\mathbb{R}^{mn\times1}$
and for a square matrix $B\in\mathbb{R}^{n\times n}$ we denote with
$vecd{^\circ}\left(B\right)\in\mathbb{R}^{n(n-1)}$ the vectorization
where the diagonal elements of $B$ are left out. The $n$-dimensional
identity matrix is denoted by $I_{n}$, an $n$-dimensional diagonal
matrix with diagonal elements $\left(a_{1},\ldots,a_{n}\right)$ is
denoted by $\text{diag}\left(a_{1},\ldots,a_{n}\right)$, and the
inequality $">0"$ means positive definiteness in the context of matrices.
The column vector $\iota_{i}$ has a one at positions $i$ and zeros
everywhere else. The expectation of a random variable with respect
to a given probability space is denoted by $\mathbb{E}\left(\cdot\right)$.
Convergence in probability and in distribution are denoted by $\xrightarrow{p}$
and $\xrightarrow{d}$, respectively. Partial derivatives $\left.\frac{\partial f(x)}{\partial x}\right|_{x=x_{0}}$
of a real-valued function $f(x)$ evaluated at a point $x_{0}\in\mathbb{R}^{k}$
are denoted by $f_{x}\left(x_{0}\right)$ and considered columns.

\section{\label{sec:Model}Model}

We start from an $n$-dimensional VARMA system 
\begin{equation}
\underbrace{\left(I_{n}-a_{1}z-\cdots a_{p}z^{p}\right)}_{=a(z)}y_{t}=\underbrace{\left(I_{n}+b_{1}z+\cdots+b_{q}z^{q}\right)}_{=b(z)}B\varepsilon_{t},\quad a_{i},b_{i}\in\mathbb{R}^{n\times n}.\label{eq:system}
\end{equation}
The shocks $\left(\varepsilon_{t}\right)_{t\in\mathbb{Z}}$ driving
the system are identically and independently distributed (i.i.d.)
in cross-section and time, have zero mean, and diagonal covariance
matrix $\Sigma^{2}$ with positive diagonal elements $\sigma_{i}^{2}$,
whose positive square root is in turn denoted by $\sigma_{i}$ . To
simplify presentation, we also introduce the column vector $\sigma=\left(\sigma_{1},\ldots,\sigma_{n}\right)'$
and $\Sigma=\text{diag}\left(\sigma_{1},\ldots,\sigma_{n}\right)$,
as well as $x_{t-1}'=\left(y_{t-1}',\ldots,y_{t-p}'\right)$ and $s_{t-1}'=\left(\varepsilon_{t-1}'B',\ldots,\varepsilon_{t-q}'B'\right)$
such that equation \eqref{sec:Model} can be written as 
\[
y_{t}=\left(a_{1},\ldots,a_{p}\right)x_{t-1}+\left(b_{1},\ldots,b_{q}\right)s_{t-1}+B\varepsilon_{t}.
\]
We assume that the stability condition 
\begin{equation}
\det\left(a(z)\right)\neq0,\ \left|z\right|\leq1,\label{eq:stability}
\end{equation}
and the strict invertibility condition 
\begin{equation}
\det\left(b(z)\right)\neq0,\ \left|z\right|\leq1\label{eq:invertibility}
\end{equation}
hold, and that $B$ is invertible and has ones on its diagonal. Furthermore,
we assume that the polynomial matrices $a(z)$ and $b(z)$ are left-coprime\footnote{Two matrix polynomials are called left-coprime if $\left(a(z),b(z)\right)$
is of full row rank for all $z\in\mathbb{C}$. For equivalent definitions
see \citet{HannanDeistler12} Lemma 2.2.1 on page 40.} and that $\left(a_{p},b_{q}\right)$ is of full rank\footnote{The stability, invertibility, coprimeness, and full-rank assumptions
on the parameters in $a(z)$ and $b(z)$ can be relaxed. Imposing
them, allows us to focus on the essential part of this contribution:
To reduce the class of observational equivalence in terms of second
moments from the orthogonal matrices to permutation matrices in the
context of SVARMA models.}. Note that this full rank assumption is over-identifying in the sense
that some rational transfer function cannot be parameterized by any
VARMA(p,q) system which satisfies this assumption, see \citet{Hannan71}
or \citet{HannanDeistler12}, Chapter 2.7 on page 77.

The stationary solution $\left(y_{t}\right)_{t\in\mathbb{Z}}$ of
the system \eqref{eq:system} is called an ARMA process.

We follow \citet{Rothenberg71} to define identifiability of parametric
models. The external characteristic of the stationary solution $\left(y_{t}\right)_{t\in\mathbb{Z}}$
of \eqref{eq:system} is the probability distribution function (or
a subset of corresponding moments). A particular system \eqref{eq:system}
is described by the parameters of \eqref{eq:system} which satisfy
assumptions \eqref{eq:stability} and \eqref{eq:invertibility} as
well as the coprimeness assumption, the full rank assumption and the
assumptions on $B$ and $D^{2}$. The model is then characterized
by the set of all a priori possible systems which we will call internal
characteristics. Two systems of the form \eqref{eq:system} are called
observationally equivalent if they imply the same external characteristics
of $\left(y_{t}\right)_{t\in\mathbb{Z}}$. A system is identifiable
if there is no other observationally equivalent system. The identifiability
problem is concerned with the existence of an injective function from
the internal characteristics to the external characteristics\footnote{The inverse of this function, i.e. from the external to the internal
characteristics, is called the identifying function.}, see \citet{DeistlerSeifert78} for a more detailed discussion.

The classical (non-)identifiability issues where the external characteristics
are described by the second moments of $\left(y_{t}\right)_{t\in\mathbb{Z}}$
are best understood in terms of the spectral density of the stationary
solution of \eqref{eq:system}. The spectral density, i.e. the Fourier
transform of the autocovariance function $\gamma(s)=\mathbb{E}\left(y_{t}y_{t-s}'\right),\ s\in\mathbb{Z},$
of $\left(y_{t}\right)_{t\in\mathbb{Z}}$ , is 
\[
f(z)=a(z)^{-1}b(z)B\Sigma^{2}B'b'\left(\frac{1}{z}\right)a'\left(\frac{1}{z}\right)^{-1}=k(z)\left(B\Sigma^{2}B'\right)k'\left(\frac{1}{z}\right),
\]
evaluated at $z=e^{-i\lambda}$, $\lambda\in\left[-\pi,\pi\right]$,
where $k(z)=a(z)^{-1}b(z)=\sum_{j=0}^{\infty}k_{j}z^{j},\ k(0)=I_{n}$,
and $k(z)B$ corresponds to the transfer function relating the output
$y_{t}$ to the $\left(\varepsilon_{t}\right)_{t\in\mathbb{Z}}$ .

On the one hand, transforming the pair $\left(B,\Sigma\right)$ with
an orthogonal matrix\footnote{A square matrix is orthogonal if $QQ'=Q'Q=I_{n}$.}
$Q$ to $\left(B\Sigma Q\Sigma_{1}^{-1},\Sigma_{1}\right)$, where
$\Sigma_{1}$ is a diagonal matrix such that the diagonal elements
of $B$ are equal to one, generates the same spectral density because
$B_{1}\Sigma_{1}^{2}B_{1}'=B\Sigma^{2}B'$ where $B_{1}=B\Sigma Q\Sigma_{1}^{-1}$.
Hence, the class of observational equivalence is at least $\frac{n(n-1)}{2}$-dimensional.
On the other hand, it is easy to see \citep[page 66]{Hannan70} that
two spectral factors\footnote{A spectral factor $l(z)$ is a rational matrix function for which
$l(z)l'\left(\frac{1}{z}\right)$, evaluated at the unit circle, is
equal to the spectral density.} of the form $k(z)B\Sigma=a(z)^{-1}b(z)B\Sigma$, where $a(z)$ and
$b(z)$ satisfy \eqref{eq:stability} and \eqref{eq:invertibility}
as well as the coprimeness assumption, the full rank assumption and
where $B$ and $\Sigma$ satisfy the assumptions outlined above, obtained
from the spectral density corresponding to the stationary solution
of \eqref{eq:system} are related through orthogonal matrices. This
means that any other spectral factor is of the form $a(z)^{-1}b(z)B\Sigma Q$
where $Q$ is an orthogonal matrix. By normalizing the diagonal elements
of $B\Sigma Q$, we obtain a new pair $\left(B_{1},\Sigma_{1}\right)$
of the required form. Hence, the class of observational equivalence
is $\frac{n(n-1)}{2}$-dimensional. This result, however, only uses
second moment information and not the full distribution of the stochastic
process $\left(\varepsilon_{t}\right)_{t\in\mathbb{Z}}$.

We will show in the next section that if the inputs $\left(\varepsilon_{t}\right)$
to \eqref{eq:system} are non-Gaussian and independent, the spectral
factors are related by permutation matrices (modulo sign). Thus, we
reduce the class of observational equivalence from the group of orthogonal
matrices to the group of (signed) permutations. 

\section{\label{sec:identification_scheme}Identification of the Instantaneous
Shock Transmission}

In this section, we first use the cross-sectional independence and
non-Gaussianity of the components of the shocks $\varepsilon_{t}$
for identifying the matrix $B$ up to permutation and scaling of its
columns.Finally, we discuss advantages and disadvantages of various
rules for choosing a particular permutation and scaling.

The assumptions on the error term $\varepsilon_{t}=\left(\varepsilon_{1,t},\ldots,\varepsilon_{n,t}\right)$
are the same as in \citet{LMS_svarIdent16}, the essential one being
that the components (at one point in time) are mutually independent
and that at most one of them has a Gaussian marginal distribution.

\begin{assumption}
\label{assu:non_gaussianIID}We assume the following.
\begin{enumerate}
\item The error process $\varepsilon_{t}=\left(\varepsilon_{1,t},\ldots,\varepsilon_{n,t}\right)$
is a sequence of i.i.d. random vectors. Each component $\varepsilon_{i,t},\ i\in\left\{ 1,\ldots,n\right\} $
has zero mean and positive variance. 
\item For any (fixed) point in time, the components of $\varepsilon_{t}$
are mutually independent and at most one of the components has a Gaussian
marginal distribution. 
\end{enumerate}
\end{assumption}
In order to strengthen intuition as to how non-Gaussianity and independence
help reducing the size of the class of observational equivalence,
consider the following example featuring two identically and independently
uniformly distributed random variables. Rotating these two variables
45 degrees (with rotation matrix $\frac{1}{\sqrt{2}}\left(\begin{smallmatrix}1 & 1\\
1 & -1
\end{smallmatrix}\right)$) leads to marginal distributions which are ``more Gaussian'' (e.g.
measured by the absolute value of the excess kurtosis) than the original
variables. This suggests that searching for linear combinations that
lead to ``maximally non-Gaussian'' variables might pin down a rotation.
In the following, we present a formal approach.

\subsection{Fixing a Rotation}

The theoretical background for reducing the class of observational
equivalence from orthogonal matrices to (signed) permutations is provided
by the following lemma. It allows to conclude from the independence
of the sums of independent variables on the distribution of the underlying
summands. In particular, it is useful to conclude on the coefficients
pertaining to the summands if one makes additional assumptions on
the distribution of the summands.

We use
\begin{lem}[\citet{Kagan73}, Theorem 3.1.1]
\label{lem:Kagan} Let $X_{1},\ldots X_{n}$ be independent (not
necessarily identically distributed) random variables, and define
$Y_{1}=\sum_{i=1}^{n}a_{i}X_{i}$ and $Y_{2}=\sum_{i=1}^{n}b_{i}X_{i}$
where $a_{i}$ and $b_{i}$ are constants. If $Y_{1}$ and $Y_{2}$
are independent, then the random variables $X_{j}$ for which $a_{j}b_{j}\neq0$
are all normally distributed. 
\end{lem}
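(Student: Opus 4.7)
The plan is to prove this (which is the Darmois--Skitovich theorem) via the characteristic-function method. As a first reduction, any index $j$ with $a_j b_j = 0$ can be removed: such an $X_j$ enters at most one of the two sums, and using independence of $X_j$ from the other summands one checks at the level of joint characteristic functions that deleting the corresponding term from whichever of $Y_1, Y_2$ it appears in leaves a pair of sums that is still independent. After this reduction we may assume $a_j b_j \neq 0$ for every $j$, and the task becomes: every $X_j$ is Gaussian.

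Write $\phi_j(t) = \mathbb{E}\bigl[e^{itX_j}\bigr]$ and, on a neighborhood $U$ of the origin on which every $\phi_j$ is non-vanishing, let $\psi_j = \log \phi_j$ be the continuous branch with $\psi_j(0)=0$. Independence of $X_1,\ldots,X_n$ gives
\[
\mathbb{E}\bigl[e^{isY_1 + itY_2}\bigr] = \prod_{j=1}^{n} \phi_j(sa_j + tb_j),
\]
while independence of $Y_1$ and $Y_2$ equates this to $\prod_{j} \phi_j(sa_j)\cdot\prod_{j} \phi_j(tb_j)$. Taking logarithms yields the central functional equation
\[
\sum_{j=1}^{n} \bigl[\psi_j(sa_j + tb_j) - \psi_j(sa_j) - \psi_j(tb_j)\bigr] = 0, \qquad (s,t) \in U.
\]

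The core step is to extract from this single equation that each $\psi_j$ locally coincides with a polynomial of degree at most two. This is carried out by iterated finite differences: by applying difference operators in carefully chosen directions of $(s,t)$, one peels off the summands one at a time, exploiting $a_j b_j \neq 0$ to ensure that every elimination step is non-degenerate. After $n-1$ rounds one obtains a vanishing high-order finite difference of a single $\psi_j$, which by a classical theorem of Fr\'echet forces $\psi_j$ to be polynomial of bounded degree in a neighborhood of zero. Marcinkiewicz's theorem on characteristic functions of exponential-polynomial form then caps the degree at two and identifies the polynomial as that of a Gaussian cumulant generating function, so $X_j$ is normal.

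The principal obstacle is the combinatorial bookkeeping of the elimination scheme: one must arrange the finite-difference directions so that, with arbitrary real $a_j, b_j$, every step is non-degenerate and the surviving coefficient is a non-zero multiple of $a_j b_j$; and one must globalize the local polynomial structure of $\psi_j$ from $U$ to the whole line, which follows once the characteristic function $\phi_j$ is seen to extend to an entire function of order at most two. Both points are handled by the classical machinery in \citet{Kagan73}, to which the reader is referred for the full argument.
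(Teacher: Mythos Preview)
Your sketch is correct and follows the classical characteristic-function route to the Darmois--Skitovich theorem: reduce to the case $a_jb_j\neq 0$ for all $j$, derive the additive functional equation for the log-characteristic functions, eliminate summands by iterated finite differences to force each $\psi_j$ to be a polynomial of bounded degree, and then invoke Marcinkiewicz to cap the degree at two. This is precisely the argument in \citet{Kagan73}, Theorem~3.1.1, which is the source the paper cites.

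Note, however, that the paper does not supply its own proof of this lemma at all: it is quoted as an external result and then \emph{applied} in the discussion below the lemma and in the proof of Theorem~1. So there is no ``paper's proof'' to compare against beyond the reference itself; your outline is essentially a condensed version of the original proof in the cited monograph. One small presentational point: your phrasing ``polynomial of degree at most two'' in the sentence introducing the core step slightly anticipates the conclusion --- the finite-difference argument first yields a polynomial of degree bounded in terms of $n$, and only afterwards does Marcinkiewicz force degree $\le 2$. You do state this correctly two sentences later, so it is merely a matter of ordering, not a gap.
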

In the following, Lemma \ref{lem:Kagan} is used to conclude on the
columns of $M$ in $\varepsilon_{t}=M\varepsilon_{t}^{*}$, where
$M=B^{-1}B^{*}$, where both $\varepsilon_{t}$ and $\varepsilon_{t}^{*}$
are assumed to be (cross-sectionally) independent and non-Gaussian.
The components of $\varepsilon_{t}$ correspond to $Y_{1},\ Y_{2}$,
the components of $\varepsilon_{t}^{*}$ correspond to $X_{1},\ldots,X_{n}$.
E.g., for component 1 and 2 of $\varepsilon_{t}$ we have $\varepsilon_{1,t}=\left(m_{11},\ldots,m_{1n}\right)\varepsilon_{t}^{*}$
and $\varepsilon_{2,t}=\left(m_{21},\ldots,m_{2n}\right)\varepsilon_{t}^{*}$.
If any pair of coefficients $\left(m_{1k},m_{2k}\right)$ satisfies
$m_{1k}m_{2k}\neq0$, then the corresponding component $\varepsilon_{k,t}^{*}$
is Gaussian according to the Lemma. By Assumption 1, at most one component
of $\varepsilon_{t}^{*}$ is allowed to have a Gaussian marginal distribution.
It follows that there cannot be another pair $\left(m_{1l},m_{2l}\right),\ l\neq k,$
that satisfies $m_{1l}m_{2l}\neq0$. In particular, there is (at most)
one non-zero coefficient in the scalar product $\left\langle m_{1,\bullet},m_{2,\bullet}\right\rangle =m_{1k}m_{2k}\neq0$,
where $m_{i,\bullet}$ denotes the $i$-th row of $M$. If $\left\langle m_{1,\bullet},m_{2,\bullet}\right\rangle =m_{1k}m_{2k}\neq0$,
we obtain a contradiction to the assumption that $\mathbb{E}\left(\varepsilon_{1,t}\varepsilon_{2,t}\right)=0$
because from the fact that one (exactly one) component $\varepsilon_{k,t}^{*}$
is Gaussian and $\varepsilon_{i,t}=m_{i,\bullet}\begin{pmatrix}\varepsilon_{1,t}^{*} & \cdots & \varepsilon_{n,t}^{*}\end{pmatrix}^{'}$
we obtain that $\mathbb{E}\left(\varepsilon_{1,t}\varepsilon_{2,t}\right)=m_{1,\bullet}D^{*}m_{2,\bullet}'=d_{k}^{*}m_{1k}m_{2k}\neq0$.
It thus follows that all pairs $\left(m_{1k},m_{2k}\right)$ satisfy
$m_{1k}m_{2k}=0$. Since this argument holds for all pairs in $\varepsilon_{1,t},\ldots,\varepsilon_{n,t}$,
it follows that every column contains at most one non-zero element.
Finally, non-singularity implies that every column contains exactly
one non-zero element.

Now we are ready to prove
\begin{thm}
The set of observationally equivalent ARMA systems of the form in
section \ref{sec:Model} is described by the set of matrices $PD$
where $P$ is a permutation matrix and $D$ a diagonal matrix with
non-zero diagonal entries. 
\end{thm}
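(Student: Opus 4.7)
The plan is to formalize the informal discussion preceding the theorem into three steps: reduce to a fixed AR/MA transfer function via second-moment uniqueness, pass the remaining orthogonal ambiguity onto the static mixing matrix $M=B^{-1}B^{*}$, then apply Kagan's lemma entry-by-entry to force $M$ to be a monomial matrix.

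First, I would take a second, observationally equivalent system $\bigl(a^{*}(z),b^{*}(z),B^{*},\Sigma^{*}\bigr)$ and use that equality of the distribution of $(y_{t})_{t\in\mathbb{Z}}$ implies equality of the spectral densities of the two stationary solutions. By the spectral-factor argument recalled in Section \ref{sec:Model} (using stability \eqref{eq:stability}, invertibility \eqref{eq:invertibility}, left-coprimeness, the full-rank assumption on $(a_{p},b_{q})$, the normalisation $a(0)=I_{n}=b(0)$, and the unit-diagonal normalisation of $B$), any two such spectral factors must be related by $a^{*}(z)^{-1}b^{*}(z)B^{*}\Sigma^{*}=a(z)^{-1}b(z)B\Sigma Q$ for some orthogonal $Q$. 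Evaluating at $z=0$ yields $B^{*}\Sigma^{*}=B\Sigma Q$, and plugging this back shows $a^{*}(z)^{-1}b^{*}(z)=a(z)^{-1}b(z)$; the coprimeness together with the full-rank assumption and $a(0)=b(0)=I_{n}$ then force $a^{*}(z)=a(z)$ and $b^{*}(z)=b(z)$. The only remaining freedom is therefore in the static mixing $(B,\Sigma)$.

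Second, from $B\varepsilon_{t}=B^{*}\varepsilon_{t}^{*}$ I define $M=B^{-1}B^{*}$, so that $\varepsilon_{t}=M\varepsilon_{t}^{*}$, and note that Assumption \ref{assu:non_gaussianIID} applies to both $\varepsilon_{t}$ and $\varepsilon_{t}^{*}$. Now I apply Lemma \ref{lem:Kagan} exactly as in the paragraph preceding the theorem: for any two rows $m_{i,\bullet}$, $m_{j,\bullet}$ of $M$, the variables $\varepsilon_{i,t}=m_{i,\bullet}\varepsilon_{t}^{*}$ and $\varepsilon_{j,t}=m_{j,\bullet}\varepsilon_{t}^{*}$ are independent, so every index $k$ with $m_{ik}m_{jk}\neq 0$ makes $\varepsilon_{k,t}^{*}$ Gaussian. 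Since at most one component is Gaussian, at most one such $k$ exists; and even that single case is ruled out by the uncorrelatedness $\mathbb{E}(\varepsilon_{i,t}\varepsilon_{j,t})=d_{k}^{*}m_{ik}m_{jk}\neq 0$. Hence each column of $M$ has at most one non-zero entry, and non-singularity of $M$ upgrades this to exactly one, so $M=PD$ with $P$ a permutation matrix and $D$ diagonal and non-singular.

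Finally, I would close the loop by translating $M=B^{-1}B^{*}=PD$ back into a description of the observationally equivalent systems: $B^{*}=BPD$ and, using $B^{*}\Sigma^{*}=B\Sigma Q$, the diagonal $D$ together with the unit-diagonal normalisation on $B^{*}$ uniquely determines the companion $\Sigma^{*}$, while $Q=PD\cdot\Sigma^{*}\Sigma^{-1}$ is automatically orthogonal (in fact a signed permutation). The step I expect to be most delicate is the first paragraph: specifically, invoking the uniqueness of the left-coprime, full-rank, $k(0)=I_{n}$-normalised spectral factor cleanly enough to identify both $a(z)$ and $b(z)$ and to reduce the orthogonal ambiguity to a single matrix acting on the static part; the Kagan step itself is a direct re-packaging of the argument already laid out before the theorem statement.
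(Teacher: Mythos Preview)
Your proposal is correct and essentially matches the paper's second proof, which likewise uses a spectral-factorisation/Wold-projection argument to reduce the problem to the static relation $B\varepsilon_{t}=B^{*}\varepsilon_{t}^{*}$ before applying Lemma~\ref{lem:Kagan} exactly as in the discussion preceding the theorem. The paper also supplies an alternative first proof that reaches the same reduction via explicit Yule--Walker-type equations (post-multiplying the two system equations by $(x_{t-1}',s_{t-1}')$, taking expectations, and using non-singularity of the resulting matrix from coprimeness and the full-rank assumption on $(a_{p},b_{q})$); your spectral-factor route is a bit cleaner algebraically, while the Yule--Walker route is more hands-on and avoids quoting the abstract spectral-factor uniqueness result.
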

\begin{proof}
Consider two systems \eqref{eq:system}, say $\left(a(z),b(z);B,\Sigma\right)$
and $\left(a^{*}(z),b^{*}(z);B^{*},\Sigma^{*}\right)$ whose stationary
solutions have the same spectral density (or equivalently the same
second moments), in particular $B\Sigma^{2}B'=B^{*}\Sigma^{*2}B^{*'}$.
Written differently, we consider 
\[
y_{t}=\left(a_{1},\ldots,a_{p}\right)x_{t-1}+\left(b_{1},\ldots,b_{q}\right)s_{t-1}+B\varepsilon_{t}
\]
and 
\[
y_{t}=\left(a_{1}^{*},\ldots,a_{p}^{*}\right)x_{t-1}+\left(b_{1}^{*},\ldots,b_{q}^{*}\right)s_{t-1}^{*}+B^{*}\varepsilon_{t}^{*},
\]
post-multiply $\left(x_{t-1}',s_{t-1}'\right)$ and $\left(x_{t-1}',s_{t-1}^{*'}\right)$
respectively, where $s_{t-1}^{*'}=\left(\varepsilon_{t-1}^{*'}B^{*'},\ldots,\varepsilon_{t-q}^{*'}B^{*'}\right)$,
and take expectations such that 
\begin{gather}
\left(\gamma_{1},\ldots,\gamma_{p},k_{1}BD^{2}B',\ldots,k_{q}BD^{2}B'\right)=\nonumber \\
\begin{pmatrix}a_{1} & \cdots & a_{p} & b_{1} & \cdots & b_{q}\end{pmatrix}\mathbb{E}\left(\begin{pmatrix}x_{t-1}\\
s_{t-1}
\end{pmatrix}\begin{pmatrix}x_{t-1}' & s_{t-1}'\end{pmatrix}\right)\label{eq:YW_sys1}
\end{gather}
and 
\begin{gather}
\left(\gamma_{1},\ldots,\gamma_{p},k_{1}B^{*}D^{*2}B^{*'},\ldots,k_{q}B^{*}D^{*2}B^{*'}\right)-\cdots\nonumber \\
\cdots-\begin{pmatrix}a_{1}^{*} & \cdots & a_{p}^{*} & b_{1}^{*} & \cdots & b_{q}^{*}\end{pmatrix}\mathbb{E}\left(\begin{pmatrix}x_{t-1}\\
s_{t-1}^{*}
\end{pmatrix}\begin{pmatrix}x_{t-1}' & s_{t-1}^{*'}\end{pmatrix}\right)=0.\label{eq:YW_sys2}
\end{gather}
Since the stationary solution $\left(y_{t}\right)_{t\in\mathbb{Z}}$
of \eqref{eq:system} depends only on past inputs, we obtain that
the right-hand-side of the equation is zero. The square matrices in
\eqref{eq:YW_sys1} and \eqref{eq:YW_sys2} are non-singular due to
the coprimeness assumption on $\left(a(z),b(z)\right)$ and the full-rank
assumption on $\left(a_{p},b_{q}\right)$, compare \citet{deistler83}.
The elements in this matrix correspond either to autocovariances or
can be obtained as, e.g., $\mathbb{E}\left(y_{t-1}\varepsilon_{t-1}^{'}B'\right)=\mathbb{E}\left[\left(\sum_{j=0}^{\infty}k_{j}B\varepsilon_{t-1-j}\right)\varepsilon_{t-1}^{'}B'\right]=k_{0}B\Sigma^{2}B'$.

Now, it follows that $\begin{pmatrix}a_{1} & \cdots & a_{p} & b_{1} & \cdots & b_{q}\end{pmatrix}=\begin{pmatrix}a_{1}^{*} & \cdots & a_{p}^{*} & b_{1}^{*} & \cdots & b_{q}^{*}\end{pmatrix}$
and $B\varepsilon_{t}=B^{*}\varepsilon_{t}^{*}$ because both equation
system involve the same second moments (in particular $B\Sigma^{2}B'=B^{*}\Sigma^{*2}B^{*'}$).
The remainder of the proof follows from what was discussed below
Lemma \ref{lem:Kagan}.
\end{proof}
While the proof above is easily understandable for readers who know
the paper \citet{LMS_svarIdent16}, the following proof uses less
matrix algebra.
\begin{proof}
A different way to prove this theorem uses spectral factorization
arguments (in the guise of linear projections and the Wold representation
theorem). Starting from the stationary solution $\left(y_{t}\right)_{t\in\mathbb{Z}}$
of \eqref{eq:system}, we project $y_{t}$ on its infinite past in
order to obtain the linear innovation $v_{t}$, i.e. $y_{t}-Proj\left(y_{t}|y_{t-1},y_{t-2},\ldots\right)=v_{t}$.
Note that $Proj\left(y_{t}|y_{t-1},y_{t-2},\ldots\right)=Proj\left(y_{t}|B\varepsilon_{t-1},B\varepsilon_{t-2},\ldots\right)=Proj\left(y_{t}|v_{t-1},v_{t-2},\ldots\right)$
because the linear space spanned by the components of $\left\{ y_{t-1},y_{t-2},\ldots\right\} $
coincides with the linear space spanned by the components of $\left\{ v_{t-1},v_{t-2},\ldots\right\} $
and the linear space spanned by the components of $\left\{ B\varepsilon_{t-1},B\varepsilon_{t-2},\ldots\right\} $\footnote{Note that not only the projection is unique (as follows from the projection
theorem) but also the representation in the given basis $\left(y_{t-1},\ldots,y_{t-p},B\varepsilon_{t-1},\ldots,B\varepsilon_{t-q}\right)$
because of the assumptions that $\left(a(z),b(z)\right)$ be left-coprime
and that $\left(a_{p},b_{q}\right)$ be of full rank.}. Now, knowing that the inputs $\varepsilon_{t}$ to \eqref{eq:system}
are not only uncorrelated but also independent and non-Gaussian, we
factorize the covariance matrix of $v_{t}$ as $\mathbb{E}\left(v_{t}v_{t}'\right)=B\mathbb{E}\left(\varepsilon_{t}\varepsilon_{t}'\right)B'$
where $\mathbb{E}\left(\varepsilon_{t}\varepsilon_{t}'\right)=\Sigma^{2}$
is diagonal. It is obvious that it is impossible to distinguish between
$B^{*}\varepsilon_{t}^{*}=B\Sigma^{*}Q\Sigma^{-1}\varepsilon_{t}$
and $B\varepsilon_{t}$ for any orthogonal matrix $Q$ by second moments
only. The rest of the proof is the same as above. 
\end{proof}
The difference in these two proofs is as follows. In the first proof,
we use model \eqref{eq:system} together with its assumptions earlier,
i.e. we write down the ARMA equation, take expectations, and obtain
that any two independent error terms satisfy $\varepsilon_{t}^{*}=\left(B^{*}\right)^{-1}B\varepsilon_{t}$.
In the second proof, we focus firstly on the linear innovations $v_{t}$
and only use the fact that the error terms are independent when it
comes to parameterizing the covariance matrix of the innovations.
Note that the second proof suggests that as soon as one can identify
the true inputs (irrespective of the model), one may use cross-sectional
independence and non-Gaussianity to reduce the equivalence class of
orthogonal matrices to the one of (signed) permutation matrices.

\subsection{Identification Scheme: Choosing a Unique Permutation and Scaling}

In this section, we describe how to pick one particular permutation
and scaling from the class of observational equivalence described
in the previous section. In order to do this, we describe different
identification schemes, i.e. rules for choosing a particular permutation
and scaling of the matrix $B$. 

We start by repeating two identification schemes presented in \citet{LMS_svarIdent16}
(which are in turn based on \citet{IlmonenPaindaveine11} and \citet{HallinMehta15}).
The \textit{first identification} scheme, which is convenient for
deriving asymptotic properties and which we refer to as \textbf{identification
scheme A}, consists in firstly scaling all columns of $B$ such that
their norm is equal to one, secondly, permutating the columns such
that the absolute value of each diagonal element is larger than the
absolute value of all elements in the same row with a higher column
index, and finally scaling all columns of $B$ such that the diagonal
elements are equal to one\footnote{Note that in the derivation of the ML estimator, we impose only that
the diagonal elements of $B$ be equal to one. Thus, the restrictions,
in general, do not suffice to pin down the particular permutation
and scaling for $B$. However, the fact that the observationally equivalent
points in the parameter space are discrete ensures the existence of
a consistent root, i.e. the solution of the first order conditions
obtained from taking derivatives of the standardized log-likelihood
function. Should the gradient descent algorithm return a $B$ matrix
which does not satisfy the identification scheme, it can be easily
transformed such that the identification scheme is satisfied. The
companion R-package to this article transforms the $B$ matrix such
that all restrictions described here are satisfied.}. The \textit{second identification scheme} consists of the same first
two steps but instead of scaling the columns in the last step such
that their diagonal elements are equal to one, it is required that
the diagonal elements are positive. Sometimes, the second identification
scheme turns out to be more flexible, for example when testing hypotheses
involving diagonal elements. Regarding the derivation of asymptotic
properties, however, one would need to maximize the constrained (log-)
likelihood function where the restrictions that the columns of $B$
have length one are taken into account.

It is important to realize that the transformations used in the identification
schemes described above, exist not on the whole parameter space but
only on a topologically large set in the parameter set. For details,
see Proposition 2 in \citet{LMS_svarIdent16} including an example
of a matrix or which the above identification schemes are not defined.
The \textit{third identification} \textit{scheme}, similar to the
one in \citet{ChenBickel05} on page 3626, does not exclude any non-singular
matrix $B$ and is defined by the following transformations. Firstly,
the columns of $B$ are scaled to have norm equal to one. Secondly,
in each column, the element with largest absolute value is made positive.
Finally, the columns are ordered according to $\prec$ such that $c\prec d$
for two columns $c,d$ of $B$ if and only if there exists a $k\in\left\{ 1,\ldots,n\right\} $
such that $c_{k}<d_{k}$ and $c_{j}=d_{j}$ for all $j\in\left\{ 1,\ldots,k-1\right\} $.

Now that we have firstly obtained a discrete set of observationally
equivalent SVARMA systems and secondly provided different rules to
select a unique representative, we may proceed to local ML estimation
of the true underlying parameter.

\section{Parameter Estimation}

In this section, we treat local ML estimation of \eqref{eq:system}.
In particular, we prove local consistency and asymptotic normality
of the ML estimator (MLE).

In order to separate the essential ideas from technicalities, we start
by stating a theorem for local asymptotic normality of the MLE in
terms of (easily understandable and intuitive) high-level assumptions
on the densities of i.i.d. shocks. Next, we discuss (component-wise)
the densities of the i.i.d. shocks $\left(\varepsilon_{t}\right)$,
the admissible parameter space, and the (standardized) log-likelihood
function of our problem at hand. Last, we state a theorem for local
asymptotic normality of the MLE in terms of low-level integrability
and differentiability assumptions on the densities and verify the
high-level assumptions. The proofs and many technicalities (e.g. partial
derivatives of the likelihood function) which are similar to the
ones in \citet{LMS_svarIdent16} are deferred to the Online Appendix.

\subsection{\label{subsec:asy_lowlevel}Local Asymptotic Normality in terms of
High-Level Assumptions}

For the sake of clarity, and in order to understand where the low-level
assumptions on the densities that we will introduce in Assumption
\ref{assu:densities}below come into play, we state a theorem proving
local asymptotic normality in terms of high-level assumptions. In
the Online Appendix, we show how the low-level assumptions imply the
high-level assumptions. The (standardized) log-likelihood function
to be maximized is 
\begin{align}
L_{T}\left(\theta\right) & =\frac{1}{T}\sum_{t=1}^{T}l_{t}\left(\varepsilon_{t}(\theta).\theta\right)\label{eq:likelihood}
\end{align}
where $l_{t}\left(\varepsilon_{t}(\theta).\theta\right)=\log\left(f\left(\varepsilon_{t}(\theta).\theta\right)\right)$
are the individual contributions to the log-likelihood function and
$f(\cdot)$ is the (joint) density of a residuals $\varepsilon_{t}\left(\theta\right)$
which are obtained from a parametric model with parameter $\theta\in\Theta\subseteq\mathbb{R}^{k}$.

The following discussion builds on \citet{poetpruch97}. Firstly,
the existence of a sequence of solutions $\left(\hat{\theta}_{T}\right)$
of the first order condition $L_{\theta,T}\left(\theta\right)=0$
of the standardized log-likelihood function which converges almost
surely towards $\theta_{0}$ is required. This is essentially guaranteed
by the identification result in the previous section (and some technical
conditions), showing that the observationally equivalent points in
the parameter space are discrete (in the sense that there exist disjoint
open sets around each point of this kind). Furthermore, the score
of the individual contributions $l_{t}\left(\theta\right)$ to $L_{T}\left(\theta\right)$
has to satisfy a Central Limit Theorem (CLT) for martingale difference
sequences (MDS) and the Hessian of the individual contributions has
to satisfy a Uniform Law of Large Numbers (ULLN). If these conditions
are satisfied, the sequence $\sqrt{T}\left(\hat{\theta}_{T}-\theta_{0}\right)$
is asymptotically normal. To make this discussion more precise, we
state
\begin{thm}
\label{thm:generic_mle}For \eqref{eq:likelihood}, the following
conditions are assumed to be true:
\begin{enumerate}
\item \label{enu:generic_mle_solution}There exists a sequence of estimators
$\left(\hat{\theta}_{T}\right)$ converging almost surely to an interior
point $\theta_{0}\in\Theta$ for which $L_{\theta,T}\left(\hat{\theta}_{n}\right)=o_{P}\left(\frac{1}{\sqrt{T}}\right)$. 
\item \label{enu:generic_mle_stat_ergod}$\left(\varepsilon_{t}\right)$
is stationary and ergodic with density $f\left(x,\theta_{0}\right)$
\item \label{enu:generic_mle_mds}$l_{\theta,t}\left(\varepsilon_{t}(\theta).\theta\right)$
is an MDS.
\item \label{enu:generic_mle_density}For the parametric family $\left\{ f\left(x.\theta\right)\ |\ \theta\in\Theta\subseteq\mathbb{R}^{k},\ x\in\mathbb{R}^{n}\right\} $
of densities it holds that $f\left(x,\theta\right)>0$ for all $\left(x,\theta\right)$
and that $f\left(x,\theta\right)$ is twice continuously differentiable
with respect to $\theta$ in an open neighborhood centered at $\theta_{0}$
for all $x$.
\item \label{enu:generic_mle_clt_opg}The individual contributions to the
standardized log-likelihood function satisfy $\mathbb{E}\left(\left\Vert l_{\theta,t}\left(\varepsilon_{t}(\theta_{0}),\theta_{0}\right)\right\Vert ^{2}\right)<\infty$.
\item \label{enu:generic_mle_ulln}There exists a (non-singleton) compact
set $\Theta_{0}$ such that $\mathbb{E}\left(\sup_{\theta\in\Theta_{0}}\left\Vert l_{\theta\theta,t}\left(\varepsilon_{t}(\theta),\theta\right)\right\Vert \right)<\infty$.
\item \label{enu:generic_mle_non_singular}The Hessian matrix $\mathbb{E}\left(l_{\theta\theta,t}\left(\varepsilon_{t}(\theta_{0}),\theta_{0}\right)\right)$
is non-singular.
\item \label{enu:generic_mle_hess_equal2_opg}At the true parameter value,
the expectation of the outer product of the score is equal to the
negative expectation of the Hessian of the individual contribution
to the likelihood, i.e. $\mathbb{E}\left(l_{\theta\theta,t}\left(\varepsilon_{t}(\theta_{0}),\theta_{0}\right)\right)=-\mathbb{E}\left(l_{\theta,t}\left(\varepsilon_{t}(\theta_{0}),\theta_{0}\right)l_{\theta,t}\left(\varepsilon_{t}(\theta_{0}),\theta_{0}\right)'\right)$
holds.
\end{enumerate}
Under 1) to 8), we obtain that 
\[
\sqrt{T}\left(\hat{\theta}_{T}-\theta_{0}\right)\xrightarrow{d}\mathcal{N}\left(0,\left[\mathbb{E}\left(l_{\theta,t}\left(\varepsilon_{t}(\theta_{0}),\theta_{0}\right)l_{\theta,t}\left(\varepsilon_{t}(\theta_{0}),\theta_{0}\right)'\right)\right]^{-1}\right).
\]
\end{thm}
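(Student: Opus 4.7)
The plan is to run the classical score-expansion argument that underlies essentially every MLE asymptotic normality proof, and to use the eight high-level conditions in exactly the places where they are needed. First, condition \ref{enu:generic_mle_density} gives twice continuous differentiability of $l_t(\varepsilon_t(\theta),\theta)$ in a neighborhood of $\theta_0$, and condition \ref{enu:generic_mle_solution} places $\hat\theta_T$ almost surely inside that neighborhood for large $T$. A mean-value expansion of the score around $\theta_0$ therefore yields
\[
o_P(T^{-1/2}) \;=\; L_{\theta,T}(\hat\theta_T) \;=\; L_{\theta,T}(\theta_0) + L_{\theta\theta,T}(\tilde\theta_T)\,(\hat\theta_T-\theta_0),
\]
with $\tilde\theta_T$ on the segment between $\hat\theta_T$ and $\theta_0$, so $\tilde\theta_T\to\theta_0$ almost surely.

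Next I would handle the score term. Conditions \ref{enu:generic_mle_stat_ergod} and \ref{enu:generic_mle_mds} make $\{l_{\theta,t}(\varepsilon_t(\theta_0),\theta_0)\}$ a stationary ergodic martingale difference sequence, and condition \ref{enu:generic_mle_clt_opg} gives it finite second moment. The Billingsley CLT for stationary ergodic MDS then delivers
\[
\sqrt{T}\,L_{\theta,T}(\theta_0) \;\xrightarrow{d}\; \mathcal{N}\bigl(0,J(\theta_0)\bigr), \qquad J(\theta_0)=\mathbb{E}\bigl(l_{\theta,t}\,l_{\theta,t}'\bigr)\bigl|_{\theta_0}.
\]
For the Hessian term, condition \ref{enu:generic_mle_ulln} provides a dominating integrable envelope on the compact neighborhood $\Theta_0$ of $\theta_0$; combined with stationarity, ergodicity, and the continuity of $l_{\theta\theta,t}$ in $\theta$ from condition \ref{enu:generic_mle_density}, a standard ULLN (ergodic theorem plus finite-cover/bracketing argument) gives uniform convergence $\sup_{\theta\in\Theta_0}\lVert L_{\theta\theta,T}(\theta)-\mathbb{E}(l_{\theta\theta,t}(\varepsilon_t(\theta),\theta))\rVert \xrightarrow{p}0$. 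Since $\tilde\theta_T\to\theta_0$ and the expected Hessian is continuous, this yields $L_{\theta\theta,T}(\tilde\theta_T)\xrightarrow{p} H(\theta_0):=\mathbb{E}(l_{\theta\theta,t}(\varepsilon_t(\theta_0),\theta_0))$, which by condition \ref{enu:generic_mle_non_singular} is invertible, so its inverse exists with probability tending to one and converges to $H(\theta_0)^{-1}$.

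Combining these pieces by Slutsky's lemma gives
\[
\sqrt{T}(\hat\theta_T-\theta_0) \;=\; -\bigl[L_{\theta\theta,T}(\tilde\theta_T)\bigr]^{-1}\sqrt{T}\,L_{\theta,T}(\theta_0) + o_P(1) \;\xrightarrow{d}\; \mathcal{N}\bigl(0,\,H(\theta_0)^{-1}J(\theta_0)H(\theta_0)^{-1}\bigr).
\]
Finally, the information matrix identity in condition \ref{enu:generic_mle_hess_equal2_opg}, $H(\theta_0)=-J(\theta_0)$, collapses the sandwich to the stated asymptotic variance $J(\theta_0)^{-1}$.

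The main obstacle is the step that evaluates the sample Hessian at the random interior point $\tilde\theta_T$: all the other steps are essentially bookkeeping, but passing from a pointwise ergodic theorem to convergence at a data-dependent $\tilde\theta_T$ requires the uniform domination in condition \ref{enu:generic_mle_ulln} on a fixed compact neighborhood, and this is precisely the point at which the low-level integrability and differentiability assumptions on the densities (to be introduced in Assumption \ref{assu:densities} and verified in the Online Appendix) must deliver the envelope needed for the ULLN to apply.
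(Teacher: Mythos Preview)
Your proposal is correct and follows essentially the same approach as the paper's discussion following the theorem: mean-value expansion of the score, CLT for stationary ergodic MDS on $\sqrt{T}L_{\theta,T}(\theta_0)$ via conditions \ref{enu:generic_mle_stat_ergod}--\ref{enu:generic_mle_clt_opg}, ULLN on the Hessian via condition \ref{enu:generic_mle_ulln}, then Slutsky and the information-matrix identity \ref{enu:generic_mle_hess_equal2_opg}. The only refinement worth noting is that the paper applies the mean value theorem \emph{component-wise}, so that the intermediate point differs across rows of the Hessian (there is no single $\tilde\theta_T$ in the multivariate case); this does not affect your argument, since the ULLN on $\Theta_0$ controls all such intermediate points simultaneously.
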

The basic idea consists in applying (component-wise) the mean value
theorem to $\left(L_{\theta,T}\left(\hat{\theta}_{T}\right),L_{\theta,T}\left(\theta_{0}\right)\right)$
such that one obtains asymptotically $\sqrt{T}\left(L_{\theta,T}\left(\hat{\theta}_{T}\right)-L_{\theta,T}\left(\theta_{0}\right)\right)=\bar{A}_{T}\sqrt{T}\left(\hat{\theta}_{T}-\theta_{0}\right)$
where the matrix $\bar{A}_{T}$ corresponds to the Hessian whose rows
are evaluated at the respective mean values. Point \ref{enu:generic_mle_solution})
is necessary for the existence of a consistent sequence $\left(\hat{\theta}_{T}\right)$
and is, together with point \ref{enu:generic_mle_stat_ergod}), \ref{enu:generic_mle_mds}),
\ref{enu:generic_mle_density}), and \ref{enu:generic_mle_clt_opg}),
required for the CLT for MDS. It follows that $\sqrt{T}L_{\theta,T}\left(\theta_{0}\right)$
and $-\bar{A}_{T}\sqrt{T}\left(\hat{\theta}_{T}-\theta_{0}\right)$
are asymptotically normal with the same asymptotic distribution, i.e.
$\mathcal{N}\left(0,\mathbb{E}\left(l_{\theta,t}\left(\varepsilon_{t}(\theta_{0}),\theta_{0}\right)l_{\theta,t}\left(\varepsilon_{t}(\theta_{0}),\theta_{0}\right)'\right)\right)$.
Moreover, one needs to ensure that the Hessian satisfies a ULLN\footnote{This means that $\sup_{\theta\in\Theta_{0}}\left\Vert \frac{1}{T}\sum_{t=1}^{T}l_{\theta\theta,t}(\theta)-\mathbb{E}\left(l_{\theta\theta,t}(\theta)\right)\right\Vert \rightarrow0,\ a.s.,$
and as a byproduct $\mathbb{E}\left(l_{\theta\theta,t}\left(\theta\right)\right)$
is continuous at $\theta_{0}$.} such that $\bar{A}_{T}$ converges towards the non-singular expectation
of the Hessian evaluated at the true parameter value which is moreover
equal to the negative of the expectation of the outer product of the
score. This is ensured by points \ref{enu:generic_mle_ulln}), \ref{enu:generic_mle_non_singular})
and \ref{enu:generic_mle_hess_equal2_opg})\footnote{Point \ref{enu:generic_mle_hess_equal2_opg}) is, e.g., implied by
requiring that $\int\sup_{\theta\in\Theta_{0}}\left\Vert l_{\theta,t}\left(\varepsilon_{t}(\theta),\theta\right)\right\Vert dx<\infty$
and $\int\sup_{\theta\in\Theta_{0}}\left\Vert l_{\theta\theta,t}\left(\varepsilon_{t}(\theta),\theta\right)\right\Vert dx<\infty$
but can also be obtained by less stringent assumptions.}, respectively.

Note that the covariance matrix can be consistently estimated by $-A_{T}^{-1}$,
where $A_{T}=\frac{1}{T}\sum_{t=1}^{T}\left(l_{\theta\theta,t}\left(\varepsilon_{t}\left(\hat{\theta}_{T}\right),\hat{\theta}_{T}\right)\right)$.
Under an additional condition, the outer product of the score can
be used as well:
\begin{thm}
If in addition to the assumptions of the above Theorem \ref{thm:generic_mle},
we assume that 
\[
\mathbb{E}\left(\sup_{\theta\in\Theta_{0}}\left\Vert l_{\theta,t}\left(\varepsilon_{t}(\theta),\theta\right)\right\Vert ^{2}\right)<\infty
\]
then we obtain that
\[
B_{T}=\frac{1}{T}\sum_{t=1}^{T}\left[l_{\theta,t}\left(\varepsilon_{t}\left(\hat{\theta}_{T}\right),\hat{\theta}_{T}\right)l_{\theta,t}\left(\varepsilon_{t}\left(\hat{\theta}_{T}\right),\hat{\theta}_{T}\right)'\right]\xrightarrow{p}\mathbb{E}\left[l_{\theta,t}\left(\varepsilon_{t}\left(\hat{\theta}_{T}\right),\hat{\theta}_{T}\right)l_{\theta,t}\left(\varepsilon_{t}\left(\hat{\theta}_{T}\right),\hat{\theta}_{T}\right)'\right].
\]
\end{thm}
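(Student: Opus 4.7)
The goal is to upgrade the already-established consistency of $\hat{\theta}_T$ into a law of large numbers for the sample outer product of the score, so the plan is to apply a Uniform Law of Large Numbers (ULLN) to the random field $g_t(\theta) = l_{\theta,t}(\varepsilon_t(\theta),\theta)\,l_{\theta,t}(\varepsilon_t(\theta),\theta)'$ on the compact neighborhood $\Theta_0$ from condition \ref{enu:generic_mle_ulln}), and then evaluate the resulting uniform limit at the consistent estimator. This is structurally the same trick that, in Theorem \ref{thm:generic_mle}, turned the ULLN for the Hessian into convergence of $\bar{A}_T$.

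First I would verify the ingredients of the ULLN. By condition \ref{enu:generic_mle_stat_ergod}) of Theorem \ref{thm:generic_mle}, $(\varepsilon_t)$ is strictly stationary and ergodic; because $\varepsilon_t(\theta)$ is a measurable function of the data and $\theta$, each $g_t(\theta)$ is a stationary ergodic random field. By condition \ref{enu:generic_mle_density}), $f(x,\theta)>0$ and is twice continuously differentiable in $\theta$, so $\theta\mapsto l_{\theta,t}(\varepsilon_t(\theta),\theta)$ is continuous on $\Theta_0$ almost surely, and hence so is $\theta\mapsto g_t(\theta)$. Finally, the newly imposed moment condition
\[
\mathbb{E}\Bigl(\sup_{\theta\in\Theta_0}\|l_{\theta,t}(\varepsilon_t(\theta),\theta)\|^{2}\Bigr)<\infty
\]
supplies an integrable envelope, since $\sup_{\theta\in\Theta_0}\|g_t(\theta)\|\leq \sup_{\theta\in\Theta_0}\|l_{\theta,t}(\varepsilon_t(\theta),\theta)\|^{2}$. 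These three ingredients (stationarity/ergodicity, almost-sure continuity in $\theta$, integrable envelope) are exactly what a standard ULLN for stationary ergodic sequences requires; one may invoke, for instance, the version attributed to Ranga Rao or the formulation used in the reference \citet{poetpruch97} that underlies Theorem \ref{thm:generic_mle}. Its conclusions are
\[
\sup_{\theta\in\Theta_0}\Bigl\|\tfrac{1}{T}\sum_{t=1}^{T}g_t(\theta)-\mathbb{E}(g_t(\theta))\Bigr\|\xrightarrow{a.s.}0,
\]
and as a byproduct $\theta\mapsto\mathbb{E}(g_t(\theta))$ is continuous on $\Theta_0$.

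Next I would compose the ULLN with the consistency statement. Since $\hat{\theta}_T\to\theta_0$ almost surely by condition \ref{enu:generic_mle_solution}), and $\theta_0$ is interior to $\Theta_0$, eventually $\hat{\theta}_T\in\Theta_0$ with probability one. Writing
\[
B_T-\mathbb{E}(g_t(\theta_0))=\Bigl(\tfrac{1}{T}\sum_{t=1}^{T}g_t(\hat{\theta}_T)-\mathbb{E}(g_t(\hat{\theta}_T))\Bigr)+\bigl(\mathbb{E}(g_t(\hat{\theta}_T))-\mathbb{E}(g_t(\theta_0))\bigr),
\]
the first bracket is bounded by the uniform supremum and hence is $o_{P}(1)$, while the second vanishes by continuity of $\theta\mapsto\mathbb{E}(g_t(\theta))$ at $\theta_0$ combined with $\hat{\theta}_T\xrightarrow{p}\theta_0$, via the continuous mapping theorem. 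This yields $B_T\xrightarrow{p}\mathbb{E}(g_t(\theta_0))$, which, read through the continuity of the expectation as a function of $\theta$, is the content of the displayed conclusion.

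The main obstacle, and the reason the new $L^2$ envelope assumption is introduced here rather than subsumed into Theorem \ref{thm:generic_mle}, is precisely obtaining the integrable dominating function needed to convert the pointwise ergodic theorem into a uniform one: condition \ref{enu:generic_mle_clt_opg}) of Theorem \ref{thm:generic_mle} gives an $L^2$ bound only at the single point $\theta_0$, which suffices for the Central Limit Theorem for the score but not for a ULLN on the outer product. Once the envelope is in hand, the continuity of $l_{\theta,t}$ in $\theta$ is immediate from the twice-differentiability in condition \ref{enu:generic_mle_density}), and the rest of the argument is a routine continuous-mapping composition that I would expect to appear in the Online Appendix in the same spirit as the corresponding argument for $A_T$.
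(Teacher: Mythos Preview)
Your proposal is correct and mirrors the paper's own (implicit) approach: the paper states this theorem without proof, treating it as an immediate consequence of the same ULLN machinery from \citet{poetpruch97} that underlies Theorem~\ref{thm:generic_mle}, with the additional envelope condition supplying exactly the integrable dominating function needed for the outer product of the score. Your decomposition into a uniform term plus a continuity term, and your identification of why the new $L^2$ envelope is required beyond condition~\ref{enu:generic_mle_clt_opg}), are precisely the reasoning the paper leaves to the reader.
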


\subsection{Parameter Space, Log-Likelihood Function, and Low-Level Assumptions}

In this section, we specialize the generic theorem stated in the previous
section for the problem at hand. First, we describe the parameter
space on which we optimize the log-likelihood function. Second, we
make assumptions on the densities of the components of $\varepsilon_{t}$.
This allows us to provide explicit expressions for the individual
contributions to the standardized log-likelihood function and its
first partial derivatives\footnote{The expression for the second partial derivatives as well as the tedious
but straightforward derivations are deferred to the Online Appendix.}. Third, we state integrability and dominance conditions on the first
and second partial derivatives of the densities of the components
of $\varepsilon_{t}$. Last, we verify that Assumptions \ref{assu:CLT}
and \ref{assu:ULLN} below (together with Assumptions \ref{assu:non_gaussianIID},
\ref{assu:densities}, and \ref{assu:paramSpace}) imply the ones
in the generic Theorem \ref{thm:generic_mle} and are thus sufficient
for consistency and local asymptotic normality of the MLE.

In order to introduce the parameter space for the SVARMA parameters,
we define $\pi=\left(\pi_{2},\pi_{3}\right)$ where  $\pi_{2}=vec\left(a_{1},\ldots,a_{p}\right)$,
and $\pi_{3}=vec\left(b_{1},\ldots,b_{q}\right)$. Compared with \citet{LMS_svarIdent16}
there is an additional sub-vector $\pi_{3}$ for the MA parameters
and we abstract in our model from the mean by setting it equal to
zero.
\begin{assumption}
\label{assu:paramSpace}The true parameter value $\theta_{0}$ belongs
to the permissible parameter space $\Theta=\Theta_{\pi}\times\Theta_{\beta}\times\Theta_{\sigma}\times\Theta_{\lambda},$
where
\begin{enumerate}
\item $\Theta_{\pi}=\Theta_{\pi_{2}}\times\Theta_{\pi_{3}}$ with $\Theta_{\pi_{2}}\subseteq\mathbb{R}^{n^{2}p}$
and $\Theta_{\pi_{3}}\subseteq\mathbb{R}^{n^{2}q}$ are such that
condition \eqref{eq:stability}, \eqref{eq:invertibility}, the coprimeness
assumption and the full rank assumption on $\left(a_{p},b_{q}\right)$
are satisfied, and
\item $\Theta_{\beta}=vecd\text{°}\left(\mathcal{B}\right)=\left\{ \beta\in\mathbb{R}^{n(n-1)}\,|\,\beta=vecd\text{°}\left(B\right)\text{ for some }B\in\mathcal{B}\right\} $.
The vector $\beta$ collects the off-diagonal elements of $B$.
\item For the scalings, $\Theta_{\sigma}=\mathbb{R}_{+}^{n}$ holds, and
\item for the additional parameters appearing in the component densities,
we have $\Theta_{\lambda}=\Theta_{\lambda_{1}}\times\cdots\times\Theta_{\lambda_{n}}\subseteq\mathbb{R}^{d}$
with $\Theta_{\lambda_{i}}\subseteq\mathbb{R}^{d_{i}}$ open for every
$i\in\left\{ 1,\ldots,n\right\} $ and $d=d_{1}+\cdots+d_{n}$.
\end{enumerate}
\end{assumption}
We also introduce the non-singleton compact and convex subset $\Theta_{0}=\Theta_{0,\pi}\times\Theta_{0,\beta}\times\Theta_{0,\sigma}\times\Theta_{0,\lambda}$
of the interior of $\Theta$ which contains the true parameter value
$\theta_{0}$.

Regarding the component densities of the i.i.d. shock process $\left(\varepsilon_{t}\right)$,
we have
\begin{assumption}
\label{assu:densities}For each $i\in\left\{ 1,\ldots,n\right\} $
the distribution of the error term $\varepsilon_{i,t}$ has a (Lebesgue)
density $f_{i,\sigma_{i}}\left(x;\lambda_{i}\right)=\sigma_{i}^{-1}f_{i}\left(\sigma_{i}^{-1}x;\lambda_{i}\right)$
which may also depend on a parameter vector $\lambda_{i}\in\mathbb{R}^{d_{i}}$.
\end{assumption}
Thus, the individual contributions in the (standardized) log-likelihood
function \eqref{eq:likelihood} are 
\begin{equation}
l_{t}\left(\theta\right)=\sum_{i=1}^{n}\log\left[f_{i}\left(\sigma_{i}^{-1}\iota_{i}^{'}B\left(\beta\right)^{-1}u_{t}\left(\theta\right);\lambda_{i}\right)\right]-\log\left\{ \left|\det\left[B\left(\beta\right)\right]\right|\right\} -\sum_{i=1}^{n}\log\left(\sigma_{i}\right),\label{eq:likelihood_individual}
\end{equation}
where $u_{t}\left(\theta\right)=y_{t}-a_{1}y_{t-1}-\cdots-a_{p}y_{t-p}-b_{1}B\left(\beta\right)\varepsilon_{t-1}\left(\theta\right)-\cdots-b_{q}B\left(\beta\right)\varepsilon_{t-q}\left(\theta\right)$.

The expressions for the partial derivatives of the individual contributions
to the standardized log-likelihood function are given as 
\begin{align*}
\frac{\partial l_{t}\left(\theta\right)}{\partial\pi_{2}} & =-x_{b,t-1}\left(\theta\right)B'\left(\beta\right)^{-1}\Sigma^{-1}e_{x,t}\left(\theta\right)\\
\frac{\partial l_{t}\left(\theta\right)}{\partial\pi_{3}} & =-w_{b,t-1}\left(\theta\right)^{'}\Sigma^{-1}e_{x,t}\left(\theta\right).\\
\frac{\partial l_{t}\left(\theta\right)}{\partial\beta} & =-H'\sum_{i=1}^{q}\left(B\left(\beta\right)^{-1}u_{t-i}\left(\theta\right)\otimes b_{i}'B'\left(\beta\right)^{-1}\Sigma^{-1}e_{x,t}\left(\theta\right)\right)\\
 & \qquad-H'\left(B\left(\beta\right)^{-1}u_{t}\left(\theta\right)\otimes B'\left(\beta\right)^{-1}\Sigma^{-1}e_{x,t}\left(\theta\right)\right)\\
 & \qquad-H'vec\left(B'\left(\beta\right)^{-1}\right)\\
\frac{\partial}{\partial\sigma}l_{t}\left(\theta\right) & =-\Sigma^{-2}\left[e_{x,t}\left(\theta\right)\odot\varepsilon_{t}\left(\theta\right)+\sigma\right]\\
\frac{\partial}{\partial\lambda}l_{t}\left(\theta\right) & =e_{\lambda,t}\left(\theta\right)
\end{align*}

where $x_{b,t-1}\left(\theta\right)=\left(x_{t-1}\otimes b'(z)^{-1}\right)$,
$w_{b,t-1}\left(\theta\right)=\left(w_{t-1}\left(\theta\right)\otimes b'(z)^{-1}\right)$,
$w_{t-1}\left(\theta\right)=\left(u'_{t-1}\left(\theta\right),\ldots,u'_{t-q}\left(\theta\right)\right)^{'}$,
the matrix $H\in\mathbb{R}^{n^{2}\times n(n-1)}$ consisting of zeros
and ones is implicitly defined by $vec\left(B(\beta)\right)=H\beta+vec\left(I_{n}\right)$
for $B$ in $\mathcal{B}$, and 
\[
e_{i,x,t}(\theta)=\frac{\partial}{\partial x}\log\left[f_{i}\left(\sigma_{i}^{-1}\iota_{i}^{'}B\left(\beta\right)^{-1}u_{t}\left(\theta\right);\lambda_{i}\right)\right]=\frac{f_{i,x}\left(\sigma_{i}^{-1}\varepsilon_{i,t}\left(\theta\right);\lambda_{i}\right)}{f_{i}\left(\sigma_{i}^{-1}\varepsilon_{i,t}\left(\theta\right);\lambda_{i}\right)}
\]
and 
\[
e_{i,\lambda_{i},t}(\theta)=\frac{\partial}{\partial\lambda_{i}}\log\left[f_{i}\left(\sigma_{i}^{-1}\iota_{i}^{'}B\left(\beta\right)^{-1}u_{t}\left(\theta\right);\lambda_{i}\right)\right]=\frac{f_{i,\lambda}\left(\sigma_{i}^{-1}\varepsilon_{i,t}\left(\theta\right);\lambda_{i}\right)}{f_{i}\left(\sigma_{i}^{-1}\varepsilon_{i,t}\left(\theta\right);\lambda_{i}\right)},
\]
with $f_{i,x}\left(x;\lambda_{i}\right)=\frac{\partial}{\partial x}f_{i}\left(x;\lambda_{i}\right)$
and $f_{i,\lambda_{i}}\left(x;\lambda_{i}\right)=\frac{\partial}{\partial\lambda_{i}}f_{i}\left(x;\lambda_{i}\right)$.
Evaluated at the truth, i.e. $\theta=\theta_{0}$, we have that $\varepsilon_{i,t}\left(\theta_{0}\right)=\varepsilon_{i,t}$
and 
\[
e_{i,x,t}=e_{i,x,t}(\theta_{0})=\left.\frac{\partial}{\partial x}\log\left[f_{i}\left(\sigma_{i}^{-1}\iota_{i}^{'}B\left(\beta\right)^{-1}u_{t}\left(\pi\right);\lambda_{i}\right)\right]\right|_{\theta=\theta_{0}}=\frac{f_{i,x}\left(\sigma_{i}^{-1}\varepsilon_{i,t};\lambda_{i,0}\right)}{f_{i}\left(\sigma_{i,0}^{-1}\varepsilon_{i,t};\lambda_{i,0}\right)}.
\]

In order to show that the scores are MDS, that the resulting covariance
matrix is finite at the true parameter point, that the expectation
of the supremum on $\Theta_{0}$ of the Hessian is finite, and that
the expectation of the outer product of the score is equal to the
negative expectation of the Hessian of the individual contribution
to the likelihood, we need the following two assumptions on the first
and second derivatives of the component densities.
\begin{assumption}
\label{assu:CLT}The following conditions hold for $i\in\left\{ 1,\ldots,n\right\} $.
\begin{enumerate}
\item For all $x\in\mathbb{R}$ and all $\lambda_{i}\in\Theta_{0,\lambda_{i}},\ f_{i}\left(x;\lambda_{i}\right)>0$
and $f_{i}\left(x;\lambda_{i}\right)$ is twice continuously differentiable
with respect to $\left(x;\lambda_{i}\right)$. 
\item The function $f_{i,x}\left(x;\lambda_{i,0}\right)$ is integrable
with respect to x, i.e., $\int\left|f_{i,x}\left(x;\lambda_{i,0}\right)\right|dx<\infty$
. 
\item For all $x\in\mathbb{R}$
\[
x^{2}\frac{f_{i,x}^{2}\left(x;\lambda_{i}\right)}{f_{i}^{2}\left(x;\lambda_{i}\right)}\ and\ \frac{\left\Vert f_{i,\lambda_{i}}\left(x;\lambda_{i}\right)\right\Vert ^{2}}{f_{i}^{2}\left(x;\lambda_{i}\right)}
\]
are dominated by $c_{1}\left(1+\left|x\right|^{c_{2}}\right)$ with
$c_{1},c_{2}\geq0$ and $\int\left|x\right|^{c_{2}}f_{i}\left(x;\lambda_{i,0}\right)dx<\infty$
\item $\int\sup_{\lambda_{i}\in\Theta_{0,\lambda_{i}}}\left\Vert f_{i,\lambda_{i}}\left(x;\lambda_{i,0}\right)\right\Vert dx<\infty$.
\end{enumerate}
\end{assumption}
and
\begin{assumption}
\label{assu:ULLN}The following conditions hold for $i\in\left\{ 1,\ldots,n\right\} $.
\begin{enumerate}
\item The functions $f_{i,xx}\left(x;\lambda_{i,0}\right)$ and $f_{i,x\lambda_{i}}\left(x;\lambda_{i,0}\right)$
are integrable with respect to $x$, i.e., 
\[
\int\left|f_{i,xx}\left(x;\lambda_{i,0}\right)\right|dx<\infty\ and\ \int\left\Vert f_{i,x\lambda_{i}}\left(x;\lambda_{i,0}\right)\right\Vert dx<\infty.
\]
\item $\int\sup_{\lambda_{i}\in\Theta_{0,\lambda_{i}}}\left\Vert f_{i,\lambda_{i}\lambda_{i}}\left(x;\lambda_{i,0}\right)\right\Vert dx<\infty$
\item For all $x\in\mathbb{R}$ and all $\lambda_{i}\in\Theta_{0,\lambda_{i}}$,
\[
\frac{f_{i,x}^{2}\left(x;\lambda_{i}\right)}{f_{i}^{2}\left(x;\lambda_{i}\right)}\text{ and }\left|\frac{f_{i,xx}\left(x;\lambda_{i}\right)}{f_{i}\left(x;\lambda_{i}\right)}\right|
\]
are dominated by $a_{0}\left(1+\left|x\right|^{a_{1}}\right)$, 
\[
\left\Vert \frac{f_{i,x\lambda_{i}}\left(x;\lambda_{i}\right)}{f_{i}\left(x;\lambda_{i}\right)}\right\Vert \text{and }\left\Vert \frac{f_{i,x}\left(x;\lambda_{i}\right)}{f_{i}\left(x;\lambda_{i}\right)}\frac{f_{i,\lambda_{i}}\left(x;\lambda_{i}\right)}{f_{i}\left(x;\lambda_{i}\right)}\right\Vert 
\]
are dominated by $a_{0}\left(1+\left|x\right|^{a_{2}}\right)$, 
\[
\left\Vert \frac{f_{i,\lambda_{i}}\left(x;\lambda_{i}\right)}{f_{i}\left(x;\lambda_{i}\right)}\right\Vert ^{2}\text{and }\left\Vert \frac{f_{i,\lambda_{i}\lambda_{i}}\left(x;\lambda_{i}\right)}{f_{i}\left(x;\lambda_{i}\right)}\right\Vert 
\]
are dominated by $a_{0}\left(1+\left|x\right|^{a_{3}}\right),$ with
$a_{0},a_{1},a_{2},a_{3}\geq0$ such that $\int\left(\left|x\right|^{2+a_{1}}+\left|x\right|^{1+a_{2}}+\left|x\right|^{a_{3}}\right)f_{i}\left(x;\lambda_{i,0}\right)dx<\infty$.
\end{enumerate}
\end{assumption}
 In combination, these assumptions allow to prove (in the Online
Appendix)
\begin{thm}
\label{thm:MLE}Under assumptions 2-5, there exists a sequence of
maximizers $\hat{\theta}_{T}$ of (\ref{eq:likelihood}) such that
$\sqrt{T}\left(\hat{\theta}_{T}-\theta_{0}\right)$ converges in distribution
to $\mathcal{N}\left\{ 0,\mathbb{E}\left[l_{\theta,t}\left(\theta_{0}\right)l_{\theta,t}'\left(\theta_{0}\right)\right]^{-1}\right\} $.
\end{thm}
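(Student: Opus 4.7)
The plan is to deduce Theorem \ref{thm:MLE} by verifying, one by one, conditions 1--8 of the generic Theorem \ref{thm:generic_mle} from the low-level hypotheses in Assumptions \ref{assu:non_gaussianIID}, \ref{assu:paramSpace}, \ref{assu:densities}, \ref{assu:CLT}, and \ref{assu:ULLN}. Three conditions are essentially immediate: condition 2 (stationarity and ergodicity of $(\varepsilon_t)$) follows from Assumption \ref{assu:non_gaussianIID}; condition 4 (positivity and twofold continuous differentiability of $f(x,\theta)=\prod_{i=1}^n \sigma_i^{-1} f_i(\sigma_i^{-1}x_i;\lambda_i)$) follows from Assumption \ref{assu:CLT}(1); and condition 3 (the MDS property of $l_{\theta,t}$) follows because $\varepsilon_t$ is independent of the past sigma-algebra and $\mathbb{E}[l_{\theta,t}(\theta_0)]=0$ by the classical score identity, whose underlying interchange of $\partial_\theta$ and $\int dx$ is licensed by Assumption \ref{assu:CLT}(2) and (4).

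Condition 1 (existence of a strongly consistent root of the first-order condition) is where the identification result of Section \ref{sec:identification_scheme} enters. That result forces observationally equivalent parameter values to form a discrete set modulo signed permutations, so $\theta_0$ is an isolated maximizer of the limit criterion $\theta\mapsto\mathbb{E}[l_t(\theta)]$ on a suitable neighborhood. Combining this local isolation with the ULLN on $\Theta_0$ obtained under condition 6 yields, via a standard Cram\'er-type argument, a sequence of local maximizers $\hat{\theta}_T$ with $\hat{\theta}_T\to\theta_0$ almost surely; since $\theta_0$ is interior to $\Theta$ by Assumption \ref{assu:paramSpace}, $L_{\theta,T}(\hat{\theta}_T)=0$ eventually, hence $o_P(T^{-1/2})$.

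The workhorse steps are conditions 5 and 6. For condition 5, plugging the closed-form expressions for $\partial l_t/\partial\pi_2,\partial l_t/\partial\pi_3,\partial l_t/\partial\beta,\partial l_t/\partial\sigma,\partial l_t/\partial\lambda$ displayed above and applying Cauchy--Schwarz reduces the finiteness of $\mathbb{E}\|l_{\theta,t}(\theta_0)\|^2$ to moment bounds on $e_{x,t}$ and $e_{\lambda,t}$ coming from the polynomial dominations of $x^2 f_{i,x}^2/f_i^2$ and $\|f_{i,\lambda_i}\|^2/f_i^2$ in Assumption \ref{assu:CLT}(3), together with moment bounds on $x_{b,t-1}$ and $w_{b,t-1}$ derived from stability \eqref{eq:stability}, invertibility \eqref{eq:invertibility}, and the polynomial-moment integrability of $\varepsilon_t$ asserted there. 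For condition 6, the same strategy is applied to the longer Hessian expressions deferred to the Online Appendix, now using the full list of polynomial bounds in Assumption \ref{assu:ULLN}(3) together with continuity and uniform operator-norm bounds on $B(\beta)^{-1}$, $b(z)^{-1}$, $a(z)^{-1}$ over the compact $\Theta_0$. Condition 8 (information-matrix equality) then follows by two applications of the interchange of $\partial_\theta$ and $\int dx$ licensed by Assumption \ref{assu:ULLN}(1) and (2), and condition 7 (non-singularity of $\mathbb{E}[l_{\theta\theta,t}(\theta_0)]$) is the local reformulation of the global identification of Section \ref{sec:identification_scheme}: a nonzero kernel vector would trace out a smooth one-parameter family of observationally equivalent systems through $\theta_0$, contradicting discreteness. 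The hardest step in my view is the Hessian envelope for condition 6, because the entries involving $\beta$ and $\pi_3$ couple $B(\beta)^{-1}$ and $b(z)^{-1}$ with tensor-product terms of the form $u_{t-i}(\theta)\otimes u_{t-j}(\theta)\, e_{x,t}(\theta)$, so the polynomial dominations of Assumption \ref{assu:ULLN}(3) must be combined with uniform bounds on these $\theta$-dependent matrices over $\Theta_0$---a step that is conceptually routine but bookkeeping-heavy. Once conditions 1--8 are in place, Theorem \ref{thm:generic_mle} delivers the stated asymptotic normality.
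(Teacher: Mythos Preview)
Your proposal is correct and follows essentially the same route as the paper: reduce Theorem~\ref{thm:MLE} to Theorem~\ref{thm:generic_mle} by checking conditions 1--8 from the low-level Assumptions~\ref{assu:paramSpace}--\ref{assu:ULLN}, with the identification result of Section~\ref{sec:identification_scheme} supplying the discreteness needed for conditions 1 and 7. The paper's Online Appendix carries out exactly the program you sketch---explicit score and Hessian formulas, the MDS check component by component, finiteness of the score covariance via Cauchy--Schwarz and the dominance bounds of Assumption~\ref{assu:CLT}, the Hessian envelope via Assumption~\ref{assu:ULLN} combined with uniform bounds on $b(z)^{-1}$ and $B(\beta)^{-1}$ over $\Theta_0$, and the information-matrix equality---and your assessment that the $(\beta,\pi_3)$ Hessian blocks are the most tedious part matches the paper's emphasis.
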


\section{Empirical Application}

\subsection{Impulse Response Functions}

Often, the goal of macroeconometric analyses is gaining an understanding
of the impact of structural economic shocks on the observable variables.
This is usually done through analysis of the impulse response function
(IRF) or the analysis of variance decompositions (see e.g. \citet[Chapter 9.4 and 11.7 ]{luet05}
and \citet[Chapter 4]{KilianLut17}).

We comment on the differences between obtaining them from SVARMA or
SVAR representations. First, note that calculation of the IRF is as
straightforward as in the SVAR case after one has obtained the estimates
of the structural parameters. One possibility is representing the
system in state space form \citep[page 15]{HannanDeistler12}. Then,
the impulse responses and variance decompositions are obtained in
the same way as in the SVAR case, see e.g. \citep[page 108]{KilianLut17}.

If, furthermore, the object of interest is the impulse response function
it is hard to come up with reasons favoring SVAR models over SVARMA
models. While in theory one may approximate SVARMA models (or even
``infinite VAR models'') by SVAR models, it is well known that the
approximation is bad in many practically relevant cases. This was
emphasized in a macroeconometric context by \citet{Ravenna07} and
\citet{PoskittYao17}. \citet{Ravenna07} decomposes the error when
SVARMA models are approximated by SVAR models into a truncation error
and an identification error, pertaining to the parameters describing
the economic shocks. \citet{PoskittYao17} decompose the truncation
error introduced in \citet{Ravenna07} further into an estimation
and approximation error and argue that both are large for commonly
used lag lengths and sample sizes. They conclude that ``using VAR($n$)
may not be justified unless $n$ and {[}the sample size{]} $T$ are
enormous''. Obviously, these errors carry over to the IRF which is
a non-linear transformation of the structural parameters.

\subsection{Empirical Application}

To illustrate the developed methods, we estimate a three equation
macroeconomic model and analyze its impulse response function. A more
detailed analysis can be found in the vignette of the R-package associated
with this article.

\subsubsection{Data}

We use the FRED database of the Federal Reserve Bank of St. Louis
and retrieve series for the unemployment gap $n_{t}$, i.e. we subtract
the unemployment rate (UNRATE) from the natural unemployment rate
(NROU), inflation $\pi_{t}$ (lagged differences of GDPDEF), and the
effective federal funds rate $R_{t}$ (FEDFUNDS). The observation
period starts with Q3 1954 and ends with Q1 2019, thus there are 259
observations.
\begin{center}
\begin{figure}[h]
\includegraphics[width=17cm]{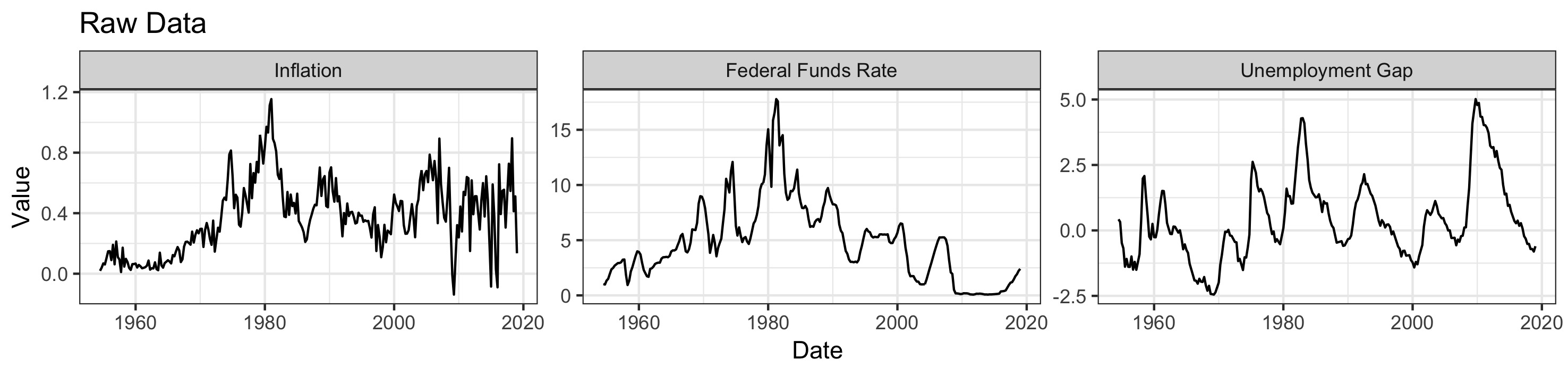}\caption{Raw data}
\end{figure}
\par\end{center}

\subsubsection{Estimation Procedure}

In order to select appropriate integer-valued parameters $p$ and
$q$, we estimate a number of VARMA models with the \texttt{MTS}-package
and select the one with the smallest AIC value. It is worth noting
that neither the \texttt{dse}-package nor the \texttt{MTS}-package
enforces the stability condition \eqref{eq:stability} or the invertibility
condition \eqref{eq:invertibility}. In case of unstable or non-invertible
determinantal roots of the $a(z)$ or $b(z)$ matrix polynomials,
we mirror these roots outside the unit circle and adjust the error
covariance accordingly. Based on the AIC value, the Ljung-Box test,
and the McLeod-Li test \citet{MahdiMcLeod_R_portes}, we choose $p=2$
and $q=2$. Moreover, the distribution of the shocks of the initial
model in Figure \ref{fig:shocks_initial_model} and Figure \ref{fig:qqplot_initial_model}
suggest, and the Jarque-Bera test indicates that the individual series
are not normally distributed.

\textcolor{red}{}
\begin{figure}
\textcolor{red}{\includegraphics[height=6cm]{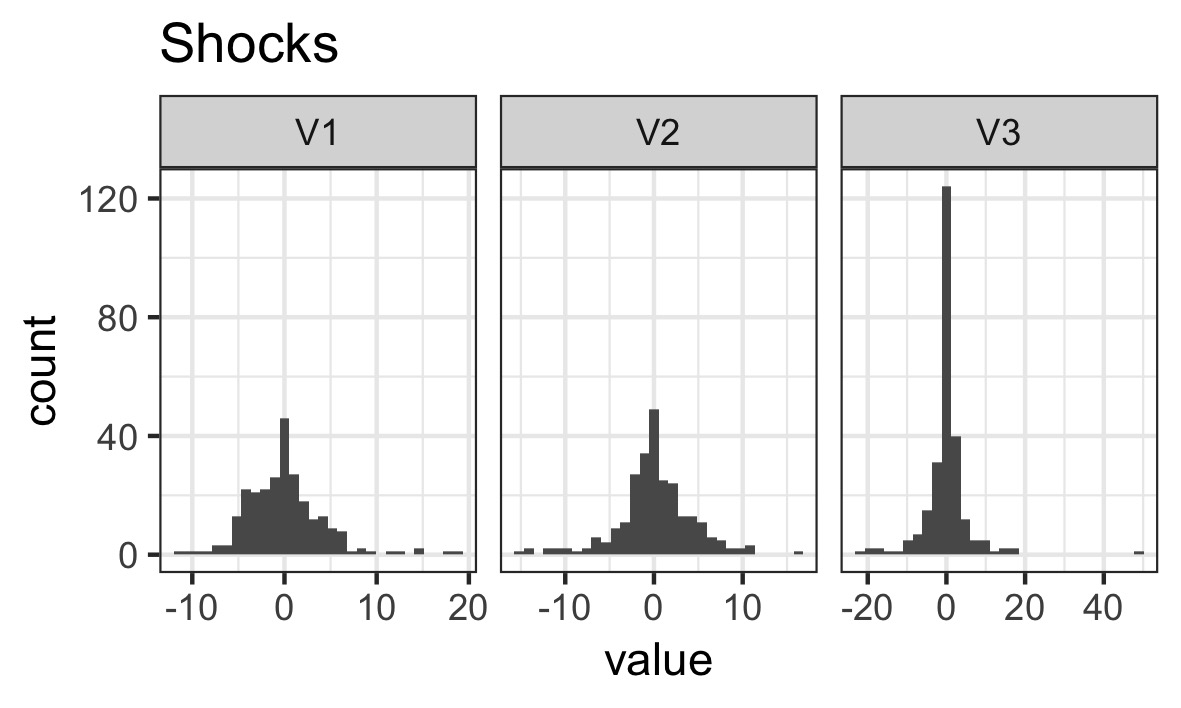}}

\caption{\textcolor{red}{\label{fig:shocks_initial_model}}Histogram of shocks
$\hat{\varepsilon}_{t}=\Sigma^{-1}B^{-1}\hat{a}(z)^{-1}\hat{b}(z)y_{t}$
based on initial model}
\end{figure}

\begin{figure}
\includegraphics[height=6cm]{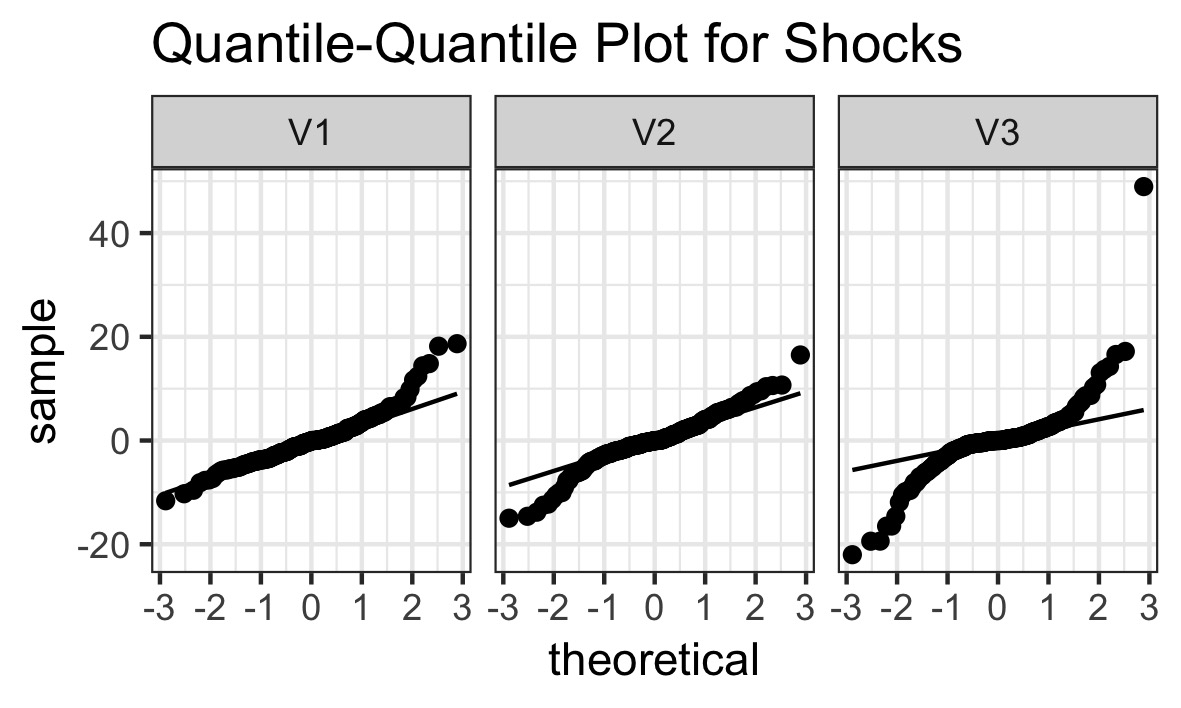}

\caption{\label{fig:qqplot_initial_model}Quantile-quantile plot of shocks
$\hat{\varepsilon}_{t}=\Sigma^{-1}B^{-1}\hat{a}(z)^{-1}\hat{b}(z)y_{t}$
based on initial model}
\end{figure}

We proceed thus under the assumption that the errors of the VARMA(2,2)
model are independent and not normally distributed. The individual
series seem leptokurtic and we assume that they follow a Laplace distribution.
The full conditional likelihood is subsequently estimated using the
same identification scheme as in \citet{LMS_svarIdent16} for fixing
a particular permutation and scaling. The estimates for $B$ and $\sigma$
are 
\[
\hat{B}=\begin{pmatrix}1 & 0.1224 & -0.1282\\
-0.0168 & 1 & 0.0107\\
0.0280 & 0.175 & 1
\end{pmatrix}
\]
and $\hat{\sigma}=\left(0.0685,\,0.0315,\,0.14\right)$, the respective
(bootstrapped) standard deviations are 
\[
\hat{\mathbb{V}}_{bs}\left(\hat{B}\right)=\begin{pmatrix}0 & 0.1204 & 0.01496\\
0.02816 & 0 & 0.0156\\
0.06903 & 0.0928 & 0
\end{pmatrix}
\]
and $\hat{\mathbb{V}}_{bs}\left(\hat{\sigma}\right)=\left(0.00427,\,0.00199,\,0.01419\right)$.

\subsubsection{Impulse Response Function}

In order to interpret the results, we calculate the impulse response
function together with bootstrapped confidence intervals (1000 bootstrap
replications).\textcolor{red}{}
\begin{figure}[H]
\includegraphics[width=16cm]{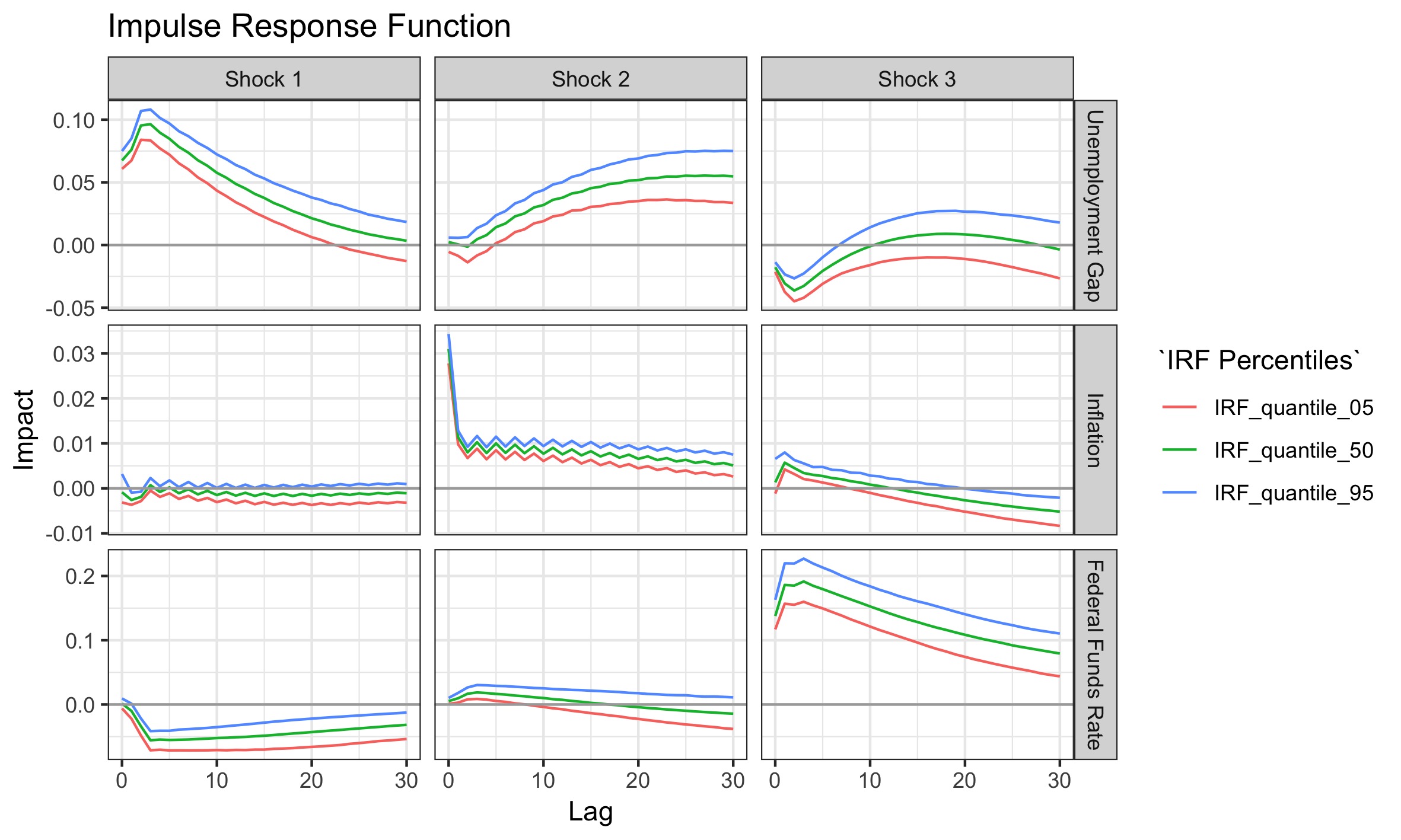}

\caption{Impulse responses}
\end{figure}

We interpret the third shock as a (in the longer term) contractionary
monetary policy shock since it is the only one to which the interest
rate reacts significantly (judging from the confidence bands). In
the medium term, the unemployment gap widens (the economy shrinks)
and initially there is a positive response of inflation to Shock 3.
The first shock is interpreted as negative demand shock because the
economy shrinks (widening of the unemployment gap) and prices fall
slightly. We interpret the remaining Shock 2 as negative supply shock
since the economy shrinks (widening unemployment gap) with increasing
prices.

\section{Acknowledgements}

Financial support by the Research Funds of the University of Helsinki
as well as by funds of the Oesterreichische Nationalbank (Austrian
Central Bank, Anniversary Fund, project number: 17646) is gratefully
acknowledged. For bootstrapping, the \textit{Finnish Grid and Cloud
Infrastructure} with persistent identifier \textit{urn:nbn:fi:research-infras-2016072533}
was used. Juho Koistinen, Mika Meitz, Markku Lanne, and Wolfgang Scherrer
provided helpful comments on various versions of this article.

\section{Conclusion}

In this article, we showed that stable and invertible SVARMA models
\eqref{eq:system} driven by independent and non-Gaussian shocks are
identifiable up to permutation and scaling. This result extends the
identifiability results regarding structural VAR models in \citet{LMS_svarIdent16}.
SVARMA models capture (macroeconomic) dynamics more parsimoniously
than SVAR models and are therefore advantageous in situations with
relatively small sample sizes.

\pagebreak{}

\appendix
\setcounter{page}{1}
\renewcommand{\thepage}{A-\arabic{page}} 
\newgeometry{left=1cm,right=1cm,top=2cm,bottom=2cm}

\section{Introduction to the Online Appendix}

We obtain the first and second partial derivatives of the individual
contributions $l_{t}(\theta)$ to the standardized log-likelihood
function $L_{T}\left(\theta\right)$ and show that Assumptions 2 to
5 imply that the premises of Theorem \ref{thm:generic_mle} are satisfied.
The rest of this Online Appendix is structured as follows.

After introducing some notation in the rest of this section, we calculate
the first partial derivatives of $l_{t}(\theta)$ in Section \ref{sec:first_partial_deriv}.
In Section \ref{sec:score_MDS} we show that the first partial derivatives
of $l_{t}(\theta)$ are an MDS. In Section \ref{sec:opg_expression}
we derive the expression for the expectation of the outer product
of the score at the true parameter value $\theta_{0}$ and in Section
\ref{sec:opg_finite} we verify that it is finite. This verifies that
the score satisfies a CLT for MDS. 

In Section \ref{sec:hessian_expr} we derive the matrix of second
partial derivatives of $l_{t}\left(\theta\right)$ while in Section
\ref{sec:hessian_ulln} we verify that the expectation of the supremum
(over a compact and convex set in the parameter space which contains
the true parameter value) of the Hessian is finite. This verifies
that the Hessian satisfies a ULLN.

In Section \ref{sec:hessian_equal_opg} it is verified that the expectation
of the Hessian equals the expectation of the outer product of the
score, evaluated respectively at the true parameter value $\theta_{0}$.
Together with the above, all statements in Theorem \ref{thm:generic_mle}
are verified. 

\subsection{Notation and System Representations}

The individual contribution at time $t$ to the (standardized) log-likelihood
function, i.e. equation \eqref{eq:likelihood_individual}, is here
repeated as

\[
l_{t}\left(\theta\right)=\sum_{i=1}^{n}\log\left[f_{i}\left(\sigma_{i}^{-1}\varepsilon_{i,t}\left(\theta\right);\lambda_{i}\right)\right]-\log\left\{ \det\left[B\left(\beta\right)\right]\right\} -\sum_{i=1}^{n}\log\left(\sigma_{i}\right),
\]
where $\varepsilon_{i,t}\left(\theta\right)=\iota_{i}^{'}B\left(\beta\right)^{-1}u_{t}\left(\theta\right).$

\paragraph{Derivatives of the component densities.}

For the first partial derivatives of $l_{t}\left(\theta\right)$,
the expressions
\[
e_{i,x,t}(\theta)=\frac{\partial}{\partial x}\log\left[f_{i}\left(\sigma_{i}^{-1}\iota_{i}'B\left(\beta\right)^{-1}u_{t}\left(\theta\right);\lambda_{i}\right)\right]=\frac{f_{i,x}\left(\sigma_{i}^{-1}\iota_{i}'B\left(\beta\right)^{-1}u_{t}\left(\theta\right);\lambda_{i}\right)}{f_{i}\left(\sigma_{i}^{-1}\iota_{i}'B'\left(\beta\right)^{-1}u_{t}\left(\theta\right);\lambda_{i}\right)}
\]
and
\[
e_{i,\lambda_{i},t}(\theta)=\frac{\partial}{\partial\lambda_{i}}\log\left[f_{i}\left(\sigma_{i}^{-1}\iota_{i}'B\left(\beta\right)^{-1}u_{t}\left(\theta\right);\lambda_{i}\right)\right]=\frac{f_{i,\lambda}\left(\sigma_{i}^{-1}\iota_{i}'B\left(\beta\right)^{-1}u_{t}\left(\theta\right);\lambda_{i}\right)}{f_{i}\left(\sigma_{i}^{-1}\iota_{i}'B\left(\beta\right)^{-1}u_{t}\left(\theta\right);\lambda_{i}\right)},
\]
where $f_{i,x}\left(x;\lambda_{i}\right)=\frac{\partial}{\partial x}f_{i}\left(x;\lambda_{i}\right)$
and $f_{i,\lambda_{i}}\left(x;\lambda_{i}\right)=\frac{\partial}{\partial\lambda_{i}}f_{i}\left(x;\lambda_{i}\right)$
will be used extensively. The corresponding versions for all components
are $e_{x,t}\left(\theta\right)=\left(e_{1,x,t}\left(\theta\right),\ldots,e_{n,x,t}\left(\theta\right)\right)'$
of dimension $n$ and $e_{\lambda,t}\left(\theta\right)=\left(e_{1,\lambda_{1},t}'\left(\theta\right),\ldots,e_{n,\lambda_{n},t}'\left(\theta\right)\right)'$
of dimension $d=d_{1}+\cdots+d_{n}$.

For the second partial derivatives of $l_{t}\left(\theta\right)$,
the expressions
\[
e_{i,xx,t}\left(\theta\right)=\frac{\partial e_{i,x,t}\left(\theta\right)}{\partial x}=\left(\frac{f_{i,xx}f_{i}-f_{i,x}^{2}}{f_{i}^{2}}\right)\left(\sigma_{i}^{-1}\iota_{i}'B\left(\beta\right)^{-1}u_{t}\left(\theta\right);\lambda_{i}\right),
\]
\[
e_{i,x\lambda_{i},t}\left(\theta\right)=\frac{\partial e_{i,x,t}\left(\theta\right)}{\partial\lambda_{i}}=\left(\frac{f_{i,x\lambda}f_{i}-f_{i,\lambda_{i}}f_{i,x}^{2}}{f_{i}^{2}}\right)\left(\sigma_{i}^{-1}\iota_{i}'B\left(\beta\right)^{-1}u_{t}\left(\theta\right);\lambda_{i}\right),
\]
and 
\[
e_{i,\lambda_{i}\lambda_{i},t}(\theta)=\frac{\partial e_{i,\lambda_{i},t}(\theta)}{\partial\lambda_{i}'}=\left(\frac{f_{i,\lambda_{i}\lambda_{i}}f_{i}-f_{i,\lambda_{i}}f_{i,\lambda_{i}}'}{f_{i}^{2}}\right)\left(\sigma_{i}^{-1}\iota_{i}'B\left(\beta\right)^{-1}u_{t}\left(\theta\right);\lambda_{i}\right),
\]
are important. Here, $e_{xx,t}\left(\theta\right)=diag\left(e_{1,xx,t}\left(\theta\right),\ldots,e_{n,xx,t}\left(\theta\right)\right)$
is a diagonal matrix of dimension $n$ and $e_{\lambda\lambda,t}(\theta)=diag\left(e_{1,\lambda_{1}\lambda_{1},t}(\theta),\ldots,e_{n,\lambda_{n}\lambda_{n},t}(\theta)\right)$
is a block diagonal matrix with blocks of size $d_{i}$. The notation
$\frac{\partial e_{i,x,t}\left(\theta_{0}\right)}{\partial x}:=\left.\frac{\partial e_{i,x,t}\left(\theta\right)}{\partial x}\right|_{\theta=\theta_{0}}$
is used to denote the derivative evaluated at a particular point.

\paragraph{Two different ways to express the partial derivatives of $u_{t}\left(\theta\right)$.}

The observations may be represented at one particular point in time
or as a system containing all observations $\left(y_{1},\ldots,y_{T}\right)$
as well as starting values $\left(y_{1-p},\ldots,y_{0}\right)$. The
starting values for the process $\left(u_{t}\right)$ are set to zero,
i.e. $\left(u_{1-q},\ldots,u_{0}\right)=0$. For simplicity, we also
set the starting values $\left(y_{1-p},\ldots,y_{0}\right)$ equal
to zero. If clarity of presentation is not affected, we use $x_{t-1}=\left(y'_{t-1},\ldots,y'_{t-p}\right)^{'}$
of dimension $np$ and $w_{t-1}\left(\theta\right)=\left(u'_{t-1}\left(\theta\right),\ldots,u'_{t-q}\left(\theta\right)\right)^{'}$
of dimension $nq$ as shorthand notation.

For one particular point in time, we have

\begin{equation}
u_{t}\left(\theta\right)=y_{t}-\left(a_{1},\ldots,a_{p}\right)\begin{pmatrix}y_{t-1}\\
\vdots\\
y_{t-p}
\end{pmatrix}-\left(b_{1},\ldots,b_{q}\right)\begin{pmatrix}u_{t-1}\left(\theta\right)\\
\vdots\\
u_{t-q}\left(\theta\right)
\end{pmatrix}\label{eq:ut}
\end{equation}

for $t\in\left\{ 1,\ldots,T\right\} $.

All observations can be written as

\begin{equation}
\left(y_{1}\cdots y_{T}\right)-a_{1}\left(y_{0}\cdots y_{T-1}\right)-\cdots-a_{p}\left(y_{1-p}\cdots y_{T-p}\right)=\left(u_{1}\left(\theta\right)\cdots u_{T}\left(\theta\right)\right)+b_{1}+\cdots+b_{q}\left(u_{1-q}\left(\theta\right)\cdots u_{T-q}\left(\theta\right)\right).\label{eq:ut_system_reinsel}
\end{equation}
Defining the matrix 
\[
L=\begin{pmatrix}0 & \cdots &  & \cdots & 0\\
1 & 0 & \cdots &  & \vdots\\
0 & 1 & 0\\
\vdots & \ddots & \ddots & \ddots & \vdots\\
0 & \cdots & 0 & 1 & 0
\end{pmatrix}\in\mathbb{R}^{T\times T}
\]
corresponding to the (non-invertible) lag operator such that 
\[
L\begin{pmatrix}u_{1}'\left(\theta\right)\\
u_{2}'\left(\theta\right)\\
\vdots\\
u_{T}'\left(\theta\right)
\end{pmatrix}=\begin{pmatrix}0_{1\times n}\\
u_{1}'\left(\theta\right)\\
\vdots\\
u_{T-1}'\left(\theta\right)
\end{pmatrix},
\]
equation \eqref{eq:ut_system_reinsel} can be written as 
\[
\left(y_{1}\cdots y_{T}\right)-a_{1}\left(y_{0}\cdots y_{T-1}\right)-\cdots-a_{p}\left(y_{1-p}\cdots y_{T-p}\right)=\left(u_{1}\left(\theta\right)\cdots u_{T}\left(\theta\right)\right)+b_{1}\left(u_{1}\left(\theta\right)\cdots u_{T}\left(\theta\right)\right)L'+\cdots+b_{q}\left(u_{1}\left(\theta\right)\cdots u_{T}\left(\theta\right)\right)\left(L'\right)^{q}.
\]
Vectorizing equation (\ref{eq:ut_system_reinsel}) leads to {\scriptsize{}
\begin{gather}
vec\left(y_{1}\cdots y_{T}\right)-\left[\begin{pmatrix}y_{0}'\\
y_{1}'\\
\vdots\\
y_{T-1}'
\end{pmatrix}\otimes I_{n},\begin{pmatrix}y_{-1}'\\
y_{0}'\\
\vdots\\
y_{T-2}'
\end{pmatrix}\otimes I_{n},\ldots,\begin{pmatrix}y_{1-p}'\\
y_{2-p}'\\
\vdots\\
y_{T-p}'
\end{pmatrix}\otimes I_{n}\right]\underbrace{vec\left(a_{1},\ldots,a_{p}\right)}_{=\pi_{2}}=\label{eq:ut_system_reinsel_vectorized}\\
=vec\left(u_{1}\left(\theta\right)\cdots u_{T}\left(\theta\right)\right)+\left[L\begin{pmatrix}u_{1}'\left(\theta\right)\\
u_{2}'\left(\theta\right)\\
\vdots\\
u_{T}'\left(\theta\right)
\end{pmatrix}\otimes I_{n},L^{2}\begin{pmatrix}u_{1}'\left(\theta\right)\\
u_{2}'\left(\theta\right)\\
\vdots\\
u_{T}'\left(\theta\right)
\end{pmatrix}\otimes I_{n},\ldots,L^{q}\begin{pmatrix}u_{1}'\left(\theta\right)\\
u_{2}'\left(\theta\right)\\
\vdots\\
u_{T}'\left(\theta\right)
\end{pmatrix}\otimes I_{n}\right]\underbrace{vec\left(b_{1},\ldots,b_{p}\right)}_{=\pi_{3}}\nonumber \\
=\left[I_{Tn}+\sum_{i=1}^{q}\left(L^{i}\otimes b_{i}\right)\right]vec\left(u_{1}\left(\theta\right)\cdots u_{T}\left(\theta\right)\right)\nonumber 
\end{gather}
}where the vectorization formula $vec\left(ABC\right)=\left(C'\otimes A\right)vec(B)$
has been applied to $\left\{ \left[I_{n}\right]\left[a_{j}\right]\left[\left(y_{1-j}\cdots y_{T-j}\right)\right]\right\} $
on the left-hand-side and to $\left(\left[b_{j}\right]\left[\left(u_{1}\left(\theta\right)\cdots u_{T}\left(\theta\right)\right)\right]\left[\left(L'\right)^{j}\right]\right)$
and $\left(\left[I_{n}\right]\left[b_{j}\right]\left[\left(u_{1}\left(\theta\right)\cdots u_{T}\left(\theta\right)\right)\left(L'\right)^{j}\right]\right)$
on the right-hand-side of equation \eqref{eq:ut_system_reinsel}

By using the (conditional maximum likelihood) assumption that $\left(y_{1-p},\ldots,y_{0}\right)$
be zero, we can also vectorize the left-hand-side of equation \eqref{eq:ut_system_reinsel}
as 
\[
vec\left[\left(y_{1}\cdots y_{T}\right)-a_{1}\left(y_{1}\cdots y_{T}\right)L'-\cdots-a_{p}\left(y_{1}\cdots y_{T}\right)\left(L'\right)^{p}\right]=vec\left(y_{1}\cdots y_{T}\right)-\sum_{j=1}^{p}\left(L^{j}\otimes a_{j}\right)vec\left(y_{1}\cdots y_{T}\right)
\]

in order to obtain

\[
\mathcal{B}\begin{pmatrix}u_{1}\left(\theta\right)\\
\vdots\\
u_{T}\left(\theta\right)
\end{pmatrix}=\mathcal{A}\begin{pmatrix}y_{1}\\
\vdots\\
y_{T}
\end{pmatrix}
\]

where 
\[
\mathcal{A}=\left[I_{Tn}-\sum_{i=1}^{p}\left(L^{i}\otimes a_{i}\right)\right]=\begin{pmatrix}I_{n} & 0 &  & \cdots & 0 & \cdots &  & 0\\
-a_{1} & I_{n} & 0 &  &  &  &  & \vdots\\
-a_{2} & -a_{1} & I_{n} & \ddots &  &  &  & 0\\
\vdots & -a_{2} & \ddots & \ddots & 0 &  &  & 0\\
-a_{p} &  & \ddots & -a_{1} & I_{n} & \ddots &  & \vdots\\
0 & -a_{p} &  &  & \ddots & \ddots & 0 & 0\\
\vdots & 0 & \ddots &  & -a_{2} & -a_{1} & I_{n} & 0\\
0 & \cdots & 0 & -a_{p} & \cdots & -a_{2} & -a_{1} & I_{n}
\end{pmatrix}\in\mathbb{R}^{Tn\times Tn}
\]
and 
\[
\mathcal{B}=\left[I_{Tn}+\sum_{i=1}^{q}\left(L^{i}\otimes b_{i}\right)\right]=\begin{pmatrix}I_{n} & 0 &  & \cdots & 0 & \cdots &  & 0\\
b_{1} & I_{n} & 0 &  &  &  &  & \vdots\\
b_{2} & b_{1} & I_{n} & \ddots &  &  &  & 0\\
\vdots & b_{2} & \ddots & \ddots & 0 &  &  & 0\\
b_{q} &  & \ddots & b_{1} & I_{n} & \ddots &  & \vdots\\
0 & b_{q} &  &  & \ddots & \ddots & 0 & 0\\
\vdots & 0 & \ddots &  & b_{2} & b_{1} & I_{n} & 0\\
0 & \cdots & 0 & b_{q} & \cdots & b_{2} & b_{1} & I_{n}
\end{pmatrix}\in\mathbb{R}^{Tn\times Tn}.
\]

\section{\label{sec:first_partial_deriv}Partial Derivatives of $l_{t}\left(\theta\right)$}

In the following subsections, we will derive that 
\begin{align*}
\frac{\partial l_{t}\left(\theta_{0}\right)}{\partial\pi_{2}} & =-x_{b,t-1}\left(\theta_{0}\right)B'\left(\beta_{0}\right)^{-1}\Sigma_{0}^{-1}e_{x,t}\left(\theta_{0}\right)\\
\frac{\partial l_{t}\left(\theta_{0}\right)}{\partial\pi_{3}} & =-w_{b,t-1}\left(\theta_{0}\right)\Sigma_{0}^{-1}e_{x,t}\left(\theta_{0}\right)\\
\frac{\partial l_{t}\left(\theta_{0}\right)}{\partial\beta} & =H'\sum_{i=1}^{q}\left(I_{n}\otimes b_{i}'B'\left(\beta_{0}\right)^{-1}\Sigma_{0}^{-1}\right)\left(\varepsilon_{t-i}\left(\theta_{0}\right)\otimes e_{x,t}\left(\theta_{0}\right)\right)-H'\left(I_{n}\otimes B'\left(\beta_{0}\right)^{-1}\Sigma_{0}^{-1}\right)\left(\varepsilon_{t}\left(\theta_{0}\right)\otimes e_{x,t}\left(\theta_{0}\right)\right)-H'vec\left(B'\left(\beta_{0}\right)^{-1}\right)\\
\frac{\partial}{\partial\sigma}l_{t}\left(\theta_{0}\right) & =-\Sigma_{0}^{-2}\left[e_{x,t}\left(\theta_{0}\right)\odot\varepsilon_{t}\left(\theta_{0}\right)+\sigma_{0}\right]\\
\frac{\partial}{\partial\lambda}l_{t}\left(\theta_{0}\right) & =e_{\lambda,t}\left(\theta_{0}\right)
\end{align*}

\subsection{Partial Derivative with respect to $\pi_{2}$}

\paragraph{Intermediate step for $l_{\pi_{2},t}\left(\theta\right)$.}

We obtain that

\begin{align}
\frac{\partial l_{t}\left(\theta\right)}{\partial\pi_{2}} & =\frac{\partial}{\partial\pi_{2}}\left\{ \sum_{i=1}^{n}\log\left[f_{i}\left(\sigma_{i}^{-1}\iota_{i}^{'}B\left(\beta\right)^{-1}u_{t}\left(\theta\right);\lambda_{i}\right)\right]\right\} =\frac{\partial}{\partial\pi_{2}}\left\{ \sum_{i=1}^{n}\log\left[f_{i}\left(\sigma_{i}^{-1}u_{t}\left(\theta\right)'B'\left(\beta\right)^{-1}\iota_{i};\lambda_{i}\right)\right]\right\} \nonumber \\
 & =\sum_{i=1}^{n}e_{i,x,t}\left(\theta\right)\frac{\partial u_{t}\left(\theta\right)'}{\partial\pi_{2}}\sigma_{i}^{-1}B'\left(\beta\right)^{-1}\iota_{i}\nonumber \\
 & =\frac{\partial u_{t}\left(\theta\right)'}{\partial\pi_{2}}B'\left(\beta\right)^{-1}\Sigma^{-1}e_{x,t}\label{lik_pi2}
\end{align}

\paragraph{The derivative of $u_{t}$ with respect to $\pi_{2}$ for one equation.}

We obtain from vectorizing \eqref{eq:ut} that

\begin{align*}
u_{t}\left(\theta\right) & =y_{t}-\left(a_{1},\ldots,a_{p}\right)\begin{pmatrix}y_{t-1}\\
\vdots\\
y_{t-p}
\end{pmatrix}-\left(b_{1},\ldots,b_{q}\right)\begin{pmatrix}u_{t-1}\left(\theta\right)\\
\vdots\\
u_{t-q}\left(\theta\right)
\end{pmatrix}\\
 & =y_{t}-\left(\left(y'_{t-1},\ldots,y'_{t-p}\right)\otimes I_{n}\right)vec\left(a_{1},\ldots,a_{p}\right)-\left(b_{1},\ldots,b_{q}\right)\begin{pmatrix}u_{t-1}\left(\theta\right)\\
\vdots\\
u_{t-q}\left(\theta\right)
\end{pmatrix}.
\end{align*}
Transposition and differentiation lead to 
\[
u_{t}'\left(\theta\right)=y_{t}'-\pi_{2}'\left(\begin{pmatrix}y_{t-1}\\
\vdots\\
y_{t-p}
\end{pmatrix}\otimes I_{n}\right)-\left(u_{t-1}'\left(\theta\right),\ldots,u_{t-q}'\left(\theta\right)\right)\begin{pmatrix}b_{1}'\\
\vdots\\
b_{q}'
\end{pmatrix}
\]

and

\[
\frac{\partial u_{t}'\left(\theta\right)}{\partial\pi_{2}}=-\left(x_{t-1}\otimes I_{n}\right)-\left(\frac{\partial u_{t-1}'\left(\theta\right)}{\partial\pi_{2}},\ldots,\frac{\partial u_{t-q}'\left(\theta\right)}{\partial\pi_{2}}\right)\begin{pmatrix}b_{1}'\\
\vdots\\
b_{q}'
\end{pmatrix}.
\]

Finally, we may express $\frac{\partial u_{t}\left(\theta\right)}{\partial\pi_{2}'}$
using a lag polynomial, i.e.
\begin{align*}
\left(I_{n}+b_{1}z+\cdots+b_{q}z^{q}\right)\overbrace{\frac{\partial u_{t}\left(\theta\right)}{\partial\pi_{2}'}}^{n\times n^{2}p} & =-\overbrace{\left(x_{t-1}'\otimes I_{n}\right)}^{=\left(n\times n^{2}p\right)}\\
\iff\frac{\partial u_{t}\left(\theta\right)}{\partial\pi_{2}'} & =-b(z)^{-1}\left[x_{t-1}'\otimes I_{n}\right]=-\left[x_{t-1}'\otimes b(z)^{-1}\right].
\end{align*}

\paragraph{The derivative of $u_{t}$ with respect to $\pi_{2}$  for all points
in time.}

Rewriting equation \eqref{eq:ut_system_reinsel_vectorized} as
\begin{align*}
\begin{pmatrix}y_{1}\\
\vdots\\
y_{T}
\end{pmatrix}-\left[\begin{pmatrix}x_{0}'\\
\vdots\\
x_{T-1}'
\end{pmatrix}\otimes I_{n}\right]\underbrace{vec\left(a_{1},\ldots,a_{p}\right)}_{=\pi_{2}} & =\underbrace{\left[I_{Tn}-\sum_{i=1}^{q}\left(L^{i}\otimes b_{i}\right)\right]}_{=\mathcal{B}}vec\left(u_{1}\left(\theta\right)\cdots u_{T}\left(\theta\right)\right),
\end{align*}
transposing it and taking partial derivatives leads to
\begin{align*}
\frac{\partial}{\partial\pi_{2}}\begin{pmatrix}u'_{1}\left(\theta\right) & \cdots & u'_{T}\left(\theta\right)\end{pmatrix} & =-\frac{\partial}{\partial\pi_{2}}\left[\pi'_{2}\left[\begin{pmatrix}x_{0} & \cdots & x_{T-1}\end{pmatrix}\otimes I_{n}\right]\mathcal{B}'^{-1}\right]\\
 & =-\left[\begin{pmatrix}x_{0} & \cdots & x_{T-1}\end{pmatrix}\otimes I_{n}\right]\mathcal{B}'^{-1}.
\end{align*}

It follows that for one point in time, we obtain over a compact set
centered at the true parameter value is finite
\[
\frac{\partial u'_{t}\left(\theta\right)}{\partial\pi_{2}}=-\left[\begin{pmatrix}x_{0} & \cdots & x_{T-1}\end{pmatrix}\otimes I_{n}\right]\underbrace{\left(\mathcal{B}'^{-1}\right)_{\left[\bullet,t\right]}}_{=\mathfrak{b_{t}}}
\]
where $\mathfrak{b}_{t}=\left(\mathcal{B}'^{-1}\right)_{\left[\bullet,t\right]}$
is the $t$-th $\left(Tn\times n\right)$-dimensional block of columns
of the $\left(Tn\times Tn\right)$-dimensional matrix $\mathcal{B}'^{-1}$.
Note that $\mathcal{B}'^{-1}$ is (block-) upper-triangular such that
the product of $\left[\begin{pmatrix}x_{0} & \cdots & x_{T-1}\end{pmatrix}\otimes I_{n}\right]$
with $\mathfrak{b}_{t}$ only involves terms depending on time $t-1$
and earlier. Furthermore, note that the non-zero elements of $\mathfrak{b}_{t}$
correspond to the coefficients of $b(z)^{-1}$ whose (matrix-) norms
are decreasing at an exponential rate.

\paragraph{Result for $l_{\pi_{2},t}\left(\theta\right)$ for one point in time.}

In the expression involving $\left[x_{t.-1}\otimes b'(z)^{-1}\right]$
below, it is unclear to which quantity the lag operator applies. This
is why we introduce addional notation $x_{t-1}\otimes b'(z)^{-1}=x_{b,t-1}\left(\theta\right)$
in a similar way as in \citet{meitzSaikkonen13}.

This implies for the score that 
\begin{align*}
\frac{\partial l_{t}\left(\theta\right)}{\partial\pi_{2}} & =-\left[\begin{pmatrix}x_{0} & \cdots & x_{T-1}\end{pmatrix}\otimes I_{n}\right]\mathfrak{b}_{t}B'\left(\beta\right)^{-1}\Sigma^{-1}e_{x,t}\left(\theta\right)\\
 & =-\left[x_{t.-1}\otimes b'(z)^{-1}\right]B'\left(\beta\right)^{-1}\Sigma^{-1}e_{x,t}\left(\theta\right)\\
 & =-x_{b,t-1}\left(\theta\right)B'\left(\beta\right)^{-1}\Sigma^{-1}e_{x,t}\left(\theta\right).
\end{align*}

\paragraph{Result for $\frac{\partial L_{t}\left(\theta\right)}{\partial\pi_{2}}$.}

The partial derivative of the standardized log-likelihood function
with respect to $\pi_{2}$ is
\begin{align*}
\frac{\partial L_{t}\left(\theta\right)}{\partial\pi_{2}} & =\frac{1}{T}\sum_{i=1}^{T}l_{\pi_{2},t}\left(\theta\right)\\
 & =-\frac{1}{T}\left[\begin{pmatrix}x_{0} & \cdots & x_{T-1}\end{pmatrix}\otimes I_{n}\right]\mathcal{B}'^{-1}\left[\left(I_{T}\otimes\Sigma^{-1}B'\left(\beta\right)^{-1}\right)\begin{pmatrix}e_{x,1}\left(\theta\right)\\
\vdots\\
e_{x,T}\left(\theta\right)
\end{pmatrix}\right].
\end{align*}

\subsection{Partial Derivative with respect to $\pi_{3}$}

\paragraph{Intermediate step for $l_{\pi_{3},t}\left(\theta\right)$.}

As for \eqref{lik_pi2}, we obtain
\begin{align*}
\frac{\partial l_{t}\left(\theta\right)}{\partial\pi_{3}} & =\sum_{i=1}^{n}e_{i,x,t}\left(\theta\right)\frac{\partial u_{t}\left(\theta\right)'}{\partial\pi_{3}}\sigma_{i}^{-1}B'\left(\beta\right)^{-1}\iota_{i}\\
 & =\frac{\partial u_{t}\left(\theta\right)'}{\partial\pi_{3}}B'\left(\beta\right)^{-1}\Sigma^{-1}e_{x,t}\left(\theta\right)
\end{align*}

\paragraph{Applying the product rule to $\frac{\partial u_{t}\left(\theta\right)'}{\partial\pi_{3}}$.}

Note that both $\left(b_{1},\ldots,b_{q}\right)$ and $u_{t-j}\left(\theta\right)=B\left(\beta\right)\varepsilon_{t-j}\left(\theta\right)$
depend on $\pi_{3}$ and that the last term in
\[
u_{t}\left(\theta\right)=y_{t}-\left(a_{1},\ldots,a_{p}\right)\begin{pmatrix}y_{t-1}\\
\vdots\\
y_{t-p}
\end{pmatrix}-\left(b_{1},\ldots,b_{q}\right)\begin{pmatrix}u_{t-1}\left(\theta\right)\\
\vdots\\
u_{t-q}\left(\theta\right)
\end{pmatrix},
\]
can be written as $vec\left[I_{n}\left(b_{1},\ldots,b_{q}\right)\begin{pmatrix}u_{t-1}\left(\theta\right)\\
\vdots\\
u_{t-q}\left(\theta\right)
\end{pmatrix}\right]=\left[w_{t-1}'\left(\theta\right)\otimes I_{n}\right]\pi_{3}$. Thus, we obtain that 
\[
\frac{\partial u_{t}'\left(\theta\right)}{\partial\pi_{3}}=-\left[w_{t-1}\left(\theta\right)\otimes I_{n}\right]-\left[\begin{pmatrix}\frac{\partial u_{t-1}'\left(\theta\right)}{\partial\pi_{3}} & \cdots & \frac{\partial u_{t-q}'\left(\theta\right)}{\partial\pi_{3}}\end{pmatrix}\begin{pmatrix}b_{1}'\\
\vdots\\
b_{q}'
\end{pmatrix}\right].
\]

Transposing this equation and using the lag operator to solve this
equation leads to 
\begin{align*}
\left(I_{n}+b_{1}z+\cdots+b_{q}z^{q}\right)\frac{\partial u_{t}\left(\theta\right)}{\partial\pi_{3}'} & =-\left[w_{t-1}'\left(\theta\right)\otimes I_{n}\right]\\
\iff\frac{\partial u_{t}\left(\theta\right)}{\partial\pi_{3}'} & =-b(z)^{-1}\left[w_{t-1}'\left(\theta\right)\otimes I_{n}\right].
\end{align*}

\paragraph{Applying the product rule to$\frac{\partial\left(u_{1}'\left(\theta\right),\ldots,u_{T}'\left(\theta\right)\right)}{\partial\pi_{3}}$.}

Rewriting equation \eqref{eq:ut_system_reinsel_vectorized} as 
\begin{align*}
\begin{pmatrix}u_{1}\left(\theta\right)\\
\vdots\\
u_{T}\left(\theta\right)
\end{pmatrix} & =-\left[\sum_{i=1}^{q}\left(L^{i}\otimes b_{i}\right)\right]\begin{pmatrix}u_{1}\left(\theta\right)\\
\vdots\\
u_{T}\left(\theta\right)
\end{pmatrix}+\mathcal{A}\begin{pmatrix}y_{1}\\
\vdots\\
y_{T}
\end{pmatrix}+\delta\\
 & =-\left[L\begin{pmatrix}u_{1}'\left(\theta\right)\\
u_{2}'\left(\theta\right)\\
\vdots\\
u_{T}'\left(\theta\right)
\end{pmatrix}\otimes I_{n},L^{2}\begin{pmatrix}u_{1}'\left(\theta\right)\\
u_{2}'\left(\theta\right)\\
\vdots\\
u_{T}'\left(\theta\right)
\end{pmatrix}\otimes I_{n},\ldots,L^{q}\begin{pmatrix}u_{1}'\left(\theta\right)\\
u_{2}'\left(\theta\right)\\
\vdots\\
u_{T}'\left(\theta\right)
\end{pmatrix}\otimes I_{n}\right]\pi_{2}+\mathcal{A}\begin{pmatrix}y_{1}\\
\vdots\\
y_{T}
\end{pmatrix}+\delta,\\
 & =-\left[\begin{pmatrix}u_{0}'\left(\theta\right) & \cdots & u_{1-q}'\left(\theta\right)\\
\vdots &  & \vdots\\
u_{T-1}'\left(\theta\right) & \cdots & u_{T-q}'\left(\theta\right)
\end{pmatrix}\otimes I_{n}\right]\pi_{2}+\mathcal{A}\begin{pmatrix}y_{1}\\
\vdots\\
y_{T}
\end{pmatrix}+\delta,
\end{align*}
transposing and taking derivatives leads to 
\begin{gather*}
\frac{\partial\left(u_{1}'\left(\theta\right),\ldots,u_{T}'\left(\theta\right)\right)}{\partial\pi_{3}}=-\frac{\partial\left(u_{1}'\left(\theta\right),\ldots,u_{T}'\left(\theta\right)\right)}{\partial\pi_{3}}\left[\sum_{i=1}^{q}\left(\left(L'\right)^{i}\otimes b_{i}'\right)\right]-\left[\begin{pmatrix}u_{0}\left(\theta\right) & \cdots & u_{T-1}\left(\theta\right)\\
\vdots &  & \vdots\\
u_{1-q}\left(\theta\right) & \cdots & u_{T-q}\left(\theta\right)
\end{pmatrix}\otimes I_{n}\right]\\
\iff\frac{\partial\left(u_{1}'\left(\theta\right),\ldots,u_{T}'\left(\theta\right)\right)}{\partial\pi_{3}}\underbrace{\left[I_{Tn}+\sum_{i=1}^{q}\left(\left(L'\right)^{i}\otimes b_{i}'\right)\right]}_{=\mathcal{B}'}=-\left[\begin{pmatrix}u_{0}\left(\theta\right) & \cdots & u_{T-1}\left(\theta\right)\\
\vdots &  & \vdots\\
u_{1-q}\left(\theta\right) & \cdots & u_{T-q}\left(\theta\right)
\end{pmatrix}\otimes I_{n}\right]
\end{gather*}
which in turn is equivalent to 
\begin{align*}
\frac{\partial\left(u_{1}'\left(\theta\right),\ldots,u_{T}'\left(\theta\right)\right)}{\partial\pi_{3}} & =-\begin{pmatrix}\begin{pmatrix}u_{1}\left(\theta\right) & \cdots & u_{T}\left(\theta\right)\end{pmatrix}L'\otimes I_{n}\\
\vdots\\
\begin{pmatrix}u_{1}\left(\theta\right) & \cdots & u_{T}\left(\theta\right)\end{pmatrix}\left(L'\right)^{q}\otimes I_{n}
\end{pmatrix}\mathcal{B}'^{-1}\\
 & =-\left(I_{q}\otimes\begin{pmatrix}u_{1}\left(\theta\right) & \cdots & u_{T}\left(\theta\right)\end{pmatrix}\otimes I_{n}\right)\left[\begin{pmatrix}L'\\
\vdots\\
\left(L'\right)^{q}
\end{pmatrix}\otimes I_{n}\right]\mathcal{B}'^{-1}\\
 & =-\left(\begin{pmatrix}u_{0}\left(\theta\right) & \cdots & u_{T-1}\left(\theta\right)\\
\vdots &  & \vdots\\
u_{1-q}\left(\theta\right) & \cdots & u_{T-q}\left(\theta\right)
\end{pmatrix}\otimes I_{n}\right)\mathcal{B}'^{-1}=-\left(\begin{pmatrix}w_{0}\left(\theta\right) & \cdots & w_{T-1}\left(\theta\right)\end{pmatrix}\otimes I_{n}\right)\mathcal{B}'^{-1}=
\end{align*}

With the same notation as for the partial derivative with respect
to $\pi_{2}$, it follows that for one point in time, we obtain 
\[
\frac{\partial u'_{t}\left(\theta\right)}{\partial\pi_{3}}=-\left[\begin{pmatrix}w_{0}\left(\theta\right) & \cdots & w_{T-1}\left(\theta\right)\end{pmatrix}\otimes I_{n}\right]\mathfrak{b_{t}}.
\]

\paragraph{Result for $l_{\pi_{3},t}\left(\theta\right)$ for one point in time.}

Finally, we obtain for the partial derivative (and introduce the notation
$w_{b,t-1}\left(\theta\right)=\left(w_{t-1}\left(\theta\right)\otimes b'(z)^{-1}\right)$
as in the derivation of $l_{\pi_{2},t}\left(\theta\right)$) 
\begin{align*}
\frac{\partial l_{t}\left(\theta\right)}{\partial\pi_{3}} & =-\left[\begin{pmatrix}u_{t-1}\left(\theta\right)\\
\vdots\\
u_{t-q}\left(\theta\right)
\end{pmatrix}\otimes B'\left(\beta\right)^{-1}+\begin{pmatrix}\frac{\partial u_{t-1}'\left(\theta\right)}{\partial\pi_{3}} & \cdots & \frac{\partial u_{t-q}'\left(\theta\right)}{\partial\pi_{3}}\end{pmatrix}\begin{pmatrix}b_{1}'\\
\vdots\\
b_{q}'
\end{pmatrix}B'\left(\beta\right)^{-1}\right]\Sigma^{-1}e_{x,t}\left(\theta\right).\\
 & =-\left[\begin{pmatrix}w_{0}\left(\theta\right) & \cdots & w_{T-1}\left(\theta\right)\end{pmatrix}\otimes I_{n}\right]\mathfrak{b_{t}}\Sigma^{-1}e_{x,t}\left(\theta\right)\\
 & =-\left\{ \left[w_{t-1}\left(\theta\right)\otimes b'(z)^{-1}\right]\right\} \Sigma^{-1}e_{x,t}\left(\theta\right)=-w_{b,t-1}\left(\theta\right)B'\left(\beta\right)^{-1}\Sigma^{-1}e_{x,t}\left(\theta\right).
\end{align*}

Note that $u_{t-i}\left(\theta\right)$ can be expressed as a function
of the observations $y_{t-j},\ j\geq i$.

\paragraph{Result for $\frac{\partial L_{t}\left(\theta\right)}{\partial\pi_{3}}$.}

Finally, we obtain for the partial derivative of the standardized
log-likelihood function with respect to $\pi_{3}$ that
\begin{align*}
\frac{\partial L_{t}\left(\theta\right)}{\partial\pi_{3}} & =\frac{1}{T}\sum_{i=1}^{T}l_{\pi_{3},t}\left(\theta\right)\\
 & =-\frac{1}{T}\left(I_{q}\otimes\begin{pmatrix}u_{1}\left(\theta\right) & \cdots & u_{T}\left(\theta\right)\end{pmatrix}\otimes I_{n}\right)\begin{pmatrix}L'\otimes I_{n}\\
\vdots\\
\left(L'\right)^{q}\otimes I_{n}
\end{pmatrix}\mathcal{B}'^{-1}\left[\left(I_{T}\otimes\Sigma^{-1}B'\left(\beta\right)^{-1}\right)\begin{pmatrix}e_{x,1}\left(\theta\right)\\
\vdots\\
e_{x,T}\left(\theta\right)
\end{pmatrix}\right]\\
 & =-\frac{1}{T}\left[\begin{pmatrix}w_{0}\left(\theta\right) & \cdots & w_{T-1}\left(\theta\right)\end{pmatrix}\otimes I_{n}\right]\mathcal{B}'^{-1}\left[\left(I_{T}\otimes\Sigma^{-1}B'\left(\beta\right)^{-1}\right)\begin{pmatrix}e_{x,1}\left(\theta\right)\\
\vdots\\
e_{x,T}\left(\theta\right)
\end{pmatrix}\right]
\end{align*}

\subsection{Partial Derivative with respect to $\beta$}

\paragraph{Intermediate step for $l_{\beta,t}\left(\theta\right)$.}

By taking the derivative of \eqref{eq:likelihood_individual}, we
obtain for $\beta\in\mathbb{R}^{n(n-1)}${\scriptsize{}
\begin{align*}
\frac{\partial l_{t}\left(\theta\right)}{\partial\beta} & =\frac{\partial}{\partial\beta}\left\{ \sum_{i=1}^{n}\log\left[f_{i}\left(\sigma_{i}^{-1}\iota_{i}^{'}B\left(\beta\right)^{-1}u_{t}\left(\theta\right);\lambda_{i}\right)\right]\right\} -\frac{\partial\log\left\{ \det\left[B\left(\beta\right)\right]\right\} }{\partial\beta}\\
 & =\sum_{i=1}^{n}e_{i,x,t}\left(\theta\right)\sigma_{i}^{-1}\frac{\partial}{\partial\beta}\left(\frac{1}{2}u_{t}'\left(\theta\right)B'\left(\beta\right)^{-1}\iota_{i}+\frac{1}{2}vec\left(\iota_{i}^{'}B\left(\beta\right)^{-1}u_{t}\left(\theta\right)\right)\right)-\frac{1}{\det\left(B\left(\beta\right)\right)}\frac{\partial\det\left(B\left(\beta\right)\right)}{\partial\beta}\\
 & =\sum_{i=1}^{n}e_{i,x,t}\left(\theta\right)\sigma_{i}^{-1}\frac{\partial}{\partial\beta}\left(\frac{1}{2}u_{t}'\left(\theta\right)B'\left(\beta\right)^{-1}\iota_{i}+\frac{1}{2}\underbrace{\left[\left(u_{t}'\left(\theta\right)\otimes\iota_{i}^{'}\right)vec\left(B\left(\beta\right)^{-1}\right)\right]}_{=\text{scalar}}\right)-\frac{1}{\det\left(B\left(\beta\right)\right)}\frac{\partial\det\left(B\left(\beta\right)\right)}{\partial\beta}\\
 & =\sum_{i=1}^{n}e_{i,x,t}\left(\theta\right)\sigma_{i}^{-1}\left\{ \left(\frac{\partial u_{t}'\left(\theta\right)}{\partial\beta}\right)B'\left(\beta\right)^{-1}\iota_{i}+\left[\frac{\partial vec\left(B\left(\beta\right)^{-1}\right)^{'}}{\partial\beta}\left(u_{t}\left(\theta\right)\otimes\iota_{i}\right)\right]\right\} -\frac{1}{\det\left(B\left(\beta\right)\right)}\frac{\partial\det\left(B\left(\beta\right)\right)}{\partial\beta}\\
 & =\left(\frac{\partial u_{t}'\left(\theta\right)}{\partial\beta}\right)B'\left(\beta\right)^{-1}\Sigma^{-1}e_{x,t}\left(\theta\right)+\left[\frac{\partial vec\left(B\left(\beta\right)^{-1}\right)^{'}}{\partial\beta}\left(u_{t}\left(\theta\right)\otimes\Sigma^{-1}e_{x,t}\left(\theta\right)\right)\right]-\frac{1}{\det\left(B\left(\beta\right)\right)}\frac{\partial\det\left(B\left(\beta\right)\right)}{\partial\beta}
\end{align*}
}where we used the following matrix differentiation rules.

\paragraph{Matrix differentiation rules.}

We obtain from \citet{seber08} 17.33(b), page 363, that\footnote{This result can be obtained by taking the derivative of $FF^{-1}=I\Rightarrow F\frac{\partial F^{-1}}{\partial x_{j}}+\frac{\partial F}{\partial x_{j}}F^{-1}=0$
such that $FF^{-1}=I\Rightarrow F\frac{\partial F^{-1}}{\partial x_{j}}+\frac{\partial F}{\partial x_{j}}F^{-1}=0$
results. Vectorization of $\frac{\partial F^{-1}}{\partial x_{j}}=-F^{-1}\frac{\partial F}{\partial x_{j}}F^{-1}$gives
the desired result, see \citet{Harville97} page 366.}
\[
\frac{\partial vec\left(F^{-1}\right)}{\partial x'}=-\left(F'^{-1}\otimes F^{-1}\right)\frac{\partial vec\left(F\right)}{\partial x'}\quad\text{and}\quad\frac{\partial vec\left(F^{-1}\right)'}{\partial x}=-\frac{\partial vec\left(F\right)'}{\partial x}\left(F^{-1}\otimes F'^{-1}\right)
\]
Moreover, we obtain from \citet{seber08}17.26(c), page 361, for the
derivative of the determinant that $\frac{\partial\det\left(Z\right)}{\partial x'}=vec\left[adj\left(Z\right)'\right]'\frac{\partial vec\left(Z\right)}{\partial x'}=\det\left(Z\right)vec\left(Z'^{-1}\right)'\frac{\partial vec\left(Z\right)}{\partial x'}$
or equivalently $\frac{\partial\det\left(Z\right)}{\partial x}=\det\left(Z\right)\frac{\partial vec\left(Z\right)}{\partial x}vec\left(Z'^{-1}\right)$.

\paragraph{Intermediate results for $l_{\beta,t}\left(\theta\right)$ using
matrix differentiation rules.}

Using the results above, we obtain, using additionally $vec\left(B(\beta)\right)=H\beta+vec\left(I_{n}\right)$
and thus $\frac{\partial}{\partial\beta'}vec\left(B(\beta)\right)=H$,
that {\scriptsize{}
\begin{align}
\frac{\partial l_{t}\left(\theta\right)}{\partial\beta} & =\left(\frac{\partial u_{t}'\left(\theta\right)}{\partial\beta}\right)B'\left(\beta\right)^{-1}\Sigma^{-1}e_{x,t}\left(\theta\right)+\left[\frac{\partial vec\left(B\left(\beta\right)^{-1}\right)^{'}}{\partial\beta}\left(u_{t}\left(\theta\right)\otimes\Sigma^{-1}e_{x,t}\left(\theta\right)\right)\right]-\frac{1}{\det\left(B\left(\beta\right)\right)}\frac{\partial\det\left(B\left(\beta\right)\right)}{\partial\beta}\nonumber \\
 & =\left(\frac{\partial u_{t}'\left(\theta\right)}{\partial\beta}\right)B'\left(\beta\right)^{-1}\Sigma^{-1}e_{x,t}\left(\theta\right)-H'\left(B\left(\beta\right)^{-1}\otimes B'\left(\beta\right)^{-1}\right)\left(u_{t}\left(\theta\right)\otimes\Sigma^{-1}e_{x,t}\left(\theta\right)\right)-\frac{1}{\det\left(B\left(\beta\right)\right)}\left(\det\left(B\left(\beta\right)\right)\frac{\partial vec\left(B\left(\beta\right)\right)}{\partial\beta}vec\left(B'\left(\beta\right)^{-1}\right)\right)\nonumber \\
 & =\left(\frac{\partial u_{t}'\left(\theta\right)}{\partial\beta}\right)B'\left(\beta\right)^{-1}\Sigma^{-1}e_{x,t}\left(\theta\right)-H'\left(B\left(\beta\right)^{-1}u_{t}\left(\theta\right)\otimes B'\left(\beta\right)^{-1}\Sigma^{-1}e_{x,t}\left(\theta\right)\right)-H'vec\left(B'\left(\beta\right)^{-1}\right)\label{eq:lik_beta}
\end{align}
}{\scriptsize\par}

\paragraph{The derivative of $u_{t}$ with respect to $\beta$ for one equation.}

From
\[
u_{t}\left(\theta\right)=y_{t}-\left(a_{1},\ldots,a_{p}\right)\begin{pmatrix}y_{t-1}\\
\vdots\\
y_{t-p}
\end{pmatrix}-\left(b_{1},\ldots,b_{q}\right)\begin{pmatrix}u_{t-1}\left(\theta\right)\\
\vdots\\
u_{t-q}\left(\theta\right)
\end{pmatrix}
\]
we obtain immediately
\[
\frac{\partial u_{t}'\left(\theta\right)}{\partial\beta}=-\begin{pmatrix}\frac{\partial u_{t-1}'\left(\theta\right)}{\partial\beta} & \cdots & \frac{\partial u_{t-q}'\left(\theta\right)}{\partial\beta}\end{pmatrix}\begin{pmatrix}b_{1}'\\
\vdots\\
b_{q}'
\end{pmatrix}.
\]
Additionally, an explicit expression for the derivative of $u_{t}\left(\theta\right)=B\left(\beta\right)\varepsilon_{t}\left(\theta\right)=\left(\varepsilon_{t}'\left(\theta\right)\otimes I_{n}\right)vec\left(B\left(\beta\right)\right)$
with respect to $\beta$ can be found as $\frac{\partial u_{t}'\left(\theta\right)}{\partial\beta}=H'\left(\varepsilon_{t}\left(\theta\right)\otimes I_{n}\right)$
and subsequently combined with the quantity above. We thus obtain
\begin{align*}
\frac{\partial u_{t}'\left(\theta\right)}{\partial\beta} & =-H'\left[\left(\varepsilon_{t-1}\left(\theta\right)\otimes I_{n}\right),\ldots,\left(\varepsilon_{t-q}\left(\theta\right)\otimes I_{n}\right)\right]\begin{pmatrix}b_{1}'\\
\vdots\\
b_{q}'
\end{pmatrix}\\
 & =-H'\left[\left(\varepsilon_{t-1}\left(\theta\right),\ldots,\varepsilon_{t-q}\left(\theta\right)\right)\otimes I_{n}\right]\begin{pmatrix}b_{1}'\\
\vdots\\
b_{q}'
\end{pmatrix}\\
 & =-H'\sum_{i=1}^{q}\left(\varepsilon_{t-i}\left(\theta\right)\otimes b_{i}'\right)=-H'\sum_{i=1}^{q}\left(B\left(\beta\right)^{-1}u_{t-i}\left(\theta\right)\otimes b_{i}'\right).
\end{align*}
Note that $\frac{\partial u_{t}'\left(\theta\right)}{\partial\beta}$
depends only on $\mathcal{F}_{t-1}$.

\paragraph{Result for $l_{\beta,t}\left(\theta\right)$ for one point in time.}

The above leads to {\scriptsize{}
\begin{align*}
\frac{\partial l_{t}\left(\theta\right)}{\partial\beta} & =\left(\frac{\partial u_{t}'\left(\theta\right)}{\partial\beta}\right)B'\left(\beta\right)^{-1}\Sigma^{-1}e_{x,t}\left(\theta\right)-H'\left(B\left(\beta\right)^{-1}u_{t}\left(\theta\right)\otimes B'\left(\beta\right)^{-1}\Sigma^{-1}e_{x,t}\left(\theta\right)\right)-H'vec\left(B'\left(\beta\right)^{-1}\right)\\
 & =-\begin{pmatrix}\frac{\partial u_{t-1}'\left(\theta\right)}{\partial\beta} & \cdots & \frac{\partial u_{t-q}'\left(\theta\right)}{\partial\beta}\end{pmatrix}\begin{pmatrix}b_{1}'\\
\vdots\\
b_{q}'
\end{pmatrix}B'\left(\beta\right)^{-1}\Sigma^{-1}e_{x,t}\left(\theta\right)-H'\left(B\left(\beta\right)^{-1}u_{t}\left(\theta\right)\otimes B'\left(\beta\right)^{-1}\Sigma^{-1}e_{x,t}\left(\theta\right)\right)-H'vec\left(B'\left(\beta\right)^{-1}\right)\\
 & =-H'\sum_{i=1}^{q}\left(B\left(\beta\right)^{-1}u_{t-i}\left(\theta\right)\otimes b_{i}'B'\left(\beta\right)^{-1}\Sigma^{-1}e_{x,t}\left(\theta\right)\right)-H'\left(B\left(\beta\right)^{-1}u_{t}\left(\theta\right)\otimes B'\left(\beta\right)^{-1}\Sigma^{-1}e_{x,t}\left(\theta\right)\right)-H'vec\left(B'\left(\beta\right)^{-1}\right)\\
 & =-H'\sum_{i=1}^{q}\left(B\left(\beta\right)^{-1}\otimes b_{i}'\right)\left(u_{t-i}\left(\theta\right)\otimes B'\left(\beta\right)^{-1}\Sigma^{-1}e_{x,t}\left(\theta\right)\right)-H'\left(B\left(\beta\right)^{-1}\otimes B'\left(\beta\right)^{-1}\Sigma^{-1}\right)\left(u_{t}\left(\theta\right)\otimes e_{x,t}\left(\theta\right)\right)-H'vec\left(B'\left(\beta\right)^{-1}\right)\\
 & =-H'\left[B\left(\beta\right)^{-1}\otimes\begin{pmatrix}I_{n} & b_{1}' & \cdots & b_{q}'\end{pmatrix}\right]\left[\begin{pmatrix}u_{t}\left(\theta\right)\\
u_{t-1}\left(\theta\right)\\
\vdots\\
u_{t-q}\left(\theta\right)
\end{pmatrix}\otimes\left(B'\left(\beta\right)^{-1}\Sigma^{-1}e_{x,t}\left(\theta\right)\right)\right]-H'vec\left(B'\left(\beta\right)^{-1}\right)
\end{align*}
}{\scriptsize\par}

\paragraph{Result for $\frac{\partial L_{t}\left(\theta\right)}{\partial\beta}$.}

Finally, we obtain for the partial derivative of the standardized
log-likelihood function with respect to $\beta$ that
\begin{align*}
\frac{\partial L_{t}\left(\theta\right)}{\partial\beta} & =\frac{1}{T}\sum_{t=1}^{T}l_{\beta,t}\left(\theta\right)\\
 & =-\frac{1}{T}H'\left[B\left(\beta\right)^{-1}\otimes\begin{pmatrix}I_{n} & b_{1}' & \cdots & b_{q}'\end{pmatrix}\right]\sum_{t=1}^{T}\left[\begin{pmatrix}u_{t}\left(\theta\right)\\
u_{t-1}\left(\theta\right)\\
\vdots\\
u_{t-q}\left(\theta\right)
\end{pmatrix}\otimes\left(B'\left(\beta\right)^{-1}\Sigma^{-1}e_{x,t}\left(\theta\right)\right)\right]-H'vec\left(B'\left(\beta\right)^{-1}\right)
\end{align*}

{\scriptsize{}}{\scriptsize\par}

\subsection{Partial Derivative with respect to $\sigma$}

Since the individual contribution to the (standardized) log-likelihood
function is

\[
l_{t}\left(\theta\right)=\sum_{i=1}^{n}\log\left[f_{i}\left(\sigma_{i}^{-1}\iota_{i}^{'}B\left(\beta\right)^{-1}u_{t}\left(\theta\right);\lambda_{i}\right)\right]-\log\left\{ \det\left[B\left(\beta\right)\right]\right\} -\sum_{i=1}^{n}\log\left(\sigma_{i}\right),
\]
we obtain that 
\begin{align*}
\frac{\partial}{\partial\sigma}l_{t}\left(\theta\right) & =\sum_{i=1}^{n}e_{i,x,t}\left(\theta\right)\left(-\iota_{i}\sigma_{i}^{-2}\right)\iota_{i}^{'}\underbrace{B\left(\beta\right)^{-1}u_{t}\left(\theta\right)}_{=\varepsilon_{t}\left(\theta\right)}-\underbrace{\sum_{i=1}^{n}\iota_{i}\sigma_{i}^{-1}}_{=\Sigma^{-2}\sigma}\\
 & =-\sum_{i=1}^{n}\sigma_{i}^{-2}\left(\iota_{i}\iota_{i}^{'}\right)e_{i,x,t}\left(\theta\right)\varepsilon_{t}\left(\theta\right)-\Sigma^{-2}\sigma\\
 & =-\Sigma^{-2}\left[e_{x,t}\left(\theta\right)\odot\varepsilon_{t}\left(\theta\right)+\sigma\right]
\end{align*}
where $\odot$ denotes element-wise multiplication. 

The partial derivative of $l_{t}\left(\theta\right)$ with respect
to $\sigma$ is thus identical to the one derived in \citet{LMS_svarIdent16}.

\paragraph{Result for $\frac{\partial L_{t}\left(\theta\right)}{\partial\sigma}$.}

Finally, we obtain for the partial derivative of the standardized
log-likelihood function with respect to $\beta$ that
\begin{align*}
\frac{\partial L_{t}\left(\theta\right)}{\partial\sigma} & =\frac{1}{T}\sum_{t=1}^{T}l_{\sigma,t}\left(\theta\right)\\
 & =-\frac{1}{T}\Sigma^{-2}\left(\sum_{t=1}^{T}e_{x,t}\left(\theta\right)\odot\varepsilon_{t}\left(\theta\right)\right)-\begin{pmatrix}\sigma_{1}^{-1}\\
\vdots\\
\sigma_{n}^{-1}
\end{pmatrix}
\end{align*}

\subsection{Partial Derivative with respect to $\lambda$}

Analogous to $l_{\sigma,t}\left(\theta\right)$, the partial derivative
of $l_{t}\left(\theta\right)$ with respect to $\lambda$ is identical
to the one derived in \citet{LMS_svarIdent16}, i.e. $\frac{\partial}{\partial\lambda_{i}}l_{t}\left(\theta\right)=e_{i,\lambda_{i},t}$
for all $i$.

\pagebreak{}

\section{\label{sec:score_MDS}The Score is a Martingale Difference Sequence
at the true $\theta_{0}$}

Here we show that the score is indeed a martingale difference sequence
such that we may apply the central limit theorem (CLT) for martingale
difference sequences. For the reader's convenience, we repeat here
that {\scriptsize{}
\begin{align}
\frac{\partial l_{t}\left(\theta\right)}{\partial\pi_{2}} & =\frac{\partial u_{t}\left(\theta\right)'}{\partial\pi_{2}}B'\left(\beta\right)^{-1}\Sigma^{-1}e_{x,t}\nonumber \\
 & =-x_{b,t-1}\left(\theta\right)B'\left(\beta\right)^{-1}\Sigma^{-1}e_{x,t}\left(\theta\right)\label{eq:mds_lik_pi2_poly}\\
\frac{\partial l_{t}\left(\theta\right)}{\partial\pi_{3}} & =\frac{\partial u_{t}\left(\theta\right)'}{\partial\pi_{3}}B'\left(\beta\right)^{-1}\Sigma^{-1}e_{x,t}\left(\theta\right)\nonumber \\
 & =-w_{b,t-1}\left(\theta\right)B'\left(\beta\right)^{-1}\Sigma^{-1}e_{x,t}\left(\theta\right)\label{eq:mds_lik_pi3_poly}\\
\frac{\partial l_{t}\left(\theta\right)}{\partial\beta} & =\left(\frac{\partial u_{t}'\left(\theta\right)}{\partial\beta}\right)B'\left(\beta\right)^{-1}\Sigma^{-1}e_{x,t}\left(\theta\right)-H\left(B\left(\beta\right)^{-1}u_{t}\left(\theta\right)\otimes B'\left(\beta\right)^{-1}\Sigma^{-1}e_{x,t}\left(\theta\right)\right)-H'vec\left(B'\left(\beta\right)^{-1}\right)\nonumber \\
 & =-H'\sum_{i=1}^{q}\left(B\left(\beta\right)^{-1}u_{t-i}\left(\theta\right)\otimes b_{i}'B'\left(\beta\right)^{-1}\Sigma^{-1}e_{x,t}\left(\theta\right)\right)-H'\left(B\left(\beta\right)^{-1}u_{t}\left(\theta\right)\otimes B'\left(\beta\right)^{-1}\Sigma^{-1}e_{x,t}\left(\theta\right)\right)-H'vec\left(B'\left(\beta\right)^{-1}\right)\label{eq:mds_lik_beta}\\
\frac{\partial}{\partial\sigma}l_{t}\left(\theta\right) & =-\Sigma^{-2}\left[e_{x,t}\left(\theta\right)\odot\varepsilon_{t}\left(\theta\right)+\sigma\right]\label{eq:mds_lik_sigma}\\
\frac{\partial}{\partial\lambda}l_{t}\left(\theta\right) & =e_{\lambda,t}\left(\theta\right).\nonumber 
\end{align}
}{\scriptsize\par}

\paragraph{Derivative with respect to $\pi_{2}$ and $\pi_{3}$ is an MDS.}

The first term in \eqref{eq:mds_lik_pi2_poly} and \eqref{eq:mds_lik_pi3_poly}
depends only on $\mathcal{F}_{t-1}$ from which
\[
\mathbb{E}_{t-1}\left(l_{\pi_{2},t}\left(\theta_{0}\right)\right)=-x_{b,t-1}\left(\theta\right)B'\left(\beta_{0}\right)^{-1}\Sigma_{0}^{-1}\underbrace{\mathbb{E}_{t-1}\left(e_{x,t}\left(\theta_{0}\right)\right)}_{=0}=0
\]
and 
\[
\mathbb{E}_{t-1}\left(l_{\pi_{3},t}\left(\theta_{0}\right)\right)=-w_{b,t-1}\left(\theta_{0}\right)\Sigma_{0}^{-1}\underbrace{\mathbb{E}_{t-1}\left(e_{x,t}\left(\theta_{0}\right)\right)}_{=0}=0
\]

\paragraph{Derivative with respect to $\beta$ is an MDS.}

We analyze the summands in \eqref{eq:mds_lik_beta} separately. For
the first one, i.e. 
\[
H'\sum_{i=1}^{q}\left(B\left(\beta\right)^{-1}u_{t-i}\left(\theta\right)\otimes b_{i}'B'\left(\beta\right)^{-1}\Sigma^{-1}e_{x,t}\left(\theta\right)\right),
\]
it follows in the same way as for $\pi_{2}$ and $\pi_{3}$ that 
\[
\mathbb{E}_{t-1}\left(H'\sum_{i=1}^{q}\left(B\left(\beta_{0}\right)^{-1}u_{t-i}\left(\theta_{0}\right)\otimes b_{i}'B'\left(\beta_{0}\right)^{-1}\Sigma_{0}^{-1}e_{x,t}\left(\theta\right)\right)\right)=H'\sum_{i=1}^{q}\left(B\left(\beta\right)^{-1}u_{t-i}\left(\theta\right)\otimes b_{i}'B'\left(\beta\right)^{-1}\Sigma^{-1}\right)\underbrace{\mathbb{E}_{t-1}\left(e_{x,t}\left(\theta_{0}\right)\right)}_{=0}=0.
\]

For the second term evaluated at the true parameter value $\theta_{0}$,
i.e. $H\left(B\left(\beta_{0}\right)^{-1}u_{t}\left(\theta_{0}\right)\otimes B'\left(\beta_{0}\right)^{-1}\Sigma_{0}^{-1}e_{x,t}\left(\theta_{0}\right)\right)$,
we obtain from the fact that $B\left(\beta_{0}\right)^{-1}u_{t}\left(\theta_{0}\right)=\varepsilon_{t}\left(\theta_{0}\right)$
and from Lemma B1.(v) in \citet{LMS_svarIdent16}, i.e. $\mathbb{E}\left(\varepsilon_{i,t}\left(\theta_{0}\right)e_{i,x,t}\left(\theta_{0}\right)\right)=-\sigma_{i,0}$,
that

\begin{align}
\mathbb{E}\left(H'\left(\varepsilon_{t}\left(\theta_{0}\right)\otimes B'\left(\beta_{0}\right)^{-1}\Sigma^{-1}e_{x,t}\left(\theta_{0}\right)\right)\right) & =H'\mathbb{E}\left(vec\left(B'\left(\beta_{0}\right)^{-1}\Sigma^{-1}e_{x,t}\left(\theta_{0}\right)\varepsilon_{t}'\left(\theta_{0}\right)\right)\right)\nonumber \\
 & =H'vec\left[B'\left(\beta_{0}\right)^{-1}\Sigma^{-1}\underbrace{\mathbb{E}\left(e_{x,t}\left(\theta_{0}\right)\varepsilon_{t}'\left(\theta_{0}\right)\right)}_{=-\Sigma}\right]\nonumber \\
 & =-H'vec\left(B'\left(\beta_{0}\right)^{-1}\right).\label{eq:mds_exp_beta}
\end{align}

To sum up, we obtain that $\mathbb{E}_{t-1}\left(l_{\beta,t}\left(\theta_{0}\right)\right)=0+H'vec\left(B'\left(\beta_{0}\right)^{-1}\right)-H'vec\left(B'\left(\beta_{0}\right)^{-1}\right)=0$.

\paragraph{Derivative with respect to $\sigma$ is an MDS.}

Starting from \eqref{eq:mds_lik_sigma}, we obtain from Lemma B1.(v)
in \citet{LMS_svarIdent16}, i.e. $\mathbb{E}\left(\varepsilon_{i,t}\left(\theta_{0}\right)e_{i,x,t}\left(\theta_{0}\right)\right)=-\sigma_{i,0}$,
that $\mathbb{E}_{t-1}\left(l_{\sigma,t}\left(\theta_{0}\right)\right)=\mathbb{E}\left(l_{\sigma,t}\left(\theta_{0}\right)\right)=0.$

\paragraph{Derivative with respect to $\lambda$ is an MDS.}

Identitically to \citet{LMS_svarIdent16}, we obtain from Lemma B1.(iii)
in \citet{LMS_svarIdent16}, i.e. $\mathbb{E}\left(e_{i,\lambda_{i},t}\left(\theta_{0}\right)\right)=0$,
that $\mathbb{E}_{t-1}\left(l_{\lambda_{i},t}\left(\theta_{0}\right)\right)=\mathbb{E}\left(e_{i,\lambda_{i},t}\left(\theta_{0}\right)\right)=0.$

\pagebreak{}

\section{\label{sec:opg_expression}Covariance of the Score evaluated at $\theta_{0}$}

Here we will calculate the covariance matrix of the score, evaluated
at the true parameter value $\theta_{0}$, i.e.
\[
\mathbb{E}\left(\begin{pmatrix}l_{\pi_{2},t}\left(\theta_{0}\right)\\
l_{\pi_{3},t}\left(\theta_{0}\right)\\
l_{\beta,t}\left(\theta_{0}\right)\\
l_{\sigma,t}\left(\theta_{0}\right)\\
l_{\lambda,t}\left(\theta_{0}\right)
\end{pmatrix}\begin{pmatrix}l_{\pi_{2},t}'\left(\theta_{0}\right) & l_{\pi_{3},t}'\left(\theta_{0}\right) & l_{\beta,t}'\left(\theta_{0}\right) & l_{\sigma,t}'\left(\theta_{0}\right) & l_{\lambda,t}'\left(\theta_{0}\right)\end{pmatrix}\right).
\]
We will start with the first block of rows in this matrix and derive
all terms involving $\pi_{2}.$ Subsequently, we will do the same
for all remaining terms involving $\pi_{3}$ and the other variables,
taking the symmetry of the covariance matrix into account.

To repeat, we will work here with 
\begin{align}
\frac{\partial l_{t}\left(\theta_{0}\right)}{\partial\pi_{2}} & =-x_{b,t-1}\left(\theta_{0}\right)B'\left(\beta_{0}\right)^{-1}\Sigma_{0}^{-1}e_{x,t}\left(\theta_{0}\right)\nonumber \\
\frac{\partial l_{t}\left(\theta_{0}\right)}{\partial\pi_{3}} & =-w_{b,t-1}\left(\theta_{0}\right)\Sigma_{0}^{-1}e_{x,t}\left(\theta_{0}\right)\nonumber \\
\frac{\partial l_{t}\left(\theta_{0}\right)}{\partial\beta} & =H'\sum_{i=1}^{q}\left(I_{n}\otimes b_{i}'B'\left(\beta_{0}\right)^{-1}\Sigma_{0}^{-1}\right)\left(\varepsilon_{t-i}\left(\theta_{0}\right)\otimes e_{x,t}\left(\theta_{0}\right)\right)\nonumber \\
 & \quad-H'\left(I_{n}\otimes B'\left(\beta_{0}\right)^{-1}\Sigma_{0}^{-1}\right)\left(\varepsilon_{t}\left(\theta_{0}\right)\otimes e_{x,t}\left(\theta_{0}\right)\right)-H'vec\left(B'\left(\beta_{0}\right)^{-1}\right)\label{eq:cov_lik_beta}\\
\frac{\partial}{\partial\sigma}l_{t}\left(\theta_{0}\right) & =-\Sigma_{0}^{-2}\left[e_{x,t}\left(\theta_{0}\right)\odot\varepsilon_{t}\left(\theta_{0}\right)+\sigma_{0}\right]\nonumber \\
\frac{\partial}{\partial\lambda}l_{t}\left(\theta_{0}\right) & =e_{\lambda,t}\left(\theta_{0}\right).\nonumber 
\end{align}

\subparagraph{Summary of results.}

In the following subsections, we will derive
\begin{align*}
\mathbb{E}\left(l_{\pi_{2},t}\left(\theta_{0}\right)l_{\pi_{2},t}'\left(\theta_{0}\right)\right) & =\mathbb{E}\left(x_{b,t-1}\left(\theta_{0}\right)\right)B'\left(\beta_{0}\right)^{-1}\Sigma_{0}^{-1}\mathbb{E}\left(e_{x,t}\left(\theta_{0}\right)e_{x,t}'\left(\theta_{0}\right)\right)\Sigma_{0}^{-1}B\left(\beta_{0}\right)^{-1}\mathbb{E}\left(x_{b,t-1}'\left(\theta_{0}\right)\right)\\
\mathbb{E}\left(l_{\pi_{2},t}\left(\theta_{0}\right)l_{\pi_{3},t}'\left(\theta_{0}\right)\right) & =\mathbb{E}\left(x_{b,t-1}\left(\theta_{0}\right)\right)B'\left(\beta_{0}\right)^{-1}\Sigma_{0}^{-1}V_{e_{x}}\Sigma_{0}^{-1}B\left(\beta_{0}\right)^{-1}w_{b,t-1}'\left(\theta_{0}\right)\\
\mathbb{E}\left(l_{\pi_{2},t}\left(\theta_{0}\right)l_{\beta,t}'\left(\theta_{0}\right)\right) & =\mathbb{E}\left(x_{b,t-1}\left(\theta_{0}\right)\right)\mathfrak{b}_{t}B'\left(\beta_{0}\right)^{-1}\Sigma_{0}^{-1}\mathbb{E}\left[\varepsilon_{t}\left(\theta_{0}\right)\otimes\left(e_{x,t}\left(\theta_{0}\right)e_{x,t}'\left(\theta_{0}\right)\right)\right]\left(I_{n}\otimes\Sigma_{0}^{-1}B\left(\beta_{0}\right)^{-1}\right)H\\
\mathbb{E}\left(l_{\pi_{2},t}\left(\theta_{0}\right)\left(l_{\sigma,t}\left(\theta_{0}\right)\right)^{'}\right) & =-\mathbb{E}\left(x_{b,t-1}\left(\theta_{0}\right)\right)B'\left(\beta\right)^{-1}\Sigma^{-1}\mathbb{E}\left[e_{x,t}\left(\theta_{0}\right)\left(e_{x,t}'\left(\theta_{0}\right)\odot\varepsilon_{t}'\left(\theta_{0}\right)\right)\right]\Sigma_{0}^{-2}\\
\mathbb{E}\left(l_{\pi_{2},t}\left(\theta_{0}\right)l_{\lambda,t}'\left(\theta_{0}\right)\right) & =-x_{b,t-1}\left(\theta_{0}\right)B'\left(\beta\right)^{-1}\Sigma^{-1}\mathbb{E}\left(e_{x,t}\left(\theta_{0}\right)e_{\lambda,t}'\left(\theta_{0}\right)\right)\\
\mathbb{E}\left(l_{\pi_{3},t}\left(\theta_{0}\right)l_{\pi_{3},t}'\left(\theta_{0}\right)\right) & =\mathbb{E}\left(w_{b,t-1}\left(\theta_{0}\right)\right)B'\left(\beta_{0}\right)^{-1}\Sigma_{0}^{-1}V_{e_{x}}\Sigma_{0}^{-1}B\left(\beta_{0}\right)^{-1}\mathbb{E}\left(w_{b,t-1}'\left(\theta_{0}\right)\right)\\
\mathbb{E}\left(l_{\pi_{2},t}\left(\theta_{0}\right)l_{\beta,t}'\left(\theta_{0}\right)\right) & =\mathbb{E}\left(w_{b,t-1}\left(\theta_{0}\right)\right)B'\left(\beta_{0}\right)^{-1}\Sigma_{0}^{-1}\mathbb{E}\left[\varepsilon_{t}\left(\theta_{0}\right)\otimes\left(e_{x,t}\left(\theta_{0}\right)e_{x,t}'\left(\theta_{0}\right)\right)\right]\left(I_{n}\otimes\Sigma_{0}^{-1}B\left(\beta_{0}\right)^{-1}\right)H.\\
\mathbb{E}\left(l_{\pi_{3},t}\left(\theta_{0}\right)l_{\sigma,t}'\left(\theta_{0}\right)\right) & =\mathbb{E}\left(w_{b,t-1}\left(\theta_{0}\right)\right)B'\left(\beta\right)^{-1}\Sigma^{-1}\left\{ \mathbb{E}\left[e_{x,t}\left(\theta_{0}\right)\left(e_{x,t}'\left(\theta_{0}\right)\odot\varepsilon_{t}'\left(\theta_{0}\right)\right)\right]+\underbrace{\mathbb{E}\left[e_{x,t}\left(\theta_{0}\right)\sigma_{0}'\right]}_{=0}\right\} \Sigma_{0}^{-2}.\\
\mathbb{E}\left(l_{\pi_{3},t}\left(\theta_{0}\right)l_{\lambda,t}'\left(\theta_{0}\right)\right) & =-\text{\ensuremath{\mathbb{E}}}\left(w_{b,t-1}\left(\theta_{0}\right)\right)B'\left(\beta\right)^{-1}\Sigma^{-1}\mathbb{E}\left(e_{x,t}\left(\theta_{0}\right)e_{\lambda,t}'\left(\theta_{0}\right)\right)\\
\mathbb{E}\left(l_{\beta,t}\left(\theta_{0}\right)l_{\beta,t}'\left(\theta_{0}\right)\right) & =H'\left[\sum_{i=1}^{q}\left(\Sigma\otimes b_{i}'B'\left(\beta_{0}\right)^{-1}\Sigma^{-1}V_{e_{x}}\Sigma^{-1}B\left(\beta_{0}\right)^{-1}b_{i}\right)\right]H\\
 & \quad+H'\left(I_{n}\otimes B'(\beta_{0})^{-1}\Sigma_{0}^{-1}\right)\mathbb{E}\left(\varepsilon_{t}\left(\theta_{0}\right)\varepsilon_{t}'\left(\theta_{0}\right)\otimes e_{x,t}\left(\theta_{0}\right)e_{x,t}'\left(\theta_{0}\right)\right)\left(I_{n}\otimes\Sigma_{0}'^{-1}B\left(\beta_{0}\right)^{-1}\right)H\\
 & \quad-H'vec\left(B'\left(\beta_{0}\right)^{-1}\right)vec\left(B'\left(\beta_{0}\right)^{-1}\right)'H.\\
\mathbb{E}\left(l_{\beta,t}\left(\theta_{0}\right)l_{\sigma,t}'\left(\theta_{0}\right)\right) & =H'\left(I_{n}\otimes B'\left(\beta_{0}\right)^{-1}\Sigma_{0}^{-1}\right)\mathbb{E}\left\{ \left(\varepsilon_{t}\left(\theta\right)\otimes e_{x,t}\left(\theta\right)\left[e_{x,t}'\left(\theta_{0}\right)\odot\varepsilon_{t}'\left(\theta_{0}\right)\right]\right)\right\} \Sigma_{0}^{-2}+H'vec\left(B'\left(\beta\right)^{-1}\right)\sigma_{0}'\Sigma_{0}^{-2}\\
\mathbb{E}\left(l_{\beta,t}\left(\theta_{0}\right)l_{\lambda,t}'\left(\theta_{0}\right)\right) & =-H'\left(I_{n}\otimes B'(\beta)^{-1}\Sigma^{-1}\right)\mathbb{E}\left(\varepsilon_{t}\left(\theta_{0}\right)\otimes e_{x,t}\left(\theta_{0}\right)e_{i,\lambda,t}'\right)\\
\mathbb{E}\left(l_{\sigma,t}\left(\theta_{0}\right)l_{\sigma,t}'\left(\theta_{0}\right)\right) & =\Sigma_{0}^{-2}\mathbb{E}\left[\left(e_{x,t}\left(\theta\right)\odot\varepsilon_{t}\left(\theta\right)+\sigma\right)\left(e_{x,t}\left(\theta\right)\odot\varepsilon_{t}\left(\theta\right)+\sigma\right)^{'}\right]\Sigma_{0}^{-2}\\
\mathbb{E}\left(l_{\sigma,t}\left(\theta_{0}\right)l_{\lambda,t}'\left(\theta_{0}\right)\right) & =-\Sigma_{0}^{-2}\mathbb{E}\left(\left(e_{x,t}\left(\theta_{0}\right)\odot\varepsilon_{t}\left(\theta_{0}\right)\right)e_{\lambda,t}'\left(\theta_{0}\right)\right).
\end{align*}

\pagebreak{}

\subsection{Elements of the Covariance Matrix involving $\pi_{2}$}

\paragraph{Diagonal term $\left(\pi_{2},\pi_{2}\right)$.}

We obtain
\begin{align*}
\mathbb{E}\left(l_{\pi_{2},t}\left(\theta_{0}\right)l_{\pi_{2},t}'\left(\theta_{0}\right)\right) & =\mathbb{E}\left\{ x_{b,t-1}\left(\theta_{0}\right)B'\left(\beta_{0}\right)^{-1}\Sigma_{0}^{-1}e_{x,t}\left(\theta_{0}\right)e_{x,t}'\left(\theta_{0}\right)\Sigma_{0}^{-1}B\left(\beta_{0}\right)^{-1}x_{b,t-1}'\left(\theta_{0}\right)\right\} \\
 & =\mathbb{E}\left(x_{b,t-1}\left(\theta_{0}\right)\right)B'\left(\beta_{0}\right)^{-1}\Sigma_{0}^{-1}V_{e_{x}}\Sigma_{0}^{-1}B\left(\beta_{0}\right)^{-1}\mathbb{E}\left(x_{b,t-1}'\left(\theta_{0}\right)\right)
\end{align*}
where $V_{e_{x}}=\mathbb{E}\left(e_{x,t}\left(\theta_{0}\right)e_{x,t}'\left(\theta_{0}\right)\right)$.

\paragraph{Term $\left(\pi_{2},\pi_{3}\right)$.}

Similarly, we obtain 
\begin{align*}
\mathbb{E}\left(l_{\pi_{2},t}\left(\theta_{0}\right)l_{\pi_{3},t}'\left(\theta_{0}\right)\right) & =\mathbb{E}\left\{ x_{b,t-1}\left(\theta_{0}\right)B'\left(\beta_{0}\right)^{-1}\Sigma_{0}^{-1}e_{x,t}\left(\theta_{0}\right)e_{x,t}'\left(\theta_{0}\right)\Sigma_{0}^{-1}B\left(\beta_{0}\right)^{-1}w_{b,t-1}'\left(\theta_{0}\right)\right\} \\
 & =\mathbb{E}\left(x_{b,t-1}\left(\theta_{0}\right)\right)B'\left(\beta_{0}\right)^{-1}\Sigma_{0}^{-1}V_{e_{x}}\Sigma_{0}^{-1}B\left(\beta_{0}\right)^{-1}w_{b,t-1}'\left(\theta_{0}\right).
\end{align*}

\paragraph{Term $\left(\pi_{2},\beta\right)$.}

We consider the three summands of \eqref{eq:cov_lik_beta} separately.
The expectation of $l_{\pi_{2},t}\left(\theta_{0}\right)$ with the
last summand is zero as a consequence of the expectation of $l_{\pi_{2},t}\left(\theta_{0}\right)$
being zero. For the first summand, i.e. 
\[
-\left(\sum_{i=1}^{q}\left(\varepsilon_{t-i}'\left(\theta_{0}\right)\otimes e_{x,t}'\left(\theta_{0}\right)\right)\left(I_{n}\otimes\Sigma_{0}^{-1}B\left(\beta_{0}\right)^{-1}b_{i}\right)\right)H
\]
, we have 
\begin{align*}
\mathbb{E}\left(l_{\pi_{2},t}\left(\theta_{0}\right)l_{\beta,t,\text{first}}'\left(\theta_{0}\right)\right) & =-\mathbb{E}\left\{ x_{b,t-1}\left(\theta_{0}\right)B'\left(\beta_{0}\right)^{-1}\Sigma_{0}^{-1}e_{x,t}\left(\theta_{0}\right)\left(\sum_{i=1}^{q}\left(\varepsilon_{t-i}'\left(\theta_{0}\right)\otimes e_{x,t}'\left(\theta_{0}\right)\right)\left(I_{n}\otimes\Sigma_{0}^{-1}B\left(\beta_{0}\right)^{-1}b_{i}\right)\right)H\right\} \\
 & =-\mathbb{E}\left(x_{b,t-1}\left(\theta_{0}\right)\right)B'\left(\beta_{0}\right)^{-1}\Sigma_{0}^{-1}\left(\sum_{i=1}^{q}\underbrace{\mathbb{E}\left[\varepsilon_{t-i}'\left(\theta_{0}\right)\otimes\left(e_{x,t}\left(\theta_{0}\right)e_{x,t}'\left(\theta_{0}\right)\right)\right]}_{=0}\left(I_{n}\otimes\Sigma_{0}^{-1}B\left(\beta_{0}\right)^{-1}b_{i}\right)\right)H=0
\end{align*}
due to independence of (functions of) $\varepsilon_{t}$ of $\varepsilon_{t-i},\ i>0$
and $\mathbb{E}\left(\varepsilon_{t}\left(\theta_{0}\right)\right)=0$.
What remains is thus the covariance with the second summand, i.e.
\begin{align*}
\mathbb{E}\left(l_{\pi_{2},t}\left(\theta_{0}\right)l_{\beta,t}'\left(\theta_{0}\right)\right) & =\mathbb{E}\left\{ x_{b,t-1}\left(\theta_{0}\right)B'\left(\beta_{0}\right)^{-1}\Sigma_{0}^{-1}e_{x,t}\left(\theta_{0}\right)\left(\varepsilon_{t}\left(\theta_{0}\right)\otimes e_{x,t}'\left(\theta_{0}\right)\right)\left(I_{n}\otimes\Sigma_{0}^{-1}B\left(\beta_{0}\right)^{-1}\right)H\right\} \\
 & =\mathbb{E}\left(x_{b,t-1}\left(\theta_{0}\right)\right)\mathfrak{b}_{t}B'\left(\beta_{0}\right)^{-1}\Sigma_{0}^{-1}\mathbb{E}\left[\varepsilon_{t}\left(\theta_{0}\right)\otimes\left(e_{x,t}\left(\theta_{0}\right)e_{x,t}'\left(\theta_{0}\right)\right)\right]\left(I_{n}\otimes\Sigma_{0}^{-1}B\left(\beta_{0}\right)^{-1}\right)H.
\end{align*}

\paragraph{Term $\left(\pi_{2},\sigma\right)$.}

The independence of $x_{b,t-1}\left(\theta_{0}\right)$ from $\varepsilon_{t}\left(\theta_{0}\right)$
and $e_{x,t}\left(\theta_{0}\right)$ entails that
\begin{align*}
\mathbb{E}\left(l_{\pi_{2},t}\left(\theta_{0}\right)\left(l_{\sigma,t}\left(\theta_{0}\right)\right)^{'}\right) & =-\mathbb{E}\left(\left\{ x_{b,t-1}\left(\theta_{0}\right)B'\left(\beta_{0}\right)^{-1}\Sigma_{0}^{-1}e_{x,t}\left(\theta_{0}\right)\right\} \left[e_{x,t}'\left(\theta_{0}\right)\odot\varepsilon_{t}'\left(\theta_{0}\right)+\sigma_{0}'\right]\right)\Sigma_{0}^{-2}\\
 & =-\mathbb{E}\left(x_{b,t-1}\left(\theta_{0}\right)\right)B'\left(\beta\right)^{-1}\Sigma^{-1}\left\{ \mathbb{E}\left[e_{x,t}\left(\theta_{0}\right)\left(e_{x,t}'\left(\theta_{0}\right)\odot\varepsilon_{t}'\left(\theta_{0}\right)\right)\right]+\underbrace{\mathbb{E}\left[e_{x,t}\left(\theta_{0}\right)\sigma_{0}'\right]}_{=0}\right\} \Sigma_{0}^{-2}.
\end{align*}

\paragraph{Term $\left(\pi_{2},\lambda\right)$.}

We obtain 
\[
\mathbb{E}\left(l_{\pi_{2},t}\left(\theta_{0}\right)l_{\lambda,t}'\left(\theta_{0}\right)\right)=-x_{b,t-1}\left(\theta_{0}\right)B'\left(\beta\right)^{-1}\Sigma^{-1}\mathbb{E}\left(e_{x,t}\left(\theta_{0}\right)e_{\lambda,t}'\left(\theta_{0}\right)\right).
\]

\subsection{Elements of the Covariance Matrix involving $\pi_{3}$}

\paragraph{Diagonal term $\left(\pi_{3},\pi_{3}\right)$.}

Similar to the diagonal term $\left(\pi_{2},\pi_{2}\right)$, we obtain
\[
\mathbb{E}\left(l_{\pi_{3},t}\left(\theta_{0}\right)l_{\pi_{3},t}'\left(\theta_{0}\right)\right)=\mathbb{E}\left(w_{b,t-1}\left(\theta_{0}\right)\right)B'\left(\beta_{0}\right)^{-1}\Sigma_{0}^{-1}V_{e_{x}}\Sigma_{0}^{-1}B\left(\beta_{0}\right)^{-1}\mathbb{E}\left(w_{b,t-1}'\left(\theta_{0}\right)\right).
\]
Note that $u_{t}\left(\theta_{0}\right)$ may be expressed in terms
of the observations $y_{t}$ and that, due to the structure of $\mathfrak{b}_{t}$,
all elements of $w_{b,t-1}\left(\theta_{0}\right)$ are contained
in $\mathcal{F}_{t-1}$.

\paragraph{Term $\left(\pi_{3},\beta\right)$.}

Similar to the term $\left(\pi_{2},\beta\right)$, we consider the
three summands of \eqref{eq:cov_lik_beta} separately. The expectation
of $l_{\pi_{3},t}\left(\theta_{0}\right)$ with the last summand is
zero as a consequence of the expectation of $l_{\pi_{3},t}\left(\theta_{0}\right)$
being zero. For the first summand, i.e. $-\left(\sum_{i=1}^{q}\left(\varepsilon_{t-i}'\left(\theta_{0}\right)\otimes e_{x,t}'\left(\theta_{0}\right)\right)\left(I_{n}\otimes\Sigma_{0}^{-1}B\left(\beta_{0}\right)^{-1}b_{i}\right)\right)H$,
we have 
\[
\mathbb{E}\left[l_{\pi_{3},t}\left(\theta_{0}\right)\left(l_{\beta,t,\text{first}}'\left(\theta_{0}\right)\right)\right]=\mathbb{E}\left(w_{b,t-1}\left(\theta_{0}\right)\right)B'\left(\beta_{0}\right)^{-1}\Sigma_{0}^{-1}\left(\sum_{i=1}^{q}\underbrace{\mathbb{E}\left[\varepsilon_{t-i}'\left(\theta_{0}\right)\otimes\left(e_{x,t}\left(\theta_{0}\right)e_{x,t}'\left(\theta_{0}\right)\right)\right]}_{=0}\left(I_{n}\otimes\Sigma_{0}^{-1}B\left(\beta_{0}\right)^{-1}b_{i}\right)\right)H=0
\]
due to independence of (functions of) $\varepsilon_{t}$ of $\varepsilon_{t-i},\ i>0$
and $\mathbb{E}\left(\varepsilon_{t}\left(\theta_{0}\right)\right)=0$.
What remains is thus the covariance of $l_{\pi_{3},t}\left(\theta_{0}\right)$
with the second summand of $l_{\beta,t}'\left(\theta_{0}\right)$,
i.e.
\begin{align*}
\mathbb{E}\left(l_{\pi_{2},t}\left(\theta_{0}\right)l_{\beta,t}'\left(\theta_{0}\right)\right) & =\mathbb{E}\left\{ w_{b,t-1}\left(\theta_{0}\right)B'\left(\beta_{0}\right)^{-1}\Sigma_{0}^{-1}e_{x,t}\left(\theta_{0}\right)\left(\varepsilon_{t}\left(\theta_{0}\right)\otimes e_{x,t}'\left(\theta_{0}\right)\right)\left(I_{n}\otimes\Sigma_{0}^{-1}B\left(\beta_{0}\right)^{-1}\right)H\right\} \\
 & =\mathbb{E}\left(w_{b,t-1}\left(\theta_{0}\right)\right)B'\left(\beta_{0}\right)^{-1}\Sigma_{0}^{-1}\mathbb{E}\left[\varepsilon_{t}\left(\theta_{0}\right)\otimes\left(e_{x,t}\left(\theta_{0}\right)e_{x,t}'\left(\theta_{0}\right)\right)\right]\left(I_{n}\otimes\Sigma_{0}^{-1}B\left(\beta_{0}\right)^{-1}\right)H.
\end{align*}

\paragraph{Term $\left(\pi_{3},\sigma\right)$.}

Similar to the term $\left(\pi_{2},\sigma\right)$, we obtain
\begin{align*}
\mathbb{E}\left(l_{\pi_{3},t}\left(\theta_{0}\right)l_{\sigma,t}'\left(\theta_{0}\right)\right) & =\mathbb{E}\left(\left\{ w_{b,t-1}\left(\theta_{0}\right)B'\left(\beta_{0}\right)^{-1}\Sigma_{0}^{-1}e_{x,t}\left(\theta_{0}\right)\right\} \left[e_{x,t}'\left(\theta_{0}\right)\odot\varepsilon_{t}'\left(\theta_{0}\right)+\sigma_{0}'\right]\right)\Sigma_{0}^{-2}\\
 & =\mathbb{E}\left(w_{b,t-1}\left(\theta_{0}\right)\right)B'\left(\beta\right)^{-1}\Sigma^{-1}\left\{ \mathbb{E}\left[e_{x,t}\left(\theta_{0}\right)\left(e_{x,t}'\left(\theta_{0}\right)\odot\varepsilon_{t}'\left(\theta_{0}\right)\right)\right]+\underbrace{\mathbb{E}\left[e_{x,t}\left(\theta_{0}\right)\sigma_{0}'\right]}_{=0}\right\} \Sigma_{0}^{-2}.
\end{align*}

\paragraph{Term $\left(\pi_{3},\lambda\right)$.}

Similar to the term $\left(\pi_{2},\sigma\right)$, we obtain
\[
\mathbb{E}\left(l_{\pi_{3},t}\left(\theta_{0}\right)l_{\lambda,t}'\left(\theta_{0}\right)\right)=-\text{\ensuremath{\mathbb{E}}}\left(w_{b,t-1}\left(\theta_{0}\right)\right)B'\left(\beta\right)^{-1}\Sigma^{-1}\mathbb{E}\left(e_{x,t}\left(\theta_{0}\right)e_{\lambda,t}'\left(\theta_{0}\right)\right).
\]

\subsection{Elements of the Covariance Matrix involving $\beta$}

\paragraph{Diagonal term $\left(\beta,\beta\right)$.}

We will analyze sequentially the expectations of the cross- and square-terms
in
\begin{align*}
\frac{\partial l_{t}\left(\theta_{0}\right)}{\partial\beta} & =-\underbrace{H'\sum_{i=1}^{q}\left(I_{n}\otimes b_{i}'B'\left(\beta_{0}\right)^{-1}\Sigma_{0}^{-1}\right)\left(\varepsilon_{t-i}\left(\theta_{0}\right)\otimes e_{x,t}\left(\theta_{0}\right)\right)}_{=(I)}\\
 & \quad-\underbrace{H'\left(I_{n}\otimes B'\left(\beta_{0}\right)^{-1}\Sigma_{0}^{-1}\right)\left(\varepsilon_{t}\left(\theta_{0}\right)\otimes e_{x,t}\left(\theta_{0}\right)\right)}_{=(II)}-\underbrace{H'vec\left(B'\left(\beta_{0}\right)^{-1}\right)}_{=(III)}.
\end{align*}
First, note that $\mathbb{E}\left[(I)(II)\right]=\mathbb{E}\left[(I)(III)\right]=0$
because 
\begin{align*}
\mathbb{E}\left[\left(\varepsilon_{t-i}\left(\theta_{0}\right)\otimes e_{x,t}\left(\theta_{0}\right)\right)\left(\varepsilon_{t}\left(\theta_{0}\right)\otimes e_{x,t}\left(\theta_{0}\right)\right)'\right] & =\mathbb{E}\left(\varepsilon_{t-i}\left(\theta_{0}\right)\varepsilon_{t}'\left(\theta_{0}\right)\otimes e_{x,t}\left(\theta_{0}\right)e_{x,t}'\left(\theta_{0}\right)\right)=0
\end{align*}
and $\mathbb{E}\left(\varepsilon_{t-i}\left(\theta_{0}\right)\otimes e_{x,t}\left(\theta_{0}\right)\right)=0$
due to the fact that $e_{x,t}\left(\theta_{0}\right)$ is a function
of $\varepsilon_{t}\left(\theta_{0}\right)$ and $\varepsilon_{t}\left(\theta_{0}\right)$
is independent of $\left(\varepsilon_{t-1}'\left(\theta_{0}\right),\ldots,\varepsilon_{t-q}'\left(\theta_{0}\right)\right)$.
For the last cross-term, we obtain in the same way as in \eqref{eq:mds_exp_beta}
that 
\begin{align*}
\mathbb{E}\left(\left(II\right)\left(III\right)\right) & =\mathbb{E}\left[H'\left(I_{n}\otimes B'\left(\beta_{0}\right)^{-1}\Sigma_{0}^{-1}\right)\left(\varepsilon_{t}\left(\theta_{0}\right)\otimes e_{x,t}\left(\theta_{0}\right)\right)vec\left(B'^{-1}\right)'H\right]\\
 & =H'vec\left[B'\left(\beta_{0}\right)^{-1}\Sigma^{-1}\underbrace{\mathbb{E}\left(e_{x,t}\left(\theta_{0}\right)\varepsilon_{t}'\left(\theta_{0}\right)\right)}_{=-\Sigma}\right]vec\left(B'^{-1}\right)'H\\
 & =H'\left(-vec\left(B'\left(\beta_{0}\right)^{-1}\right)\right)vec\left(B'\left(\beta_{0}\right)^{-1}\right)'H\\
 & =-H'vec\left(B'\left(\beta_{0}\right)^{-1}\right)vec\left(B'\left(\beta_{0}\right)^{-1}\right)'H.
\end{align*}

Regarding the first square term, consider
\begin{align*}
\mathbb{E}\left(\left(I\right)^{2}\right) & =H'\left[\left(I_{n}\otimes b_{1}'B'\left(\beta_{0}\right)^{-1}\Sigma^{-1}\right),\ldots,\left(I_{n}\otimes b_{q}'B'\left(\beta_{0}\right)^{-1}\Sigma^{-1}\right)\right]\cdot\\
 & \quad\cdot\mathbb{E}\left[\begin{pmatrix}\left(\varepsilon_{t-1}\left(\theta_{0}\right)\otimes e_{x,t}\left(\theta_{0}\right)\right)\\
\vdots\\
\left(\varepsilon_{t-q}\left(\theta_{0}\right)\otimes e_{x,t}\left(\theta_{0}\right)\right)
\end{pmatrix}\left(\varepsilon_{t-1}'\left(\theta_{0}\right)\otimes e_{x,t}'\left(\theta_{0}\right)\right),\ldots,\left(\varepsilon_{t-q}'\left(\theta_{0}\right)\otimes e_{x,t}'\left(\theta_{0}\right)\right)\right]\left[\begin{pmatrix}\left(I_{n}\otimes\Sigma^{-1}B\left(\beta_{0}\right)^{-1}b_{1}\right)\\
\vdots\\
\left(I_{n}\otimes\Sigma^{-1}B\left(\beta_{0}\right)^{-1}b_{q}\right)
\end{pmatrix}\right]H
\end{align*}
and note that the term in the middle is equal to 
\begin{gather*}
\mathbb{E}\left\{ \left[\begin{pmatrix}\left(\varepsilon_{t-1}\left(\theta_{0}\right)\otimes e_{x,t}\left(\theta_{0}\right)\right)\\
\vdots\\
\left(\varepsilon_{t-q}\left(\theta_{0}\right)\otimes e_{x,t}\left(\theta_{0}\right)\right)
\end{pmatrix}\right]\left[\left(\varepsilon_{t-1}'\left(\theta_{0}\right)\otimes e_{x,t}'\left(\theta_{0}\right)\right),\ldots,\left(\varepsilon_{t-q}'\left(\theta_{0}\right)\otimes e_{x,t}'\left(\theta_{0}\right)\right)\right]\right\} =\\
=\mathbb{E}\left\{ \left[\begin{pmatrix}\varepsilon_{t-1}\left(\theta_{0}\right)\\
\vdots\\
\varepsilon_{t-q}\left(\theta_{0}\right)
\end{pmatrix}\otimes e_{x,t}\left(\theta_{0}\right)\right]\left[\left(\varepsilon_{t-1}'\left(\theta_{0}\right),\ldots,\varepsilon_{t-q}'\left(\theta_{0}\right)\right)\otimes e_{x,t}'\left(\theta_{0}\right)\right]\right\} \\
=\mathbb{E}\left\{ \left[\begin{pmatrix}\varepsilon_{t-1}\left(\theta_{0}\right)\\
\vdots\\
\varepsilon_{t-q}\left(\theta_{0}\right)
\end{pmatrix}\left(\varepsilon_{t-1}'\left(\theta_{0}\right),\ldots,\varepsilon_{t-q}'\left(\theta_{0}\right)\right)\right]\otimes\left[e_{x,t}\left(\theta_{0}\right)e_{x,t}'\left(\theta_{0}\right)\right]\right\} \\
=\mathbb{E}\left[\begin{pmatrix}\varepsilon_{t-1}\left(\theta_{0}\right)\\
\vdots\\
\varepsilon_{t-q}\left(\theta_{0}\right)
\end{pmatrix}\left(\varepsilon_{t-1}'\left(\theta_{0}\right),\ldots,\varepsilon_{t-q}'\left(\theta_{0}\right)\right)\right]\otimes\mathbb{E}\left[e_{x,t}\left(\theta_{0}\right)e_{x,t}'\left(\theta_{0}\right)\right]=\left(I_{q}\otimes\Sigma\right)\otimes V_{e_{x}}.
\end{gather*}
Thus, we obtain 
\begin{align*}
\mathbb{E}\left(\left(I\right)^{2}\right) & =H'\left[\left(I_{n}\otimes b_{1}'B'\left(\beta_{0}\right)^{-1}\Sigma^{-1}\right),\ldots,\left(I_{n}\otimes b_{q}'B'\left(\beta_{0}\right)^{-1}\Sigma^{-1}\right)\right]\left[I_{q}\otimes\left(\Sigma\otimes V_{e_{x}}\right)\right]\left[\begin{pmatrix}\left(I_{n}\otimes\Sigma^{-1}B\left(\beta_{0}\right)^{-1}b_{1}\right)\\
\vdots\\
\left(I_{n}\otimes\Sigma^{-1}B\left(\beta_{0}\right)^{-1}b_{q}\right)
\end{pmatrix}\right]H\\
 & =H'\left[\sum_{i=1}^{q}\left(\Sigma\otimes b_{i}'B'\left(\beta_{0}\right)^{-1}\Sigma^{-1}V_{e_{x}}\Sigma^{-1}B\left(\beta_{0}\right)^{-1}b_{i}\right)\right]H.
\end{align*}
For the second square term, we have
\begin{align*}
\mathbb{E}\left(\left(II\right)^{2}\right) & =H'\left(I_{n}\otimes B'(\beta_{0})^{-1}\Sigma_{0}^{-1}\right)\mathbb{E}\left(\varepsilon_{t}\left(\theta_{0}\right)\varepsilon_{t}'\left(\theta_{0}\right)\otimes e_{x,t}\left(\theta_{0}\right)e_{x,t}'\left(\theta_{0}\right)\right)\left(I_{n}\otimes\Sigma_{0}'^{-1}B\left(\beta_{0}\right)^{-1}\right)H
\end{align*}
where the expectation $\mathbb{E}\left(\varepsilon_{t}\left(\theta_{0}\right)\varepsilon_{t}'\left(\theta_{0}\right)\otimes e_{x,t}\left(\theta_{0}\right)e_{x,t}'\left(\theta_{0}\right)\right)$
can be obtained from Lemma C2.(v) in \citet{LMS_svarIdent16}. The
third term is $\mathbb{E}\left(\left(III\right)^{2}\right)=H'vec\left(B'\left(\beta\right)^{-1}\right)vec\left(B'\left(\beta\right)^{-1}\right)'H.$

To sum up, we have that
\begin{align*}
\mathbb{E}\left(l_{\beta,t}\left(\theta_{0}\right)l_{\beta,t}'\left(\theta_{0}\right)\right) & =H'\left[\sum_{i=1}^{q}\left(\Sigma\otimes b_{i}'B'\left(\beta_{0}\right)^{-1}\Sigma^{-1}V_{e_{x}}\Sigma^{-1}B\left(\beta_{0}\right)^{-1}b_{i}\right)\right]H\\
 & \quad+H'\left(I_{n}\otimes B'(\beta_{0})^{-1}\Sigma_{0}^{-1}\right)\mathbb{E}\left(\varepsilon_{t}\left(\theta_{0}\right)\varepsilon_{t}'\left(\theta_{0}\right)\otimes e_{x,t}\left(\theta_{0}\right)e_{x,t}'\left(\theta_{0}\right)\right)\left(I_{n}\otimes\Sigma_{0}'^{-1}B\left(\beta_{0}\right)^{-1}\right)H\\
 & \quad-H'vec\left(B'\left(\beta_{0}\right)^{-1}\right)vec\left(B'\left(\beta_{0}\right)^{-1}\right)'H.
\end{align*}

\paragraph{Term $\left(\beta,\sigma\right)$.}

We consider the terms in {\scriptsize{}
\begin{gather*}
\mathbb{E}\left(l_{\beta,t}\left(\theta_{0}\right)l_{\sigma,t}'\left(\theta_{0}\right)\right)=\\
=\mathbb{E}\left\{ \left[\underbrace{H'\sum_{i=1}^{q}\left(\varepsilon_{t-i}\left(\theta_{0}\right)\otimes b_{i}'B'\left(\beta_{0}\right)^{-1}\Sigma_{0}^{-1}e_{x,t}\left(\theta_{0}\right)\right)}_{=(I)}+\underbrace{H'\left(\varepsilon_{t}\left(\theta_{0}\right)\otimes B'\left(\beta_{0}\right)^{-1}\Sigma_{0}^{-1}e_{x,t}\left(\theta_{0}\right)\right)}_{=(II)}+\underbrace{H'vec\left(B'\left(\beta_{0}\right)^{-1}\right)}_{=(III)}\right]\left[e_{x,t}'\left(\theta_{0}\right)\odot\varepsilon_{t}'\left(\theta_{0}\right)+\sigma_{0}'\right]\right\} \Sigma_{0}^{-2}
\end{gather*}
}separately. For the covariance involving the first term $(I)$, we
have due to the independence of $\varepsilon_{t-i}\left(\theta_{0}\right),\ i\geq1$,
on the one hand and $e_{x,t}\left(\theta_{0}\right)$ and $\varepsilon_{t}\left(\theta_{0}\right)$
on the other hand that {\scriptsize{}
\begin{align*}
\mathbb{E}\left(\left(I\right)l_{\sigma,t}'\left(\theta_{0}\right)\right) & =\mathbb{E}\left\{ \left[H'\sum_{i=1}^{q}\left(\varepsilon_{t-i}\left(\theta_{0}\right)\otimes b_{i}'B'\left(\beta_{0}\right)^{-1}\Sigma_{0}^{-1}e_{x,t}\left(\theta_{0}\right)\right)\right]\left[e_{x,t}'\left(\theta_{0}\right)\odot\varepsilon_{t}'\left(\theta_{0}\right)+\sigma_{0}'\right]\right\} \Sigma_{0}^{-2}\\
 & =H'\sum_{i=1}^{q}\left(I_{n}\otimes b_{i}'B'\left(\beta_{0}\right)^{-1}\Sigma_{0}^{-1}\right)\left\{ \mathbb{E}\left[\left(\varepsilon_{t-i}\left(\theta_{0}\right)\otimes e_{x,t}\left(\theta_{0}\right)\left[e_{x,t}'\left(\theta_{0}\right)\odot\varepsilon_{t}'\left(\theta_{0}\right)\right]\right)\right]+\underbrace{\mathbb{E}\left[\left(\varepsilon_{t-i}\left(\theta_{0}\right)\otimes e_{x,t}\left(\theta_{0}\right)\sigma_{0}'\right)\right]}_{=0}\right\} \Sigma_{0}^{-2}\\
 & =H'\sum_{i=1}^{q}\left(I_{n}\otimes b_{i}'B'\left(\beta_{0}\right)^{-1}\Sigma_{0}^{-1}\right)\left\{ \underbrace{\mathbb{E}\left(\varepsilon_{t-i}\left(\theta_{0}\right)\right)}_{=0}\otimes\mathbb{E}\left\{ e_{x,t}\left(\theta_{0}\right)\left[e_{x,t}'\left(\theta_{0}\right)\odot\varepsilon_{t}'\left(\theta_{0}\right)\right]\right\} \right\} \Sigma_{0}^{-2}=0.
\end{align*}
}For the covariance involving the second term $(II)$, we have {\scriptsize{}
\begin{align*}
\mathbb{E}\left(\left(II\right)l_{\sigma,t}'\left(\theta_{0}\right)\right) & =\mathbb{E}\left\{ \left[H'\left(\varepsilon_{t}\left(\theta\right)\otimes B'\left(\beta\right)^{-1}\Sigma_{0}^{-1}e_{x,t}\left(\theta\right)\right)\right]\left[e_{x,t}'\left(\theta_{0}\right)\odot\varepsilon_{t}'\left(\theta_{0}\right)+\sigma_{0}'\right]\right\} \Sigma_{0}^{-2}\\
 & =H'\left(I_{n}\otimes b_{i}'B'\left(\beta_{0}\right)^{-1}\Sigma_{0}^{-1}\right)\mathbb{E}\left\{ \left(\varepsilon_{t}\left(\theta\right)\otimes e_{x,t}\left(\theta\right)\left[e_{x,t}'\left(\theta_{0}\right)\odot\varepsilon_{t}'\left(\theta_{0}\right)\right]\right)\right\} \Sigma_{0}^{-2}+H'\underbrace{\mathbb{E}\left\{ \left[\left(\varepsilon_{t}\left(\theta\right)\otimes B'\left(\beta\right)^{-1}\Sigma_{0}^{-1}e_{x,t}\left(\theta\right)\right)\right]\right\} }_{=vec\left(B'\left(\beta\right)^{-1}\right)}\sigma_{0}'\Sigma_{0}^{-2}\\
 & =H'\left(I_{n}\otimes b_{i}'B'\left(\beta_{0}\right)^{-1}\Sigma_{0}^{-1}\right)\mathbb{E}\left\{ \left(\varepsilon_{t}\left(\theta\right)\otimes e_{x,t}\left(\theta\right)\left[e_{x,t}'\left(\theta_{0}\right)\odot\varepsilon_{t}'\left(\theta_{0}\right)\right]\right)\right\} \Sigma_{0}^{-2}+H'vec\left(B'\left(\beta\right)^{-1}\right)\sigma_{0}'\Sigma_{0}^{-2}
\end{align*}
}where the last line follows from \eqref{eq:mds_exp_beta}.The covariance
involving the third term $(III)$ is zero as a consequence of $\mathbb{E}\left(l_{\sigma,t}\left(\theta_{0}\right)\right)=0$. 

Finally, we obtain 
\[
\mathbb{E}\left(l_{\beta,t}\left(\theta_{0}\right)l_{\sigma,t}'\left(\theta_{0}\right)\right)=H'\left(I_{n}\otimes B'\left(\beta_{0}\right)^{-1}\Sigma_{0}^{-1}\right)\mathbb{E}\left\{ \left(\varepsilon_{t}\left(\theta\right)\otimes e_{x,t}\left(\theta\right)\left[e_{x,t}'\left(\theta_{0}\right)\odot\varepsilon_{t}'\left(\theta_{0}\right)\right]\right)\right\} \Sigma_{0}^{-2}+H'vec\left(B'\left(\beta\right)^{-1}\right)\sigma_{0}'\Sigma_{0}^{-2}
\]

\paragraph{Term $\left(\beta,\lambda\right)$.}

For the same reasons as above, the covariance with terms $\left(I\right)$
and $(III)$ are zero. Thus, we have $\mathbb{E}\left(l_{\beta,t}\left(\theta_{0}\right)l_{\lambda,t}'\left(\theta_{0}\right)\right)=-H'\left(I_{n}\otimes B'(\beta)^{-1}\Sigma^{-1}\right)\mathbb{E}\left(\varepsilon_{t}\left(\theta_{0}\right)\otimes e_{x,t}\left(\theta_{0}\right)e_{i,\lambda,t}'\right)$.

\subsection{Elements of the Covariance Matrix involving $\sigma$}

\paragraph{Diagonal term $\left(\sigma,\sigma\right)$.}

For completeness, we mention that
\[
\mathbb{E}\left(l_{\sigma,t}\left(\theta_{0}\right)l_{\sigma,t}'\left(\theta_{0}\right)\right)=\Sigma_{0}^{-2}\mathbb{E}\left[\left(e_{x,t}\left(\theta\right)\odot\varepsilon_{t}\left(\theta\right)+\sigma\right)\left(e_{x,t}\left(\theta\right)\odot\varepsilon_{t}\left(\theta\right)+\sigma\right)^{'}\right]\Sigma_{0}^{-2}.
\]

\paragraph{Term $\left(\sigma,\lambda\right)$. }

We have
\begin{align*}
\mathbb{E}\left(l_{\sigma,t}\left(\theta_{0}\right)l_{\lambda,t}'\left(\theta_{0}\right)\right) & =-\Sigma_{0}^{-2}\mathbb{E}\left(\left[e_{x,t}\left(\theta_{0}\right)\odot\varepsilon_{t}\left(\theta_{0}\right)+\sigma_{0}\right]e_{\lambda,t}'\left(\theta_{0}\right)\right)\\
 & =-\Sigma_{0}^{-2}\mathbb{E}\left(\left(e_{x,t}\left(\theta_{0}\right)\odot\varepsilon_{t}\left(\theta_{0}\right)\right)e_{\lambda,t}'\left(\theta_{0}\right)\right)
\end{align*}
because $\mathbb{E}\left(e_{\lambda,t}\left(\theta_{0}\right)\right)=0$.

\subsection{Elements of the Covariance Matrix $\lambda$}

Likewise, we mention for completeness that
\[
\mathbb{E}\left(l_{\lambda,t}\left(\theta_{0}\right)l_{\lambda,t}'\left(\theta_{0}\right)\right)=V_{e_{\lambda}}.
\]

\pagebreak{}

\section{\label{sec:opg_finite}Finiteness of Covariance Matrix of Score}

Thanks to the Cauchy-Schwarz inequality, we only need to consider
the (block-) diagonal elements of the covariance matrix of the score.

\subsection{Diagonal block pertaining to $\pi_{2}$ }

We consider
\begin{align*}
l_{\pi_{2},t}\left(\theta_{0}\right) & =-\left\{ \left[\begin{pmatrix}x_{0} & \cdots & x_{T-1}\end{pmatrix}\otimes I_{n}\right]\mathfrak{b}_{t}\right\} B'\left(\beta_{0}\right)^{-1}\Sigma_{0}^{-1}e_{x,t}\left(\theta_{0}\right).
\end{align*}
and remind the reader that $\mathcal{B}'^{-1}$ is upper-triangular
such that the product of $\left[\begin{pmatrix}x_{0} & \cdots & x_{T-1}\end{pmatrix}\otimes I_{n}\right]$
with $\mathfrak{b}_{t}$ only involves depending on time $t-1$ and
earlier. Furthermore, note that the non-zero elements of $\mathfrak{b}_{t}$
correspond to the coefficients of $b(z)^{-1}$ whose norms are decreasing
at an exponential rate. We denote the $j$-th $\left(n\times n\right)$-dimensional
block of the $\left(Tn\times n\right)$-dimensional matrix $\mathfrak{b}_{t}$
by $\mathfrak{b}_{t}^{j}$ so that $\mathfrak{b}_{t}^{1}$ corresponds
to the $t$-th coefficient of the power series expansion of $b(z)^{-1}$
around zero and that $\mathfrak{b}_{t}^{t-1}$ corresponds to the
first coefficient. We thus obtain
\begin{align*}
\mathbb{E}\left(l_{\pi_{2},t}\left(\theta_{0}\right)l_{\pi_{2},t}'\left(\theta_{0}\right)\right) & =\mathbb{E}\left(\left[\begin{pmatrix}x_{0} & \cdots & x_{T-1}\end{pmatrix}\otimes I_{n}\right]\mathfrak{b}_{t}\Sigma_{0}^{-1}B'\left(\beta_{0}\right)^{-1}e_{x,t}\left(\theta_{0}\right)e_{x,t}'\left(\theta_{0}\right)B\left(\beta_{0}\right)^{-1}\Sigma_{0}^{-1}\mathfrak{b}_{t}'\left[\begin{pmatrix}x_{0}'\\
\vdots\\
x_{T-1}'
\end{pmatrix}\otimes I_{n}\right]\right)\\
 & =\mathbb{E}\left(\left[\sum_{i=1}^{t}\left(x_{i-1}\otimes\mathfrak{b}_{t}^{i}\right)\right]\left[\Sigma_{0}^{-1}B'\left(\beta_{0}\right)^{-1}e_{x,t}\left(\theta_{0}\right)e_{x,t}'\left(\theta_{0}\right)B\left(\beta_{0}\right)^{-1}\Sigma_{0}^{-1}\right]\left[\sum_{j=1}^{t}\left(x_{j-1}'\otimes\left(\mathfrak{b}_{t}^{j}\right)'\right)\right]\right)\\
 & =\sum_{i,j=1}^{t}\mathbb{E}\left(x_{i-1}x_{j-1}'\otimes\mathfrak{b}_{t}^{i}\Sigma_{0}^{-1}B'\left(\beta_{0}\right)^{-1}e_{x,t}\left(\theta_{0}\right)e_{x,t}'\left(\theta_{0}\right)B\left(\beta_{0}\right)^{-1}\Sigma_{0}^{-1}\left(\mathfrak{b}_{t}^{j}\right)'\right)\\
 & =\sum_{i,j=1}^{t}\mathbb{E}\left(x_{i-1}x_{j-1}'\right)\otimes\left[\mathfrak{b}_{t}^{i}\Sigma_{0}^{-1}B'\left(\beta_{0}\right)^{-1}\mathbb{E}\left(e_{x,t}\left(\theta_{0}\right)e_{x,t}'\left(\theta_{0}\right)\right)B\left(\beta_{0}\right)^{-1}\Sigma_{0}^{-1}\left(\mathfrak{b}_{t}^{j}\right)'\right].
\end{align*}
where the last equation is obtained from independence of (functions
of) $\varepsilon_{t}$ of (functions of) $x_{t-k},\ k>1$. We note
that 
\[
\mathbb{E}\left(x_{i-1}x_{j-1}'\right)=\begin{pmatrix}\gamma\left(i-j\right) & \gamma\left(i-j+1\right) & \cdots & \gamma\left(i-j+p-1\right)\\
\gamma(i-j-1) & \gamma\left(i-j\right)\\
\vdots &  & \ddots & \vdots\\
\gamma\left(i-j-p+1\right) &  & \cdots & \gamma\left(i-j\right)
\end{pmatrix}
\]
whose largest eigenvalue converges to zero whenever $j$ or $i$ tend
to infinity. The same convergence property holds true for the term
$\mathfrak{b}_{t}^{i}\Sigma_{0}^{-1}B'\left(\beta_{0}\right)^{-1}\mathbb{E}\left(e_{x,t}\left(\theta_{0}\right)e_{x,t}'\left(\theta_{0}\right)\right)B\left(\beta_{0}\right)^{-1}\Sigma_{0}^{-1}\left(\mathfrak{b}_{t}^{j}\right)'$
since by Lemma B.1.(i)-(ii) in \citet{LMS_svarIdent16} $\mathbb{E}\left[e_{x,t}\left(\theta_{0}\right)e_{x,t}'\left(\theta_{0}\right)\right]$
is finite. From these facts, it follows that, for $T\rightarrow\infty$,
the double sum converges.

\subsection{Diagonal block pertaining to $\pi_{3}$ }

Similar (even though a bit easier) to the block pertaining to $\pi_{2}$,
we consider
\begin{align*}
l_{\pi_{3},t}\left(\theta_{0}\right) & =\left\{ \left[\begin{pmatrix}w_{0}\left(\theta_{0}\right) & \cdots & w_{T-1}\left(\theta_{0}\right)\end{pmatrix}\otimes I_{n}\right]\mathfrak{b}\right\} B'\left(\beta_{0}\right)^{-1}\Sigma_{0}^{-1}e_{x,t}\left(\theta_{0}\right)
\end{align*}
and obtain{\scriptsize{}
\begin{align*}
\mathbb{E}\left(l_{\pi_{3},t}\left(\theta_{0}\right)l_{\pi_{3},t}'\left(\theta_{0}\right)\right) & =\mathbb{E}\left(\left[\begin{pmatrix}w_{0}\left(\theta_{0}\right) & \cdots & w_{T-1}\left(\theta_{0}\right)\end{pmatrix}\otimes I_{n}\right]\mathfrak{b}_{t}\Sigma_{0}^{-1}B'\left(\beta_{0}\right)^{-1}e_{x,t}\left(\theta_{0}\right)e_{x,t}'\left(\theta_{0}\right)B\left(\beta_{0}\right)^{-1}\Sigma_{0}^{-1}\mathfrak{b}_{t}'\left[\begin{pmatrix}w_{0}'\left(\theta_{0}\right)\\
\vdots\\
w_{T-1}'\left(\theta_{0}\right)
\end{pmatrix}\otimes I_{n}\right]\right)\\
 & =\sum_{i,j=1}^{t}\mathbb{E}\left(w_{i-1}\left(\theta_{0}\right)w_{j-1}'\left(\theta_{0}\right)\right)\otimes\left[\mathfrak{b}_{t}^{i}\Sigma_{0}^{-1}B'\left(\beta_{0}\right)^{-1}\mathbb{E}\left(e_{x,t}\left(\theta_{0}\right)e_{x,t}'\left(\theta_{0}\right)\right)B\left(\beta_{0}\right)^{-1}\Sigma_{0}^{-1}\left(\mathfrak{b}_{t}^{j}\right)'\right].
\end{align*}
}Note that $\mathbb{E}\left(w_{i-1}\left(\theta_{0}\right)w_{j-1}'\left(\theta_{0}\right)\right)=0$
for $\left|i-j\right|\geq q$ and thus finiteness of $\mathbb{E}\left(l_{\pi_{3},t}\left(\theta_{0}\right)l_{\pi_{3},t}'\left(\theta_{0}\right)\right)$
follows from finiteness of $\mathbb{E}\left(e_{x,t}\left(\theta_{0}\right)e_{x,t}'\left(\theta_{0}\right)\right)$.

\subsection{Diagonal block pertaining to $\beta$}

The finiteness of
\begin{align*}
\mathbb{E}\left(l_{\beta,t}\left(\theta_{0}\right)l_{\beta,t}'\left(\theta_{0}\right)\right) & =H'\left[\sum_{i=1}^{q}\left(\Sigma\otimes b_{i}'B'\left(\beta_{0}\right)^{-1}\Sigma^{-1}V_{e_{x}}\Sigma^{-1}B\left(\beta_{0}\right)^{-1}b_{i}\right)\right]H\\
 & \quad+H'\left(I_{n}\otimes B'(\beta_{0})^{-1}\Sigma_{0}^{-1}\right)\mathbb{E}\left(\varepsilon_{t}\left(\theta_{0}\right)\varepsilon_{t}'\left(\theta_{0}\right)\otimes e_{x,t}\left(\theta_{0}\right)e_{x,t}'\left(\theta_{0}\right)\right)\left(I_{n}\otimes\Sigma_{0}'^{-1}B\left(\beta_{0}\right)^{-1}\right)H\\
 & \quad-H'vec\left(B'\left(\beta_{0}\right)^{-1}\right)vec\left(B'\left(\beta_{0}\right)^{-1}\right)'H.
\end{align*}
follows from the finiteness of $V_{e_{x}}$ and the finiteness of
$\mathbb{E}\left(\varepsilon_{t}\left(\theta_{0}\right)\varepsilon_{t}'\left(\theta_{0}\right)\otimes e_{x,t}\left(\theta_{0}\right)e_{x,t}'\left(\theta_{0}\right)\right)$.
The latter follows from considering 
\[
\mathbb{E}\left(\varepsilon_{i,t}\left(\theta_{0}\right)\varepsilon_{j,t}\left(\theta_{0}\right)e_{k,t}\left(\theta_{0}\right)e_{l,t}\left(\theta_{0}\right)\right)=\left\{ \begin{array}{llc}
\mathbb{E}\left(\varepsilon_{i,t}^{2}\left(\theta_{0}\right)e_{i,t}^{2}\left(\theta_{0}\right)\right) & <\infty, & i=j=k=l\\
\mathbb{E}\left[\left(\varepsilon_{i,t}\left(\theta_{0}\right)e_{i,t}\left(\theta_{0}\right)\right)\left(\varepsilon_{j,t}\left(\theta_{0}\right)e_{j,t}\left(\theta_{0}\right)\right)\right] & =\sigma_{0,i}\sigma_{0,j}, & i=k,j=l,i\neq j\\
\mathbb{E}\left(\varepsilon_{i,t}^{2}\left(\theta_{0}\right)e_{k,t}^{2}\left(\theta_{0}\right)\right) & =\sigma_{0,i}\mathbb{E}\left(e_{k,t}^{2}\left(\theta_{0}\right)\right), & i=j,k=l,i\neq k\\
0, &  & otherwise
\end{array}\right.
\]
which in turn follows from independence of component processes and
Lemma B1.(vi), i.e. $\mathbb{E}\left(\varepsilon_{i,t}^{2}\left(\theta_{0}\right)e_{i,t}^{2}\left(\theta_{0}\right)\right)<\infty$,
Lemma B1.(v), i.e. $\mathbb{E}\left(\varepsilon_{i,t}\left(\theta_{0}\right)e_{i,t}\left(\theta_{0}\right)\right)=-\sigma_{0,i}$,
Lemma B1.(ii), i.e. $\mathbb{E}\left(e_{k,t}^{2}\left(\theta_{0}\right)\right)<\infty$,
in \citet{LMS_svarIdent16}.

\subsection{Diagonal block pertaining to $\sigma$}

Remember that $\mathbb{E}\left(l_{\sigma,t}\left(\theta_{0}\right)l_{\sigma,t}'\left(\theta_{0}\right)\right)=\Sigma_{0}^{-2}\mathbb{E}\left[\left(e_{x,t}\left(\theta_{0}\right)\odot\varepsilon_{t}\left(\theta_{0}\right)+\sigma_{0}\right)\left(e_{x,t}\left(\theta_{0}\right)\odot\varepsilon_{t}\left(\theta_{0}\right)+\sigma_{0}\right)^{'}\right]\Sigma_{0}^{-2}$
is a diagonal matrix. It is a diagonal matrix because the $\left(i,j\right)$
element of $\mathbb{E}\left[\left(e_{x,t}\left(\theta_{0}\right)\odot\varepsilon_{t}\left(\theta_{0}\right)+\sigma_{0}\right)\left(e_{x,t}\left(\theta_{0}\right)\odot\varepsilon_{t}\left(\theta_{0}\right)+\sigma_{0}\right)^{'}\right]$
is of the form 
\[
\mathbb{E}\left[\left(e_{i,x,t}\left(\theta_{0}\right)\varepsilon_{i,t}\left(\theta_{0}\right)+\sigma_{0,i}\right)\left(e_{j,x,t}\left(\theta_{0}\right)\varepsilon_{j,t}\left(\theta_{0}\right)+\sigma_{0,j}\right)\right]=\mathbb{E}\left(e_{i,x,t}\left(\theta_{0}\right)\varepsilon_{i,t}\left(\theta_{0}\right)+\sigma_{0,i}\right)\mathbb{E}\left(e_{j,x,t}\left(\theta_{0}\right)\varepsilon_{j,t}\left(\theta_{0}\right)+\sigma_{0,j}\right)
\]
when $i\neq j$. Both terms in this product are zero because of Lemma
B1.(v) in \citet{LMS_svarIdent16}, i.e. $\mathbb{E}\left(\varepsilon_{i,t}\left(\theta_{0}\right)e_{i,t}\left(\theta_{0}\right)\right)=-\sigma_{0,i}$.
Thus the finiteness of the diagonal matrix 
\[
\mathbb{E}\left[\left(e_{x,t}\left(\theta_{0}\right)\odot\varepsilon_{t}\left(\theta_{0}\right)+\sigma_{0}\right)\left(e_{x,t}\left(\theta_{0}\right)\odot\varepsilon_{t}\left(\theta_{0}\right)+\sigma_{0}\right)^{'}\right]
\]
 follows because of Lemma B1.(vi) in \citet{LMS_svarIdent16}, i.e.
$\mathbb{E}\left(\varepsilon_{i,t}^{2}\left(\theta_{0}\right)e_{i,t}^{2}\left(\theta_{0}\right)\right)<\infty$.

\subsection{Diagonal block pertaining to $\lambda$}

The finiteness of $\mathbb{E}\left(l_{\lambda,t}\left(\theta_{0}\right)l_{\lambda,t}'\left(\theta_{0}\right)\right)=V_{e_{\lambda}}$
follows from the independence of the component processes and Lemma
B1.(iii) in \citet{LMS_svarIdent16}, i.e. $\mathbb{E}\left(e_{i,\lambda_{i},t}\left(\theta_{0}\right)\right)<\infty$.

\pagebreak{}

\section{\label{sec:hessian_expr}Hessian}

We start by analyzing the diagonal terms $l_{\pi_{2}\pi_{2},t}\left(\theta\right)$,
$l_{\pi_{3}\pi_{3},t}\left(\theta\right)$, and then go on with $l_{\pi_{2}\pi_{3},t}\left(\theta\right)$.
Subsequently, we analyze the remaining terms involving $\beta$, starting
with the diagonal block. Finally, the terms involving $\sigma$ and
$\lambda$, which are simpler than the other ones and similar to the
ones derived in \citet{LMS_svarIdent16}, are analyzed.

The expression of the Hessian is needed in order to show that $\mathbb{E}\left(\sup_{\Theta_{0}}\left\Vert l_{\theta\theta,t}\left(\theta\right)\right\Vert \right)$
is finite. This fact, in turn, is needed as input for the Theorem
stating that $\sup_{\Theta_{0}}\left\Vert \frac{1}{T}\sum_{t=1}^{T}l_{\theta\theta,t}\left(\theta\right)-\mathbb{E}\left(l_{\theta\theta,t}\left(\theta\right)\right)\right\Vert $
converges almost surely to zero which is necessary to show that the
MLE is asymptotically normally distributed. 

The following derivations sometimes contain arrays with more than
two dimensions, e.g. as a consequence of taking derivatives of a vector
with respect to a matrix. While tensor index notation might prove
useful, we opted for sequential vectorization, leading to high-dimensional
matrices. 

\subsection{Elements of the Hessian involving $\pi_{2}$ }

\subsubsection{Diagonal term $\left(\pi_{2},\pi_{2}\right)$}

\paragraph{Intermediate step for $l_{\pi_{2}\pi_{2},t}\left(\theta\right)$.}

We need to calculate 
\begin{align*}
\frac{\partial}{\partial\pi_{2}'}\left(\frac{\partial l_{t}\left(\theta\right)}{\partial\pi_{2}}\right) & =\frac{\partial}{\partial\pi_{2}'}\left(\frac{\partial u_{t}\left(\theta\right)'}{\partial\pi_{2}}B'\left(\beta\right)^{-1}\Sigma^{-1}e_{x,t}\left(\theta\right)\right)\\
 & =\sum_{i=1}^{n}\underbrace{\left[\frac{\partial}{\partial\pi_{2}'}\left(\frac{\partial u_{t}\left(\theta\right)'\iota_{i}}{\partial\pi_{2}}\right)\right]}_{=(A)}\iota_{i}'B'\left(\beta\right)^{-1}\Sigma^{-1}e_{x,t}\left(\theta\right)+\frac{\partial u_{t}\left(\theta\right)'}{\partial\pi_{2}}\frac{\partial}{\partial\pi_{2}'}\left(B'\left(\beta\right)^{-1}\Sigma^{-1}e_{x,t}\left(\theta\right)\right).
\end{align*}

\paragraph{The derivative of $\frac{\partial u_{t}\left(\theta\right)'}{\partial\pi_{2}}$
with respect to $\pi_{2}$.}

Note that term (A) is zero because taking the derivative of 
\[
\frac{\partial u_{t}'\left(\theta\right)}{\partial\pi_{2}}=-\left(x_{t-1}\otimes I_{n}\right)-\left(\frac{\partial u_{t-1}'\left(\theta\right)}{\partial\pi_{2}},\ldots,\frac{\partial u_{t-q}'\left(\theta\right)}{\partial\pi_{2}}\right)\begin{pmatrix}b_{1}'\\
\vdots\\
b_{q}'
\end{pmatrix}.
\]
with respect to $\pi_{2}'$ leads to
\begin{align*}
\left(I_{n}+b_{1}z+\cdots+b_{q}z^{q}\right)\frac{\partial u_{t}'\left(\theta\right)}{\partial\pi_{2}'\partial\pi_{2}} & =-\frac{\partial}{\partial\pi_{2}'}\left(x_{t-1}\otimes I_{n}\right)\\
 & =0.
\end{align*}

Thus, we are left with $\frac{\partial}{\partial\pi_{2}'}\left(\frac{\partial l_{t}\left(\theta\right)}{\partial\pi_{2}}\right)=\left(\frac{\partial u_{t}\left(\theta\right)'}{\partial\pi_{2}}\right)B'\left(\beta\right)^{-1}\Sigma^{-1}\left(\frac{\partial e_{x,t}\left(\theta\right)}{\partial\pi_{2}'}\right).$

\paragraph{Calculating $\left(\frac{\partial e_{x,t}\left(\theta\right)}{\partial\pi_{2}'}\right)$.}

For the univariate term, we obtain $\frac{\partial e_{i,x,t}\left(\theta\right)}{\partial\pi_{2}'}=e_{i,xx,t}\left(\theta\right)\sigma_{i}^{-1}\iota_{i}'B(\beta)^{-1}\frac{\partial u_{t}\left(\theta\right)}{\partial\pi_{2}'}.$

\paragraph{Result.}

Combining the results above, and using $\frac{\partial u_{t}\left(\theta\right)'}{\partial\pi_{2}}=-\left[\begin{pmatrix}x_{0} & \cdots & x_{T-1}\end{pmatrix}\otimes I_{n}\right]\mathfrak{b}_{t}=-x_{b,t-1}\left(\theta\right),$
we result in \footnote{Remember that $e_{xx,t}\left(\theta\right)=diag\left(e_{1,xx,t}\left(\theta\right),\ldots,e_{n,xx,t}\left(\theta\right)\right)$}
\begin{align*}
l_{\pi_{2}\pi_{2},t}\left(\theta\right) & =\frac{\partial u_{t}\left(\theta\right)'}{\partial\pi_{2}}\left(B'\left(\beta\right)^{-1}\Sigma^{-1}\right)\begin{pmatrix}e_{1,xx,t}\left(\theta\right)\sigma_{1}^{-1} & 0 & 0\\
0 & \ddots & 0\\
0 & 0 & e_{n,xx,t}\left(\theta\right)\sigma_{n}^{-1}
\end{pmatrix}B(\beta)^{-1}\frac{\partial u_{t}\left(\theta\right)}{\partial\pi_{2}'}\\
 & =\frac{\partial u_{t}\left(\theta\right)'}{\partial\pi_{2}}B'\left(\beta\right)^{-1}\Sigma^{-1}e_{xx,t}\left(\theta\right)\Sigma^{-1}B(\beta)^{-1}\frac{\partial u_{t}\left(\theta\right)}{\partial\pi_{2}'}\\
 & =x_{b,t-1}\left(\theta\right)B'\left(\beta\right)^{-1}\Sigma^{-1}e_{xx,t}\left(\theta\right)\Sigma^{-1}B(\beta)^{-1}x_{b,t-1}'\left(\theta\right).
\end{align*}

\subsection{Elements of the Hessian involving $\pi_{3}$ }

\subsubsection{Diagonal term $\left(\pi_{3},\pi_{3}\right)$}

\paragraph{Intermediate step for $l_{\pi_{3}\pi_{3},t}\left(\theta\right)$.}

We need to calculate
\[
\frac{\partial}{\partial\pi_{3}'}\left(\frac{\partial l_{t}\left(\theta\right)}{\partial\pi_{3}}\right)=\sum_{i=1}^{n}\underbrace{\left[\frac{\partial}{\partial\pi_{3}'}\left(\frac{\partial u_{t}\left(\theta\right)'\iota_{i}}{\partial\pi_{3}}\right)\right]}_{=(A)}\iota_{i}'B'\left(\beta\right)^{-1}\Sigma^{-1}e_{x,t}\left(\theta\right)+\left(\frac{\partial u_{t}\left(\theta\right)'}{\partial\pi_{3}}\right)B'\left(\beta\right)^{-1}\Sigma^{-1}\left(\frac{\partial e_{x,t}\left(\theta\right)}{\partial\pi_{3}'}\right).
\]

\paragraph{Derivative of $\frac{\partial u_{t}'\left(\theta\right)}{\partial\pi_{3}}$
with respect to $\pi_{3}$.}

Term (A) is more complicated than in the case of $\left(\pi_{2},\pi_{2}\right)$.
The derivative of 
\[
\frac{\partial u_{t}'\left(\theta\right)}{\partial\pi_{3}}=-\left(w_{t-1}\left(\theta\right)\otimes I_{n}\right)-\left[\begin{pmatrix}\frac{\partial u_{t-1}'\left(\theta\right)}{\partial\pi_{3}} & \cdots & \frac{\partial u_{t-q}'\left(\theta\right)}{\partial\pi_{3}}\end{pmatrix}\begin{pmatrix}b_{1}'\\
\vdots\\
b_{q}'
\end{pmatrix}\right].
\]
with respect to $\pi_{3}'$ is calculated in the following steps.
First, we vectorize $\frac{\partial u_{t}'\left(\theta\right)}{\partial\pi_{3}}$,
second we calculate $\frac{\partial}{\partial\pi_{3}'}\left(vec\left(\frac{\partial u_{t}'\left(\theta\right)}{\partial\pi_{3}}\right)\right)$
for one point in time, third we obtain the system involving all points
in time and solve it in order to obtain $\frac{\partial}{\partial\pi_{3}'}\left(vec\left(\frac{\partial u_{t}'\left(\theta\right)}{\partial\pi_{3}}\right)\right)$
in terms of observables $y_{1},\ldots,y_{T}$.

We apply the following differentiation rules to the vectorized equation
\[
vec\left(\frac{\partial u_{t}'\left(\theta\right)}{\partial\pi_{3}}\right)=-vec\left(w_{t-1}\left(\theta\right)\otimes I_{n}\right)-vec\left[I_{n^{2}q}\begin{pmatrix}\frac{\partial u_{t-1}'\left(\theta\right)}{\partial\pi_{3}} & \cdots & \frac{\partial u_{t-q}'\left(\theta\right)}{\partial\pi_{3}}\end{pmatrix}\begin{pmatrix}b_{1}'\\
\vdots\\
b_{q}'
\end{pmatrix}I_{n}\right].
\]
 According to \citet{MagnusNeudecker07} (Chapter 3 Section 7, page
55 , Theorem 10), we have 
\[
vec\left(A_{m\times n}\otimes B_{p\times q}\right)=\left(I_{n}\otimes K_{q,m}\otimes I_{p}\right)\left[vec(A_{m\times n})\otimes vec\left(B_{p\times q}\right)\right]
\]
 and its differential (see Chapter 9, Section 14, page 209, formula
(11)) is 
\begin{align*}
dvec\left(A_{m\times n}\otimes B_{p\times q}\right) & =\left(I_{n}\otimes K_{q,m}\otimes I_{p}\right)d\left[vec(A_{m\times n})\otimes vec\left(B_{p\times q}\right)\right]\\
 & =\left(I_{n}\otimes K_{q,m}\otimes I_{p}\right)\left\{ \left[I_{mn}\otimes vec\left(B_{p\times q}\right)\right]dvec(A_{m\times n})+\left[vec\left(A_{m\times n}\right)\otimes I_{pq}\right]dvec\left(B_{p\times q}\right)\right\} .
\end{align*}
where we use the commutation matrix defined by $K_{m,n}vec\left(A_{m\times n}\right)=vec\left(A'\right)$.

We thus obtain
\begin{align*}
\frac{\partial}{\partial\pi_{3}'}vec\left(\frac{\partial u_{t}'\left(\theta\right)}{\partial\pi_{3}}\right) & =-\left(K_{n,nq}\otimes I_{n}\right)\left[\frac{\partial w_{t-1}\left(\theta\right)}{\partial\pi_{3}'}\otimes vec\left(I_{n}\right)\right]+\\
 & \quad-vec\left[\begin{pmatrix}b_{1} & \cdots & b_{q}\end{pmatrix}\otimes I_{n^{2}q}\right]\frac{\partial}{\partial\pi_{3}'}vec\begin{pmatrix}\frac{\partial u_{t-1}'\left(\theta\right)}{\partial\pi_{3}} & \cdots & \frac{\partial u_{t-q}'\left(\theta\right)}{\partial\pi_{3}}\end{pmatrix}\\
 & \quad-\left[I_{n}\otimes\begin{pmatrix}\frac{\partial u_{t-1}'\left(\theta\right)}{\partial\pi_{3}} & \cdots & \frac{\partial u_{t-q}'\left(\theta\right)}{\partial\pi_{3}}\end{pmatrix}\right]\underbrace{\frac{\partial}{\partial\pi_{3}'}vec\left[\begin{pmatrix}b_{1}'\\
\vdots\\
b_{q}'
\end{pmatrix}\right]}_{=K_{n,nq}\frac{\partial}{\partial\pi_{3}'}vec\begin{pmatrix}b_{1} & \cdots & b_{q}\end{pmatrix}}
\end{align*}

\paragraph{Result for $\frac{\partial}{\partial\pi_{3}'}\left(vec\left(\frac{\partial u_{t}'\left(\theta\right)}{\partial\pi_{3}}\right)\right)$
for one point in time.}

Reordering the equation above, we obtain
\[
\left[\begin{pmatrix}I_{n} & b_{1} & \cdots & b_{q}\end{pmatrix}\otimes I_{n^{2}q}\right]\begin{pmatrix}\frac{\partial}{\partial\pi_{3}'}\left[vec\left(\frac{\partial u_{t}'\left(\theta\right)}{\partial\pi_{3}}\right)\right]\\
\vdots\\
\frac{\partial}{\partial\pi_{3}'}\left[vec\left(\frac{\partial u_{t-q}'\left(\theta\right)}{\partial\pi_{3}}\right)\right]
\end{pmatrix}=-\left(K_{n,nq}\otimes I_{n}\right)\left[\frac{\partial w_{t-1}\left(\theta\right)}{\partial\pi_{3}'}\otimes vec\left(I_{n}\right)\right]-\left[I_{n}\otimes\frac{\partial w_{t-1}'\left(\theta\right)}{\partial\pi_{3}}\right]K_{n,nq},
\]

or equivalently
\begin{equation}
\left[\begin{pmatrix}b_{q} & \cdots & b_{1} & I_{n}\end{pmatrix}\otimes I_{n^{2}q}\right]\begin{pmatrix}\frac{\partial}{\partial\pi_{3}'}\left[vec\left(\frac{\partial u_{t-q}'\left(\theta\right)}{\partial\pi_{3}}\right)\right]\\
\vdots\\
\frac{\partial}{\partial\pi_{3}'}\left[vec\left(\frac{\partial u_{t}'\left(\theta\right)}{\partial\pi_{3}}\right)\right]
\end{pmatrix}=-\left(K_{n,nq}\otimes I_{n}\right)\left[\frac{\partial w_{t-1}\left(\theta\right)}{\partial\pi_{3}'}\left(\theta\right)\otimes vec\left(I_{n}\right)\right]-\left[I_{n}\otimes\frac{\partial w_{t-1}'\left(\theta\right)}{\partial\pi_{3}}\right]K_{n,nq}.\label{eq:h_pi3pi3_one}
\end{equation}

\paragraph{The terms involved in the system for $\frac{\partial}{\partial\pi_{3}'}vec\left(\frac{\partial u_{t}'\left(\theta\right)}{\partial\pi_{3}}\right)$
for all points in time.}

The left-hand-side of \eqref{eq:h_pi3pi3_one} for all points in time
is

{\tiny{}
\[
\underbrace{\begin{pmatrix}\left(I_{n}\otimes I_{n^{2}q}\right) & 0_{n^{3}q} & \cdots &  &  &  &  & \cdots & 0_{n^{3}q}\\
\left(b_{1}\otimes I_{n^{2}q}\right) & \left(I_{n}\otimes I_{n^{2}q}\right) & 0_{n^{3}q} &  &  &  &  &  & \vdots\\
\vdots &  & \ddots & \ddots\\
\left(b_{q}\otimes I_{n^{2}q}\right) &  & \left(b_{1}\otimes I_{n^{2}q}\right) & \left(I_{n}\otimes I_{n^{2}q}\right)\\
0_{n^{3}q} & \left(b_{q}\otimes I_{n^{2}q}\right) &  & \left(b_{1}\otimes I_{n^{2}q}\right) & \left(I_{n}\otimes I_{n^{2}q}\right)\\
\vdots & 0_{n^{3}q}\\
 &  & \ddots &  &  &  & \ddots & \ddots & \vdots\\
\vdots &  &  & 0_{n^{3}q} & \left(b_{q}\otimes I_{n^{2}q}\right) &  & \left(b_{1}\otimes I_{n^{2}q}\right) & \left(I_{n}\otimes I_{n^{2}q}\right) & 0_{n^{3}q}\\
0_{n^{3}q} & \cdots &  & \cdots & 0_{n^{3}q} & \left(b_{q}\otimes I_{n^{2}q}\right) &  & \left(b_{1}\otimes I_{n^{2}q}\right) & \left(I_{n}\otimes I_{n^{2}q}\right)
\end{pmatrix}}_{=\mathcal{C}}\begin{pmatrix}\frac{\partial}{\partial\pi_{3}'}\left[vec\left(\frac{\partial u_{1}'\left(\theta\right)}{\partial\pi_{3}}\right)\right]\\
\vdots\\
\frac{\partial}{\partial\pi_{3}'}\left[vec\left(\frac{\partial u_{T}'\left(\theta\right)}{\partial\pi_{3}}\right)\right]
\end{pmatrix}
\]
}where the matrix $\mathcal{C}$ is $\left(Tn^{3}q\times Tn^{3}q\right)$-dimensional.

Using the rule, $\left[\left(A_{1},A_{2}\right)\otimes B\right]=\left[\left(A_{1}\otimes B\right),\left(A_{2}\otimes B\right)\right]$,
see \citet{seber08} (Chapter 11.3, page 235), and the rule 
\begin{align*}
K_{p,m}\left(A_{m\times n}\otimes B_{p\times q}\right) & =\left(B_{p\times q}\otimes A_{m\times n}\right)K_{q,n}\\
\left(A_{m\times n}\otimes B_{p\times q}\right) & =K_{m,p}\left(B_{p\times q}\otimes A_{m\times n}\right)K_{q,n}
\end{align*}
for commutation matrices in \citet{MagnusNeudecker07} (Chapter 3,
Section 7, page 55, Theorem 9), we obtain that the first term on the
right-hand-side of \eqref{eq:h_pi3pi3_one} for all points in time
is 
\[
-\begin{pmatrix}\left(K_{n,nq}\otimes I_{n}\right)\\
 & \ddots\\
 &  & \left(K_{n,nq}\otimes I_{n}\right)
\end{pmatrix}\begin{pmatrix}\left[\left(\frac{\partial w_{0}\left(\theta\right)}{\partial\pi_{3}'}\right)\otimes vec\left(I_{n}\right)\right]\\
\vdots\\
\left[\left(\frac{\partial w_{T-1}\left(\theta\right)}{\partial\pi_{3}'}\right)\otimes vec\left(I_{n}\right)\right]
\end{pmatrix}=\underbrace{-\left(I_{T}\otimes K_{n,nq}\otimes I_{n}\right)\left[\begin{pmatrix}\frac{\partial w_{0}\left(\theta\right)}{\partial\pi_{3}'}\\
\vdots\\
\frac{\partial w_{T-1}\left(\theta\right)}{\partial\pi_{3}'}
\end{pmatrix}\otimes vec\left(I_{n}\right)\right]}_{=\mathcal{D}}
\]
and the second term is 
\[
-\begin{pmatrix}\left[I_{n}\otimes\underbrace{\frac{\partial w_{0}'\left(\theta\right)}{\partial\pi_{3}}}_{=\left(n^{2}p\times n\right)}\right]\\
\vdots\\
\left[I_{n}\otimes\frac{\partial w_{T-1}'\left(\theta\right)}{\partial\pi_{3}}\right]
\end{pmatrix}K_{n,nq}=\begin{pmatrix}K_{n,n^{2}q}\left[\frac{\partial w_{0}'\left(\theta\right)}{\partial\pi_{3}}\otimes I_{n}\right]K_{n,n}\\
\vdots\\
K_{n,n^{2}q}\left[\frac{\partial w_{T}'\left(\theta\right)}{\partial\pi_{3}}\otimes I_{n}\right]K_{n,n}
\end{pmatrix}K_{n,nq}=\underbrace{-\left(I_{T}\otimes K_{n,n^{2}q}\right)\begin{pmatrix}\left[\frac{\partial w_{0}'\left(\theta\right)}{\partial\pi_{3}}\otimes I_{n}\right]\\
\vdots\\
\left[\frac{\partial w_{T}'\left(\theta\right)}{\partial\pi_{3}}\otimes I_{n}\right]
\end{pmatrix}K_{n,n}K_{n,nq}}_{=\mathcal{E}}
\]

\paragraph{Result for $\frac{\partial}{\partial\pi_{3}'}\left[vec\left(\frac{\partial u_{t}'\left(\theta\right)}{\partial\pi_{3}}\right)\right]$
.}

Finally, we have $\frac{\partial}{\partial\pi_{3}'}\left[vec\left(\frac{\partial u_{t}'\left(\theta\right)}{\partial\pi_{3}}\right)\right]=\left(\mathcal{C}^{-1}\right)_{\left[t,\bullet\right]}\left(\mathcal{D}+\mathcal{E}\right)$
where the subscript $\left[t,\bullet\right]$ corresponds to choosing
the $t$-th $\left(n^{3}q\times Tn^{3}q\right)$-dimensional block
of rows in $\mathcal{C}^{-1}$. 

\paragraph{Result for $\frac{\partial}{\partial\pi_{3}'}\left[\frac{\partial l_{t}\left(\theta\right)}{\partial\pi_{3}}\right]$.}

We have 
\begin{align*}
\frac{\partial}{\partial\pi_{3}'}\left[vec\left(\frac{\partial l_{t}\left(\theta\right)}{\partial\pi_{3}}\right)\right] & =\frac{\partial}{\partial\pi_{3}'}\left[vec\left(I_{n^{2}p}\frac{\partial u_{t}'\left(\theta\right)}{\partial\pi_{3}}B'\left(\beta\right)^{-1}\Sigma^{-1}e_{x,t}\left(\theta\right)\right)\right]\\
 & =\frac{\partial}{\partial\pi_{3}'}\left[\frac{1}{2}\left(e_{x,t}'\left(\theta\right)\Sigma^{-1}B'\left(\beta\right)^{-1}\otimes I_{n^{2}p}\right)vec\left(\frac{\partial u_{t}'\left(\theta\right)}{\partial\pi_{3}}\right)+\frac{1}{2}\left(\frac{\partial u_{t}'\left(\theta\right)}{\partial\pi_{3}}B'\left(\beta\right)^{-1}\Sigma^{-1}\right)e_{x,t}\left(\theta\right)\right]
\end{align*}
such that 
\[
\frac{\partial}{\partial\pi_{3}'}\left[\frac{\partial l_{t}\left(\theta\right)}{\partial\pi_{3}}\right]=\left(e_{x,t}'\left(\theta\right)\Sigma^{-1}B\left(\beta\right)^{-1}\otimes I_{n^{2}p}\right)\begin{pmatrix}\frac{\partial^{2}u_{t,1}\left(\theta\right)}{\partial\pi_{3}'\partial\pi_{3}}\\
\vdots\\
\frac{\partial^{2}u_{t,n}\left(\theta\right)}{\partial\pi_{3}'\partial\pi_{3}}
\end{pmatrix}+\left(\frac{\partial u_{t}'\left(\theta\right)}{\partial\pi_{3}}\right)B'\left(\beta\right)^{-1}\Sigma^{-1}\left(\frac{\partial e_{x,t}\left(\theta\right)}{\partial\pi_{3}'}\right).
\]

Above, we calculated $\begin{pmatrix}\frac{\partial^{2}u_{t,1}\left(\theta\right)}{\partial\pi_{3}'\partial\pi_{3}}\\
\vdots\\
\frac{\partial^{2}u_{t,n}\left(\theta\right)}{\partial\pi_{3}'\partial\pi_{3}}
\end{pmatrix}$ in terms of $\frac{\partial u_{t}'\left(\theta\right)}{\partial\pi_{3}}=\left[\begin{pmatrix}w_{0}\left(\theta\right) & \cdots & w_{T-1}\left(\theta\right)\end{pmatrix}\otimes I_{n}\right]\mathfrak{b}_{t}=w_{b,t-1}\left(\theta\right)$
and\footnote{Note that 
\begin{align*}
\frac{\partial e_{x,t}\left(\theta\right)}{\partial\pi_{3}'} & =\sum_{i=1}^{n}e_{i,xx,t}\left(\theta\right)\frac{\partial}{\partial\pi_{3}'}\left(\sigma_{i}^{-1}\iota_{i}'B(\beta)^{-1}u_{t}\left(\theta\right)\right)=\sum_{i=1}^{n}e_{i,xx,t}\left(\theta\right)\sigma_{i}^{-1}\iota_{i}'B(\beta)^{-1}\frac{\partial u_{t}\left(\theta\right)}{\partial\pi_{3}'}\\
 & =e_{xx,t}\left(\theta\right)\Sigma^{-1}B\left(\beta\right)^{-1}\frac{\partial u_{t}\left(\theta\right)}{\partial\pi_{3}'}
\end{align*}
} $\frac{\partial e_{x,t}\left(\theta\right)}{\partial\pi_{3}'}=e_{xx,t}\left(\theta\right)\Sigma^{-1}B\left(\beta\right)^{-1}\frac{\partial u_{t}\left(\theta\right)}{\partial\pi_{3}'}$
such that we obtain as result
\begin{align*}
\frac{\partial}{\partial\pi_{3}'}\left[\frac{\partial l_{t}\left(\theta\right)}{\partial\pi_{3}}\right] & =\left(e_{x,t}'\left(\theta\right)\Sigma^{-1}B'\left(\beta\right)^{-1}\otimes I_{n^{2}p}\right)\left(\mathcal{C}^{-1}\right)_{\left[t,\bullet\right]}\left(\mathcal{D}+\mathcal{E}\right)+w_{b,t-1}\left(\theta\right)B'\left(\beta\right)^{-1}\Sigma^{-1}e_{xx,t}\left(\theta\right)\Sigma^{-1}B\left(\beta\right)^{-1}w_{b,t-1}'\left(\theta\right).
\end{align*}

\pagebreak{}

\subsubsection{Term $\left(\pi_{2},\pi_{3}\right)$}

\paragraph{Intermediate step for $l_{\pi_{2}\pi_{3},t}\left(\theta\right)$.}

We calculate 
\begin{align*}
\frac{\partial}{\partial\pi_{3}'}\left(\frac{\partial l_{t}\left(\theta\right)}{\partial\pi_{2}}\right) & =\frac{\partial}{\partial\pi_{3}'}\left[vec\left(\frac{\partial l_{t}\left(\theta\right)}{\partial\pi_{2}}\right)\right]\\
 & =\frac{\partial}{\partial\pi_{3}'}\left[vec\left(I_{n^{2}p}\frac{\partial u_{t}\left(\theta\right)'}{\partial\pi_{2}}B'\left(\beta\right)^{-1}\Sigma^{-1}e_{x,t}\right)\right]\\
 & =\left(e_{x,t}'\left(\theta\right)\Sigma^{-1}B\left(\beta\right)^{-1}\otimes I_{n^{2}p}\right)\left\{ \frac{\partial}{\partial\pi_{3}'}\left[vec\left(\frac{\partial u_{t}\left(\theta\right)'}{\partial\pi_{2}}\right)\right]\right\} +\left(\frac{\partial u_{t}\left(\theta\right)'}{\partial\pi_{2}}\right)B'\left(\beta\right)^{-1}\Sigma^{-1}\left(\frac{\partial e_{x,t}\left(\theta\right)}{\partial\pi_{3}'}\right).
\end{align*}

\paragraph{Show that $\frac{\partial}{\partial\pi_{3}'}\left[vec\left(\frac{\partial u_{t}\left(\theta\right)'}{\partial\pi_{2}}\right)\right]$
is zero.}

Vectorizing $\frac{\partial u_{t}\left(\theta\right)'}{\partial\pi_{2}}$
leads to 
\begin{align*}
vec\left(\frac{\partial u_{t}\left(\theta\right)'}{\partial\pi_{2}}\right) & =-vec\left(x_{t-1}\otimes I_{n}\right)-vec\left[I_{n^{2}p}\left(\frac{\partial u_{t-1}'\left(\theta\right)}{\partial\pi_{2}},\ldots,\frac{\partial u_{t-q}'\left(\theta\right)}{\partial\pi_{2}}\right)\begin{pmatrix}b_{1}'\\
\vdots\\
b_{q}'
\end{pmatrix}\right]\\
 & =-vec\left(x_{t-1}\otimes I_{n}\right)-vec\left[\begin{pmatrix}b_{1} & \cdots & b_{q}\end{pmatrix}\otimes I_{n^{2}p}\right]vec\left(\frac{\partial u_{t-1}'\left(\theta\right)}{\partial\pi_{2}},\ldots,\frac{\partial u_{t-q}'\left(\theta\right)}{\partial\pi_{2}}\right)\\
 & =-vec\left(x_{t-1}\otimes I_{n}\right)-vec\left[\begin{pmatrix}b_{1} & \cdots & b_{q}\end{pmatrix}\otimes I_{n^{2}p}\right]\begin{pmatrix}vec\left(\frac{\partial u_{t-1}'\left(\theta\right)}{\partial\pi_{2}}\right)\\
\vdots\\
vec\left(\frac{\partial u_{t-q}'\left(\theta\right)}{\partial\pi_{2}}\right)
\end{pmatrix}
\end{align*}
Since the derivative of $-vec\left(x_{t-1}\otimes I_{n}\right)$ with
respect to $\pi_{3}'$ is zero, we obtain that the only solution of
\[
\frac{\partial}{\partial\pi_{3}'}\left[vec\left(\frac{\partial u_{t}\left(\theta\right)'}{\partial\pi_{2}}\right)\right]=-vec\left[\begin{pmatrix}b_{1} & \cdots & b_{q}\end{pmatrix}\otimes I_{n^{2}p}\right]\begin{pmatrix}\frac{\partial}{\partial\pi_{3}'}\left[vec\left(\frac{\partial u_{t-1}'\left(\theta\right)}{\partial\pi_{2}}\right)\right]\\
\vdots\\
\frac{\partial}{\partial\pi_{3}'}\left[vec\left(\frac{\partial u_{t-q}'\left(\theta\right)}{\partial\pi_{2}}\right)\right]
\end{pmatrix}
\]
is the trivial one.

\paragraph{Calculating $\left(\frac{\partial}{\partial\pi_{2}'}e_{x,t}\right)$.}

We have

\begin{align*}
\frac{\partial e_{x,t}\left(\theta\right)}{\partial\pi_{3}'} & =\sum_{i=1}^{n}e_{i,xx,t}\left(\theta\right)\frac{\partial}{\partial\pi_{3}'}\left(\sigma_{i}^{-1}\iota_{i}'B(\beta)^{-1}u_{t}\left(\theta\right)\right)\\
 & =\sum_{i=1}^{n}e_{i,xx,t}\left(\theta\right)\sigma_{i}^{-1}\iota_{i}'B(\beta)^{-1}\frac{\partial u_{t}\left(\theta\right)}{\partial\pi_{3}'}\\
 & =e_{xx,t}\left(\theta\right)\Sigma^{-1}B\left(\beta\right)^{-1}\frac{\partial u_{t}\left(\theta\right)}{\partial\pi_{3}'}
\end{align*}

\paragraph{Result.}

Collecting all terms above, we obtain that 
\begin{align*}
l_{\pi_{2}\pi_{3},t}\left(\theta\right) & =\left(\frac{\partial u_{t}\left(\theta\right)'}{\partial\pi_{2}}\right)B'\left(\beta\right)^{-1}\Sigma^{-1}e_{xx,t}\left(\theta\right)\Sigma^{-1}B\left(\beta\right)^{-1}\frac{\partial u_{t}\left(\theta\right)}{\partial\pi_{3}'}.\\
 & =x_{b,t-1}\left(\theta\right)B'\left(\beta\right)^{-1}\Sigma^{-1}e_{xx,t}\left(\theta\right)\Sigma^{-1}B\left(\beta\right)^{-1}w_{b,t-1}'\left(\theta\right).
\end{align*}

\subsection{Elements of the Hessian involving $\beta$ }

\subsubsection{Diagonal term $\left(\beta,\beta\right)$}

We start from \eqref{eq:lik_beta} and calculate 
\begin{align*}
\frac{\partial}{\partial\beta'}\left(\frac{\partial l_{t}\left(\theta\right)}{\partial\beta}\right) & =-H'\underbrace{\frac{\partial}{\partial\beta'}\left[\left(B\left(\beta\right)^{-1}u_{t}\left(\theta\right)\otimes B'\left(\beta\right)^{-1}\Sigma^{-1}e_{x,t}\left(\theta\right)\right)\right]}_{=(A)}+\cdots\\
 & \quad\cdots+\underbrace{\frac{\partial}{\partial\beta'}\left(\frac{\partial u_{t}'\left(\theta\right)}{\partial\beta}B'\left(\beta\right)^{-1}\Sigma^{-1}e_{x,t}\left(\theta\right)\right)}_{=(B)}-H'\underbrace{\frac{\partial}{\partial\beta'}vec\left(B(\beta)^{-1'}\right)}_{=(C)}.
\end{align*}

\paragraph{Calculate derivative of $e_{x,t}\left(\theta\right)$ with respect
to $\beta'$.}

The univariate version is

\begin{align*}
\frac{\partial e_{i,x,t}\left(\theta\right)}{\partial\beta'} & =e_{i,xx,t}\left(\theta\right)\sigma_{i}^{-1}\frac{\partial}{\partial\beta'}\left(\iota_{i}'B\left(\beta\right)^{-1}u_{t}\left(\theta\right)\right)\\
 & =e_{i,xx,t}\left(\theta\right)\sigma_{i}^{-1}\left[\left(u_{t}'\left(\theta\right)\otimes\iota_{i}'\right)\frac{\partial}{\partial\beta'}\left[vec\left(B\left(\beta\right)^{-1}\right)\right]+\iota_{i}'B\left(\beta\right)^{-1}\left(\frac{\partial u_{t}\left(\theta\right)}{\partial\beta'}\right)\right]\\
 & =e_{i,xx,t}\left(\theta\right)\sigma_{i}^{-1}\left[-\left(u_{t}'\left(\theta\right)\otimes\iota_{i}'\right)\left(B'\left(\beta\right)^{-1}\otimes B\left(\beta\right)^{-1}\right)H+\iota_{i}'B\left(\beta\right)^{-1}\left(\frac{\partial u_{t}\left(\theta\right)}{\partial\beta'}\right)\right]\\
 & =-e_{i,xx,t}\left(\theta\right)\sigma_{i}^{-1}\left(u_{t}'B'\left(\beta\right)^{-1}\left(\theta\right)\otimes\iota_{i}'B\left(\beta\right)^{-1}\right)H+e_{i,xx,t}\left(\theta\right)\sigma_{i}^{-1}\iota_{i}'B\left(\beta\right)^{-1}\left(\frac{\partial u_{t}\left(\theta\right)}{\partial\beta'}\right)
\end{align*}

The multivariate version is 
\begin{align*}
\frac{\partial e_{x,t}}{\partial\beta'} & =-e_{xx,t}\left(\theta\right)\Sigma^{-1}\left(u_{t}'B'\left(\beta\right)^{-1}\left(\theta\right)\otimes B\left(\beta\right)^{-1}\right)H+e_{xx,t}\left(\theta\right)\Sigma^{-1}B\left(\beta\right)^{-1}\left(\frac{\partial u_{t}\left(\theta\right)}{\partial\beta'}\right)\\
 & =-\left(u_{t}'B'\left(\beta\right)^{-1}\left(\theta\right)\otimes e_{xx,t}\left(\theta\right)\Sigma^{-1}B\left(\beta\right)^{-1}\right)H+e_{xx,t}\left(\theta\right)\Sigma^{-1}B\left(\beta\right)^{-1}\left(\frac{\partial u_{t}\left(\theta\right)}{\partial\beta'}\right)
\end{align*}

\paragraph{Intermediate result for term (A).}

We consider 
\begin{align*}
\frac{\partial}{\partial\beta'}\left(B\left(\beta\right)^{-1}u_{t}\left(\theta\right)\otimes B'\left(\beta\right)^{-1}\Sigma^{-1}e_{x,t}\left(\theta\right)\right) & =\left[\underbrace{\frac{\partial}{\partial\beta'}\left(B\left(\beta\right)^{-1}u_{t}\left(\theta\right)\right)}_{=(D)}\otimes B'\left(\beta\right)^{-1}\Sigma^{-1}e_{x,t}\left(\theta\right)\right]+\cdots\\
 & \quad\cdots+\left[B\left(\beta\right)^{-1}u_{t}\left(\theta\right)\otimes\underbrace{\frac{\partial}{\partial\beta'}\left(B'\left(\beta\right)^{-1}\Sigma^{-1}e_{x,t}\left(\theta\right)\right)}_{=(E)}\right]
\end{align*}

\subparagraph{Term (D).}

We have 
\begin{align*}
\frac{\partial}{\partial\beta'}\left(B\left(\beta\right)^{-1}u_{t}\left(\theta\right)\right) & =\left(u_{t}'\left(\theta\right)\otimes I_{n}\right)\left(\frac{\partial}{\partial\beta'}vec\left(B\left(\beta\right)^{-1}\right)\right)+B\left(\beta\right)^{-1}\left(\frac{\partial u_{t}\left(\theta\right)}{\partial\beta'}\right)\\
 & =-\left(u_{t}'\left(\theta\right)\otimes I_{n}\right)\left(B'\left(\beta\right)^{-1}\otimes B\left(\beta\right)^{-1}\right)H+B\left(\beta\right)^{-1}\left(\frac{\partial u_{t}\left(\theta\right)}{\partial\beta'}\right)\\
 & =-\left(u_{t}'\left(\theta\right)B'\left(\beta\right)^{-1}\otimes B\left(\beta\right)^{-1}\right)H+B\left(\beta\right)^{-1}\left(\frac{\partial u_{t}\left(\theta\right)}{\partial\beta'}\right)
\end{align*}

\subparagraph{Term (E).}

We have {\scriptsize{}
\begin{align*}
\frac{\partial}{\partial\beta'}\left(B'\left(\beta\right)^{-1}\Sigma^{-1}e_{x,t}\left(\theta\right)\right) & =\left(e_{x,t}'\left(\theta\right)\Sigma^{-1}\otimes I_{n}\right)\frac{\partial}{\partial\beta'}vec\left(B'\left(\beta\right)^{-1}\right)+B'\left(\beta\right)^{-1}\Sigma^{-1}\frac{\partial e_{x,t}\left(\theta\right)}{\partial\beta'}\\
 & =\left(e_{x,t}'\left(\theta\right)\Sigma^{-1}\otimes I_{n}\right)K_{nn}\frac{\partial}{\partial\beta'}vec\left(B\left(\beta\right)^{-1}\right)+\cdots\\
 & \quad\cdots+B'\left(\beta\right)^{-1}\Sigma^{-1}\left[-\left(u_{t}'B'\left(\beta\right)^{-1}\left(\theta\right)\otimes e_{xx,t}\left(\theta\right)\Sigma^{-1}B\left(\beta\right)^{-1}\right)H+e_{xx,t}\left(\theta\right)\Sigma^{-1}B\left(\beta\right)^{-1}\left(\frac{\partial u_{t}\left(\theta\right)}{\partial\beta'}\right)\right]\\
 & =-\left(e_{x,t}'\left(\theta\right)\Sigma^{-1}\otimes I_{n}\right)K_{nn}\left(B'\left(\beta\right)^{-1}\otimes B\left(\beta\right)^{-1}\right)H-\left(u_{t}'B'\left(\beta\right)^{-1}\left(\theta\right)\otimes B'\left(\beta\right)^{-1}\Sigma^{-1}e_{xx,t}\left(\theta\right)\Sigma^{-1}B\left(\beta\right)^{-1}\right)H+\cdots\\
 & \quad\cdots+B'\left(\beta\right)^{-1}\Sigma^{-1}e_{xx,t}\left(\theta\right)\Sigma^{-1}B\left(\beta\right)^{-1}\left(\frac{\partial u_{t}\left(\theta\right)}{\partial\beta'}\right)\\
 & =-\left(e_{x,t}'\left(\theta\right)\Sigma^{-1}\otimes I_{n}\right)\left(B\left(\beta\right)^{-1}\otimes B'\left(\beta\right)^{-1}\right)K_{nn}H-\left(u_{t}'B'\left(\beta\right)^{-1}\left(\theta\right)\otimes B'\left(\beta\right)^{-1}\Sigma^{-1}e_{xx,t}\left(\theta\right)\Sigma^{-1}B\left(\beta\right)^{-1}\right)H+\cdots\\
 & \quad\cdots+B'\left(\beta\right)^{-1}\Sigma^{-1}e_{xx,t}\left(\theta\right)\Sigma^{-1}B\left(\beta\right)^{-1}\left(\frac{\partial u_{t}\left(\theta\right)}{\partial\beta'}\right)\\
 & =-\left(e_{x,t}'\left(\theta\right)\Sigma^{-1}B\left(\beta\right)^{-1}\otimes B'\left(\beta\right)^{-1}\right)K_{nn}H-\left(u_{t}'B'\left(\beta\right)^{-1}\left(\theta\right)\otimes B'\left(\beta\right)^{-1}\Sigma^{-1}e_{xx,t}\left(\theta\right)\Sigma^{-1}B\left(\beta\right)^{-1}\right)H+\cdots\\
 & \quad\cdots+B'\left(\beta\right)^{-1}\Sigma^{-1}e_{xx,t}\left(\theta\right)\Sigma^{-1}B\left(\beta\right)^{-1}\left(\frac{\partial u_{t}\left(\theta\right)}{\partial\beta'}\right)
\end{align*}
}{\scriptsize\par}

\paragraph{Final result for term (A).}

We have 
\begin{align*}
(A) & =-\left[\left(u_{t}'\left(\theta\right)B'\left(\beta\right)^{-1}\otimes B\left(\beta\right)^{-1}\right)H\otimes B'\left(\beta\right)^{-1}\Sigma^{-1}e_{x,t}\left(\theta\right)\right]\\
 & \quad+\left[B\left(\beta\right)^{-1}\left(\frac{\partial u_{t}\left(\theta\right)}{\partial\beta'}\right)\otimes B'\left(\beta\right)^{-1}\Sigma^{-1}e_{x,t}\left(\theta\right)\right]\\
 & \quad-\left[B\left(\beta\right)^{-1}u_{t}\left(\theta\right)\otimes\left(e_{x,t}'\left(\theta\right)\Sigma^{-1}B\left(\beta\right)^{-1}\otimes B'\left(\beta\right)^{-1}\right)K_{nn}H\right]\\
 & \quad-\left[B\left(\beta\right)^{-1}u_{t}\left(\theta\right)\otimes\left(u_{t}'B'\left(\beta\right)^{-1}\left(\theta\right)\otimes B'\left(\beta\right)^{-1}\Sigma^{-1}e_{xx,t}\left(\theta\right)\Sigma^{-1}B\left(\beta\right)^{-1}\right)H\right]\\
 & \quad+\left[B\left(\beta\right)^{-1}u_{t}\left(\theta\right)\otimes B'\left(\beta\right)^{-1}\Sigma^{-1}e_{xx,t}\left(\theta\right)\Sigma^{-1}B\left(\beta\right)^{-1}\left(\frac{\partial u_{t}\left(\theta\right)}{\partial\beta'}\right)\right]
\end{align*}

\paragraph{Final result for term (B).}

We have {\scriptsize{}
\begin{align*}
(B) & =\frac{\partial}{\partial\beta'}\left(\frac{\partial u_{t}'\left(\theta\right)}{\partial\beta}B'\left(\beta\right)^{-1}\Sigma^{-1}e_{x,t}\left(\theta\right)\right)\\
 & =\left[\frac{\partial}{\partial\beta'}\left(\frac{\partial u_{t}'\left(\theta\right)}{\partial\beta}\right)\right]B'\left(\beta\right)^{-1}\Sigma^{-1}e_{x,t}\left(\theta\right)+\left(\frac{\partial u_{t}'\left(\theta\right)}{\partial\beta}\right)\left(B'\left(\beta\right)^{-1}\Sigma^{-1}\right)\left(\frac{\partial e_{x,t}\left(\theta\right)}{\partial\beta'}\right)\\
 & =\left[\frac{\partial}{\partial\beta'}\left(\frac{\partial u_{t}'\left(\theta\right)}{\partial\beta}\right)\right]B'\left(\beta\right)^{-1}\Sigma^{-1}e_{x,t}\left(\theta\right)+\cdots\\
 & \quad\cdots+\left(\frac{\partial u_{t}'\left(\theta\right)}{\partial\beta}\right)\left(B'\left(\beta\right)^{-1}\Sigma^{-1}\right)\left[-e_{xx,t}\left(\theta\right)\Sigma^{-1}\left(u_{t}'\left(\theta\right)B'\left(\beta\right)^{-1}\otimes B\left(\beta\right)^{-1}\right)H+e_{xx,t}\left(\theta\right)\Sigma^{-1}B\left(\beta\right)^{-1}\left(\frac{\partial u_{t}\left(\theta\right)}{\partial\beta'}\right)\right]\\
 & =\left[\frac{\partial}{\partial\beta'}\left(\frac{\partial u_{t}'\left(\theta\right)}{\partial\beta}\right)\right]B'\left(\beta\right)^{-1}\Sigma^{-1}e_{x,t}\left(\theta\right)-\cdots\\
 & \quad\cdots-\left(\frac{\partial u_{t}'\left(\theta\right)}{\partial\beta}\right)\left(u_{t}'\left(\theta\right)B'\left(\beta\right)^{-1}\otimes B'\left(\beta\right)^{-1}\Sigma^{-1}e_{xx,t}\left(\theta\right)\Sigma^{-1}B\left(\beta\right)^{-1}\right)H+\left(\frac{\partial u_{t}'\left(\theta\right)}{\partial\beta}\right)\left[B'\left(\beta\right)^{-1}\Sigma^{-1}e_{xx,t}\left(\theta\right)\Sigma^{-1}B\left(\beta\right)^{-1}\right]\left(\frac{\partial u_{t}\left(\theta\right)}{\partial\beta'}\right)
\end{align*}
}{\scriptsize\par}

\paragraph{Final result for term (C).}

We have

\begin{align*}
\frac{\partial}{\partial\beta'}\left(H'vec\left(B'\left(\beta\right)^{-1}\right)\right) & =H'\frac{\partial vec\left(B'\left(\beta\right)^{-1}\right)}{\partial\beta'}\\
 & =H'K_{nn}\frac{\partial vec\left(B\left(\beta\right)^{-1}\right)}{\partial\beta'}\\
 & =-H'K_{nn}\left(B'\left(\beta\right)^{-1}\otimes B\left(\beta\right)^{-1}\right)\frac{\partial vec\left(B\left(\beta\right)\right)}{\partial\beta'}\\
 & =-H'K_{nn}\left(B'\left(\beta\right)^{-1}\otimes B\left(\beta\right)^{-1}\right)H
\end{align*}

\paragraph{Result for $l_{\beta\beta,t}\left(\theta\right)$.}

Collecting all terms, we obtain
\begin{align*}
l_{\beta\beta,t}\left(\theta\right) & =-\left[\left(u_{t}'\left(\theta\right)B'\left(\beta\right)^{-1}\otimes B\left(\beta\right)^{-1}\right)H\otimes B'\left(\beta\right)^{-1}\Sigma^{-1}e_{x,t}\left(\theta\right)\right] & (A_{1})\\
 & \quad+\left[B\left(\beta\right)^{-1}\left(\frac{\partial u_{t}\left(\theta\right)}{\partial\beta'}\right)\otimes B'\left(\beta\right)^{-1}\Sigma^{-1}e_{x,t}\left(\theta\right)\right] & (A_{2})\\
 & \quad-\left[B\left(\beta\right)^{-1}u_{t}\left(\theta\right)\otimes\left(e_{x,t}'\left(\theta\right)\Sigma^{-1}B\left(\beta\right)^{-1}\otimes B'\left(\beta\right)^{-1}\right)K_{nn}H\right] & (A_{3})\\
 & \quad-\left[B\left(\beta\right)^{-1}u_{t}\left(\theta\right)\otimes\left(u_{t}'B'\left(\beta\right)^{-1}\left(\theta\right)\otimes B'\left(\beta\right)^{-1}\Sigma^{-1}e_{xx,t}\left(\theta\right)\Sigma^{-1}B\left(\beta\right)^{-1}\right)H\right] & (A_{4})\\
 & \quad+\left[B\left(\beta\right)^{-1}u_{t}\left(\theta\right)\otimes B'\left(\beta\right)^{-1}\Sigma^{-1}e_{xx,t}\left(\theta\right)\Sigma^{-1}B\left(\beta\right)^{-1}\left(\frac{\partial u_{t}\left(\theta\right)}{\partial\beta'}\right)\right] & (A_{5})\\
 & \quad+\left[\frac{\partial}{\partial\beta'}\left(\frac{\partial u_{t}'\left(\theta\right)}{\partial\beta}\right)\right]B'\left(\beta\right)^{-1}\Sigma^{-1}e_{x,t}\left(\theta\right) & (B_{1})\\
 & \quad-\left(\frac{\partial u_{t}'\left(\theta\right)}{\partial\beta}\right)\left(u_{t}'\left(\theta\right)B'\left(\beta\right)^{-1}\otimes B'\left(\beta\right)^{-1}\Sigma^{-1}e_{xx,t}\left(\theta\right)\Sigma^{-1}B\left(\beta\right)^{-1}\right)H & (B_{2})\\
 & \quad+\left(\frac{\partial u_{t}'\left(\theta\right)}{\partial\beta}\right)\left[B'\left(\beta\right)^{-1}\Sigma^{-1}e_{xx,t}\left(\theta\right)\Sigma^{-1}B\left(\beta\right)^{-1}\right]\left(\frac{\partial u_{t}\left(\theta\right)}{\partial\beta'}\right) & (B_{3})\\
 & \quad-H'K_{nn}\left(B'\left(\beta\right)^{-1}\otimes B\left(\beta\right)^{-1}\right)H & (C)
\end{align*}
Applying formula 9.22(3), i.e. $K_{mp}\left(a_{p\times1}\otimes A_{m\times n}\right)=\left(A_{m\times n}\otimes a_{p\times1}\right)$,
from \citet{luet_mat96}, page 117, to term $\left(A_{1}\right)$,
we obtain{\scriptsize{}
\begin{align*}
\left\{ \underbrace{\left[\left(u_{t}'\left(\theta\right)B'\left(\beta\right)^{-1}\otimes B\left(\beta\right)^{-1}\right)H\right]}_{=\left(n\times n(n-1)\right)}\otimes\underbrace{\left[B'\left(\beta\right)^{-1}\Sigma^{-1}e_{x,t}\left(\theta\right)\right]}_{=\left(n\times1\right)}\right\}  & =K_{n,n}\left\{ \left[B'\left(\beta\right)^{-1}\Sigma^{-1}e_{x,t}\left(\theta\right)\right]\otimes\left[\left(u_{t}'\left(\theta\right)B'\left(\beta\right)^{-1}\otimes B\left(\beta\right)^{-1}\right)H\right]\right\} \\
 & =K_{n,n}\left\{ \left[B'\left(\beta\right)^{-1}\Sigma^{-1}e_{x,t}\left(\theta\right)\right]\otimes\left[u_{t}'\left(\theta\right)B'\left(\beta\right)^{-1}\right]\otimes B\left(\beta\right)^{-1}\right\} H\\
 & =K_{n,n}\left\{ \left[B'\left(\beta\right)^{-1}\Sigma^{-1}e_{x,t}\left(\theta\right)u_{t}'\left(\theta\right)B'\left(\beta\right)^{-1}\right]\otimes B\left(\beta\right)^{-1}\right\} H
\end{align*}
}{\scriptsize\par}

Rearranging all terms results in
\begin{align*}
l_{\beta\beta,t}\left(\theta\right) & =H'K_{n,n}\left\{ \left[B'\left(\beta\right)^{-1}\Sigma^{-1}e_{x,t}\left(\theta\right)u_{t}'\left(\theta\right)B'\left(\beta\right)^{-1}\right]\otimes B\left(\beta\right)^{-1}\right\} H & \left(A_{1}\right)\\
 & \quad+H'\left\{ \left[B\left(\beta\right)^{-1}u_{t}\left(\theta\right)e_{x,t}'\left(\theta\right)\Sigma^{-1}B\left(\beta\right)^{-1}\right]\otimes B'\left(\beta\right)^{-1}\right\} K_{nn}H & \left(A_{3}\right)\\
\\
 & \quad-H'\left\{ B\left(\beta\right)^{-1}u_{t}\left(\theta\right)\otimes\left[B'\left(\beta\right)^{-1}\Sigma^{-1}e_{xx,t}\left(\theta\right)\Sigma^{-1}B\left(\beta\right)^{-1}\left(\frac{\partial u_{t}\left(\theta\right)}{\partial\beta'}\right)\right]\right\}  & \left(A_{5}\right)\\
 & \quad-H'\left\{ u_{t}'\left(\theta\right)B'\left(\beta\right)^{-1}\otimes\left[\left(\frac{\partial u_{t}'\left(\theta\right)}{\partial\beta}\right)B'\left(\beta\right)^{-1}\Sigma^{-1}e_{xx,t}\left(\theta\right)\Sigma^{-1}B\left(\beta\right)^{-1}\right]\right\} H & \left(B_{2}\right)\\
\\
 & \quad-H'\left[B\left(\beta\right)^{-1}u_{t}\left(\theta\right)u_{t}'B'\left(\beta\right)^{-1}\left(\theta\right)\otimes B'\left(\beta\right)^{-1}\Sigma^{-1}e_{xx,t}\left(\theta\right)\Sigma^{-1}B\left(\beta\right)^{-1}\right]H & \left(A_{4}\right)\\
 & \quad+\left(\frac{\partial u_{t}'\left(\theta\right)}{\partial\beta}\right)\left[B'\left(\beta\right)^{-1}\Sigma^{-1}e_{xx,t}\left(\theta\right)\Sigma^{-1}B\left(\beta\right)^{-1}\right]\left(\frac{\partial u_{t}\left(\theta\right)}{\partial\beta'}\right) & \left(B_{3}\right)\\
\\
 & \quad+H\left[B\left(\beta\right)^{-1}\left(\frac{\partial u_{t}\left(\theta\right)}{\partial\beta'}\right)\otimes B'\left(\beta\right)^{-1}\Sigma^{-1}e_{x,t}\left(\theta\right)\right] & \left(A_{2}\right)\\
 & \quad+\left[\frac{\partial}{\partial\beta'}\left(\frac{\partial u_{t}'\left(\theta\right)}{\partial\beta}\right)\right]B'\left(\beta\right)^{-1}\Sigma^{-1}e_{x,t}\left(\theta\right) & \left(B_{1}\right)\\
\\
 & \quad-H'K_{nn}\left(B'\left(\beta\right)^{-1}\otimes B\left(\beta\right)^{-1}\right)H & \left(C\right)
\end{align*}

\subsubsection{Term $\left(\beta,\pi_{3}\right)$}

\paragraph{Intermediate result for $l_{\beta\pi_{3},t}\left(\theta\right)$.}

Taking the derivative with respect to $\pi_{3}'$ of \eqref{eq:mds_lik_beta},
we obtain
\begin{align*}
\frac{\partial}{\partial\pi_{3}'}\left(\frac{\partial l_{t}\left(\theta\right)}{\partial\beta}\right) & =\left[\frac{\partial}{\partial\pi_{3}'}\left(\frac{\partial u_{t}'\left(\theta\right)}{\partial\beta}\right)\right]B'\left(\beta\right)^{-1}\Sigma^{-1}e_{x,t}\left(\theta\right)+\left(\frac{\partial u_{t}'\left(\theta\right)}{\partial\beta}\right)B'\left(\beta\right)^{-1}\Sigma^{-1}\left(\frac{\partial e_{x,t}\left(\theta\right)}{\partial\pi_{3}'}\right)\\
 & \quad-H\left(B\left(\beta\right)^{-1}\frac{\partial u_{t}\left(\theta\right)}{\partial\pi_{3}'}\otimes B'\left(\beta\right)^{-1}\Sigma^{-1}e_{x,t}\left(\theta\right)\right)-H\left(B\left(\beta\right)^{-1}u_{t}\left(\theta\right)\otimes B'\left(\beta\right)^{-1}\Sigma^{-1}\left(\frac{\partial e_{x,t}\left(\theta\right)}{\partial\pi_{3}'}\right)\right)
\end{align*}

\paragraph{Expressions in the intermediate result for $l_{\beta\pi_{3},t}\left(\theta\right)$.}

As usual, we obtain 
\begin{align*}
\frac{\partial e_{x,t}\left(\theta\right)}{\partial\pi_{3}'} & =\sum_{i=1}^{n}e_{i,xx,t}\left(\theta\right)\sigma_{i}^{-1}\iota_{i}'B'\left(\beta\right)^{-1}\frac{\partial u_{t}\left(\theta\right)}{\partial\pi_{3}'}\\
 & =e_{xx,t}\left(\theta\right)\Sigma^{-1}B'\left(\beta\right)^{-1}\frac{\partial u_{t}\left(\theta\right)}{\partial\pi_{3}'},
\end{align*}
where $\frac{\partial u_{t}'\left(\theta\right)}{\partial\pi_{3}}=-\left[\begin{pmatrix}w_{0}\left(\theta\right) & \cdots & w_{T-1}\left(\theta\right)\end{pmatrix}\otimes I_{n}\right]\mathfrak{b_{t}}=w_{b,t-1}\left(\theta\right)$
and $\frac{\partial u_{t}'\left(\theta\right)}{\partial\beta}=-H'\sum_{i=1}^{q}\left(B\left(\beta\right)^{-1}u_{t-i}\left(\theta\right)\otimes b_{i}'\right)$,
such that
\begin{align*}
\frac{\partial}{\partial\pi_{3}'}\left(\frac{\partial u_{t}'\left(\theta\right)}{\partial\beta}\right) & =-\frac{\partial}{\partial\pi_{3}'}\left(H'\sum_{i=1}^{q}\left(B\left(\beta\right)^{-1}u_{t-i}\left(\theta\right)\otimes b_{i}'\right)\right)\\
 & =-\left(H'\sum_{i=1}^{q}\left(B\left(\beta\right)^{-1}\frac{\partial u_{t-i}\left(\theta\right)}{\partial\pi_{3}'}\otimes b_{i}'\right)\right)\\
 & =H'\sum_{i=1}^{q}\left(\left\{ B\left(\beta\right)^{-1}w_{b,t-i}'\left(\theta\right)\right\} \otimes b_{i}'\right).
\end{align*}

\paragraph{Result.}

We have that {\scriptsize{}
\begin{align*}
l_{\beta\pi_{3},t}\left(\theta\right) & =\left[\frac{\partial}{\partial\pi_{3}'}\left(\frac{\partial u_{t}'\left(\theta\right)}{\partial\beta}\right)\right]B'\left(\beta\right)^{-1}\Sigma^{-1}e_{x,t}\left(\theta\right)+\left(H'\sum_{i=1}^{q}\left(B\left(\beta\right)^{-1}u_{t-i}\left(\theta\right)\otimes b_{i}'\right)\right)B'\left(\beta\right)^{-1}\Sigma^{-1}\left(e_{xx,t}\left(\theta\right)\Sigma^{-1}B'\left(\beta\right)^{-1}\frac{\partial u_{t}\left(\theta\right)}{\partial\pi_{3}'}\right)\\
 & \quad-H\left(\left\{ B\left(\beta\right)^{-1}w_{b,t-i}'\left(\theta\right)\right\} \otimes B'\left(\beta\right)^{-1}\Sigma^{-1}e_{x,t}\left(\theta\right)\right)-H\left(B\left(\beta\right)^{-1}u_{t}\left(\theta\right)\otimes B'\left(\beta\right)^{-1}\Sigma^{-1}\left(e_{xx,t}\left(\theta\right)\Sigma^{-1}B'\left(\beta\right)^{-1}\frac{\partial u_{t}\left(\theta\right)}{\partial\pi_{3}'}\right)\right)\\
 & =H'\sum_{i=1}^{q}\left(\left\{ B\left(\beta\right)^{-1}w_{b,t-i}'\left(\theta\right)\right\} \otimes b_{i}'\right)B'\left(\beta\right)^{-1}\Sigma^{-1}e_{x,t}\left(\theta\right)\\
 & \quad+\left(H'\sum_{i=1}^{q}\left(B\left(\beta\right)^{-1}u_{t-i}\left(\theta\right)\otimes b_{i}'\right)\right)B'\left(\beta\right)^{-1}\Sigma^{-1}e_{xx,t}\left(\theta\right)\Sigma^{-1}B'\left(\beta\right)^{-1}w_{b,t-i}'\left(\theta\right)\\
 & \quad-H\left(\left\{ B\left(\beta\right)^{-1}w_{b,t-i}'\left(\theta\right)\right\} \otimes B'\left(\beta\right)^{-1}\Sigma^{-1}e_{x,t}\left(\theta\right)\right)\\
 & \quad-H\left(B\left(\beta\right)^{-1}u_{t}\left(\theta\right)\otimes B'\left(\beta\right)^{-1}\Sigma^{-1}e_{xx,t}\left(\theta\right)\Sigma^{-1}B'\left(\beta\right)^{-1}w_{b,t-i}'\left(\theta\right)\right)
\end{align*}
}{\scriptsize\par}

\subsubsection{Term $\left(\beta,\pi_{2}\right)$}

Taking the derivative of \eqref{eq:mds_lik_beta} with respect to
$\pi_{2}'$, we obtain
\begin{align*}
\frac{\partial}{\partial\pi_{2}'}\left(\frac{\partial l_{t}\left(\theta\right)}{\partial\beta}\right) & =\left[\frac{\partial}{\partial\pi_{2}'}\left(\frac{\partial u_{t}'\left(\theta\right)}{\partial\beta}\right)\right]B'\left(\beta\right)^{-1}\Sigma^{-1}e_{x,t}\left(\theta\right)+\left(\frac{\partial u_{t}'\left(\theta\right)}{\partial\beta}\right)B'\left(\beta\right)^{-1}\Sigma^{-1}\left(\frac{\partial e_{x,t}\left(\theta\right)}{\partial\pi_{2}'}\right)\\
 & \quad-H\left(B\left(\beta\right)^{-1}\frac{\partial u_{t}\left(\theta\right)}{\partial\pi_{2}'}\otimes B'\left(\beta\right)^{-1}\Sigma^{-1}e_{x,t}\left(\theta\right)\right)-H\left(B\left(\beta\right)^{-1}u_{t}\left(\theta\right)\otimes B'\left(\beta\right)^{-1}\Sigma^{-1}\left(\frac{\partial e_{x,t}\left(\theta\right)}{\partial\pi_{2}'}\right)\right)
\end{align*}

\paragraph{Expressions in the intermediate result for $l_{\beta\pi_{2},t}\left(\theta\right)$.}

We need the expressions $\frac{\partial e_{x,t}\left(\theta\right)}{\partial\pi_{2}'}=e_{xx,t}\left(\theta\right)\Sigma^{-1}B'\left(\beta\right)^{-1}\frac{\partial u_{t}\left(\theta\right)}{\partial\pi_{2}'}$
and $\frac{\partial u_{t}'\left(\theta\right)}{\partial\pi_{2}}=-\left[\begin{pmatrix}x_{0} & \cdots & x_{T-1}\end{pmatrix}\otimes I_{n}\right]\mathfrak{b_{t}}=-x_{b,t-1}\left(\theta\right)$
as well as $\frac{\partial u_{t}'\left(\theta\right)}{\partial\beta}=H'\sum_{i=1}^{q}\left(B\left(\beta\right)^{-1}u_{t-i}\left(\theta\right)\otimes b_{i}'\right)$.

\paragraph{Result.}

We have that {\scriptsize{}
\begin{align*}
l_{\beta\pi_{2},t}\left(\theta\right) & =\underbrace{\left[\frac{\partial}{\partial\pi_{2}'}\left(\frac{\partial u_{t}'\left(\theta\right)}{\partial\beta}\right)\right]}_{=0}B'\left(\beta\right)^{-1}\Sigma^{-1}e_{x,t}\left(\theta\right)+\left(H'\sum_{i=1}^{q}\left(B\left(\beta\right)^{-1}u_{t-i}\left(\theta\right)\otimes b_{i}'\right)\right)B'\left(\beta\right)^{-1}\Sigma^{-1}\left(e_{xx,t}\left(\theta\right)\Sigma^{-1}B'\left(\beta\right)^{-1}\frac{\partial u_{t}\left(\theta\right)}{\partial\pi_{2}'}\right)\\
 & \qquad-H\left(\left\{ B\left(\beta\right)^{-1}x_{b,t-1}'\left(\theta\right)\right\} \otimes B'\left(\beta\right)^{-1}\Sigma^{-1}e_{x,t}\left(\theta\right)\right)-H\left(B\left(\beta\right)^{-1}u_{t}\left(\theta\right)\otimes B'\left(\beta\right)^{-1}\Sigma^{-1}\left(e_{xx,t}\left(\theta\right)\Sigma^{-1}B'\left(\beta\right)^{-1}\frac{\partial u_{t}\left(\theta\right)}{\partial\pi_{2}'}\right)\right)\\
 & =\left(H'\sum_{i=1}^{q}\left(B\left(\beta\right)^{-1}u_{t-i}\left(\theta\right)\otimes b_{i}'\right)\right)B'\left(\beta\right)^{-1}\Sigma^{-1}e_{xx,t}\left(\theta\right)\Sigma^{-1}B'\left(\beta\right)^{-1}x_{b,t-1}'\left(\theta\right)\\
 & \qquad-H\left(\left\{ B\left(\beta\right)^{-1}x_{b,t-1}'\left(\theta\right)\right\} \otimes B'\left(\beta\right)^{-1}\Sigma^{-1}e_{x,t}\left(\theta\right)\right)-H\left(B\left(\beta\right)^{-1}u_{t}\left(\theta\right)\otimes B'\left(\beta\right)^{-1}\Sigma^{-1}e_{xx,t}\left(\theta\right)\Sigma^{-1}B'\left(\beta\right)^{-1}x_{b,t-1}'\left(\theta\right)\right).
\end{align*}
}{\scriptsize\par}

\subsection{Elements of the Hessian involving $\sigma$}

\subsubsection{Diagonal term $\left(\sigma,\sigma\right)$}

Remember that $\frac{\partial}{\partial\sigma_{i}}l_{t}\left(\theta\right)=-e_{i,x,t}\left(\theta\right)\sigma_{i}^{-2}\iota_{i}^{'}B\left(\beta\right)^{-1}u_{t}\left(\theta\right)-\sigma_{i}^{-1}$
such that the second derivative is 
\begin{align*}
\frac{\partial}{\partial\sigma_{i}}\left(\frac{\partial l_{t}\left(\theta\right)}{\partial\sigma_{i}}\right) & =-\left(\frac{\partial e_{i,x,t}\left(\theta\right)}{\partial\sigma_{i}}\right)\sigma_{i}^{-2}\iota_{i}^{'}B\left(\beta\right)^{-1}u_{t}\left(\theta\right)-e_{i,x,t}\left(\theta\right)\left(\frac{\partial\sigma_{i}^{-2}}{\partial\sigma_{i}}\right)\iota_{i}^{'}B\left(\beta\right)^{-1}u_{t}\left(\theta\right)-\left(\frac{\partial\sigma_{i}^{-1}}{\partial\sigma_{i}}\right)\\
 & =-e_{i,xx,t}\left(\theta\right)\left(-\sigma_{i}^{-1}\right)\sigma_{i}^{-2}\varepsilon_{i,t}\left(\theta\right)-e_{i,x,t}\left(\theta\right)\left(-2\sigma_{i}^{-3}\right)\varepsilon_{i,t}\left(\theta\right)-\left(-\sigma_{i}^{-2}\right)\\
 & =e_{i,xx,t}\left(\theta\right)\sigma_{i}^{-4}\varepsilon_{i,t}\left(\theta\right)+2e_{i,x,t}\left(\theta\right)\sigma_{i}^{-3}\varepsilon_{i,t}\left(\theta\right)+\sigma_{i}^{-2}.
\end{align*}

\paragraph{Result.}

We obtain that 
\[
l_{\sigma\sigma,t}\left(\theta\right)=\Sigma^{-4}e_{xx,t}\left(\theta\right)\mathcal{E}_{t}^{2}\left(\theta\right)+2\Sigma^{-3}E_{x,t}\left(\theta\right)\mathcal{E}_{t}\left(\theta\right)+\Sigma^{-2}
\]
where $E_{x,t}\left(\theta\right)=diag\left(e_{1,x,t}\left(\theta\right),\ldots,e_{n,x,t}\left(\theta\right)\right)$
and $\mathcal{E}_{t}\left(\theta\right)=diag\left(\varepsilon_{1,t}\left(\theta\right),\ldots,\varepsilon_{n,t}\left(\theta\right)\right)$.

\subsubsection{Diagonal term $\left(\sigma,\beta\right)$}

\paragraph{Intermediate result for $l_{\sigma\beta,t}\left(\theta\right)$.}

We have that 
\begin{align*}
\frac{\partial}{\partial\beta'}\left(\frac{\partial l_{t}\left(\theta\right)}{\partial\sigma_{i}}\right) & =\frac{\partial}{\partial\beta'}\left(-e_{i,x,t}\left(\theta\right)\sigma_{i}^{-2}\iota_{i}^{'}B\left(\beta\right)^{-1}u_{t}\left(\theta\right)-\sigma_{i}^{-1}\right)\\
 & =-\left(\frac{\partial e_{i,x,t}\left(\theta\right)}{\partial\beta'}\right)\sigma_{i}^{-2}\varepsilon_{i,t}-e_{i,x,t}\left(\theta\right)\sigma_{i}^{-2}\left(u_{t}'\left(\theta\right)\otimes\iota_{i}^{'}\right)\left(\frac{\partial B\left(\beta\right)^{-1}}{\partial\beta'}\right)-e_{i,x,t}\left(\theta\right)\sigma_{i}^{-2}\iota_{i}^{'}B\left(\beta\right)^{-1}\left(\frac{\partial u_{t}\left(\theta\right)}{\partial\beta'}\right)\\
 & =-\left\{ e_{i,xx,t}\left(\theta\right)\sigma_{i}^{-1}\left[\left(u_{t}'\left(\theta\right)\otimes\iota_{i}^{'}\right)\left(\frac{\partial B\left(\beta\right)^{-1}}{\partial\beta'}\right)+\iota_{i}'B\left(\beta\right)^{-1}\frac{\partial u_{t}\left(\theta\right)}{\partial\beta'}\right]\right\} \sigma_{i}^{-2}\varepsilon_{i,t}\\
 & \qquad-e_{i,x,t}\left(\theta\right)\sigma_{i}^{-2}\left(u_{t}'\left(\theta\right)\otimes\iota_{i}^{'}\right)\left(\frac{\partial B\left(\beta\right)^{-1}}{\partial\beta'}\right)-e_{i,x,t}\left(\theta\right)\sigma_{i}^{-2}\iota_{i}^{'}B\left(\beta\right)^{-1}\left(\frac{\partial u_{t}\left(\theta\right)}{\partial\beta'}\right)\\
 & =-\left\{ e_{i,xx,t}\left(\theta\right)\sigma_{i}^{-1}\left[-\left(u_{t}'\left(\theta\right)\otimes\iota_{i}^{'}\right)\left(B'\left(\beta\right)^{-1}\otimes B\left(\beta\right)^{-1}\right)H+\iota_{i}'B\left(\beta\right)^{-1}\frac{\partial u_{t}\left(\theta\right)}{\partial\beta'}\right]\right\} \sigma_{i}^{-2}\varepsilon_{i,t}\\
 & \qquad+e_{i,x,t}\left(\theta\right)\sigma_{i}^{-2}\left(u_{t}'\left(\theta\right)\otimes\iota_{i}^{'}\right)\left(B'\left(\beta\right)^{-1}\otimes B\left(\beta\right)^{-1}\right)H-e_{i,x,t}\left(\theta\right)\sigma_{i}^{-2}\iota_{i}^{'}B\left(\beta\right)^{-1}\left(\frac{\partial u_{t}\left(\theta\right)}{\partial\beta'}\right)\\
 & =-\left\{ e_{i,xx,t}\left(\theta\right)\sigma_{i}^{-1}\left[-\left(u_{t}'\left(\theta\right)B'\left(\beta\right)^{-1}\otimes\iota_{i}^{'}B\left(\beta\right)^{-1}\right)H+\iota_{i}'B\left(\beta\right)^{-1}\frac{\partial u_{t}\left(\theta\right)}{\partial\beta'}\right]\right\} \sigma_{i}^{-2}\varepsilon_{i,t}\\
 & \qquad+e_{i,x,t}\left(\theta\right)\sigma_{i}^{-2}\left(u_{t}'\left(\theta\right)B'\left(\beta\right)^{-1}\otimes\iota_{i}^{'}B\left(\beta\right)^{-1}\right)H-e_{i,x,t}\left(\theta\right)\sigma_{i}^{-2}\iota_{i}^{'}B\left(\beta\right)^{-1}\left(\frac{\partial u_{t}\left(\theta\right)}{\partial\beta'}\right)
\end{align*}
where we have used 
\[
\frac{\partial e_{i,x,t}\left(\theta\right)}{\partial\beta'}=e_{i,xx,t}\left(\theta\right)\sigma_{i}^{-1}\left(-\left(u_{t}'\left(\theta\right)\otimes\iota_{i}^{'}\right)\left(\frac{\partial B\left(\beta\right)^{-1}}{\partial\beta'}\right)+\iota_{i}'B\left(\beta\right)^{-1}\frac{\partial u_{t}\left(\theta\right)}{\partial\beta'}\right)
\]
 and $\frac{\partial B\left(\beta\right)^{-1}}{\partial\beta'}=-\left(B'\left(\beta\right)^{-1}\otimes B\left(\beta\right)^{-1}\right)H$.

\paragraph{Expressions in the intermediate result for $l_{\sigma\beta,t}\left(\theta\right)$.}

Taking finally $\frac{\partial u_{t}\left(\theta\right)}{\partial\beta'}=\sum_{i=1}^{q}\left(u_{t-i}'B'\left(\beta\right)^{-1}\left(\theta\right)\otimes b_{i}\right)H$
into account, we obtain that {\scriptsize{}
\begin{align*}
\frac{\partial}{\partial\beta'}\left(\frac{\partial l_{t}\left(\theta\right)}{\partial\sigma}\right) & =-\left\{ \Sigma^{-3}e_{xx,t}\left(\theta\right)\mathcal{E}_{t}\left(\theta\right)\left[-\left(u_{t}'\left(\theta\right)B'\left(\beta\right)^{-1}\otimes B\left(\beta\right)^{-1}\right)H+B\left(\beta\right)^{-1}\frac{\partial u_{t}\left(\theta\right)}{\partial\beta'}\right]\right\} \sigma_{i}^{-2}\varepsilon_{i,t}\\
 & \qquad+\Sigma^{-2}E_{x,t}\left(\theta\right)\left(u_{t}'\left(\theta\right)B'\left(\beta\right)^{-1}\otimes B\left(\beta\right)^{-1}\right)H-\Sigma^{-2}E_{x,t}\left(\theta\right)B\left(\beta\right)^{-1}\left(\frac{\partial u_{t}\left(\theta\right)}{\partial\beta'}\right)\\
 & =\Sigma^{-2}\left(\Sigma^{-1}e_{xx,t}\left(\theta\right)\mathcal{E}_{t}\left(\theta\right)+E_{x,t}\left(\theta\right)\right)\left[\left(u_{t}'\left(\theta\right)B'\left(\beta\right)^{-1}\otimes B\left(\beta\right)^{-1}\right)H-B\left(\beta\right)^{-1}\left(\frac{\partial u_{t}\left(\theta\right)}{\partial\beta'}\right)\right]\\
 & =\Sigma^{-2}\left(\Sigma^{-1}e_{xx,t}\left(\theta\right)\mathcal{E}_{t}\left(\theta\right)+E_{x,t}\left(\theta\right)\right)\left[\left(u_{t}'\left(\theta\right)B'\left(\beta\right)^{-1}\otimes B\left(\beta\right)^{-1}\right)H-B\left(\beta\right)^{-1}\left(\sum_{i=1}^{q}\left(u_{t-i}'B'\left(\beta\right)^{-1}\left(\theta\right)\otimes b_{i}\right)H\right)\right]\\
 & =\Sigma^{-2}\left(\Sigma^{-1}e_{xx,t}\left(\theta\right)\mathcal{E}_{t}\left(\theta\right)+E_{x,t}\left(\theta\right)\right)\left[\left(u_{t}'\left(\theta\right)B'\left(\beta\right)^{-1}\otimes B\left(\beta\right)^{-1}\right)-\left(\sum_{i=1}^{q}\left(u_{t-i}'B'\left(\beta\right)^{-1}\left(\theta\right)\otimes B\left(\beta\right)^{-1}b_{i}\right)\right)\right]H
\end{align*}
}{\scriptsize\par}

\subsubsection{Diagonal term \textrm{\textmd{$\left(\sigma,\pi_{3}\right)$}}}

\paragraph{Intermediate result for $l_{\sigma\pi_{3},t}\left(\theta\right)$.}

We have that 
\begin{align*}
\frac{\partial}{\partial\pi_{3}'}\left(\frac{\partial l_{t}\left(\theta\right)}{\partial\sigma_{i}}\right) & =\frac{\partial}{\partial\pi_{3}'}\left(-\sigma_{i}^{-2}e_{i,x,t}\left(\theta\right)\iota_{i}^{'}B\left(\beta\right)^{-1}u_{t}\left(\theta\right)-\sigma_{i}^{-1}\right)\\
 & =-\sigma_{i}^{-2}\varepsilon_{i,t}\left(\theta\right)\left(\frac{\partial e_{i,x,t}\left(\theta\right)}{\partial\pi_{3}'}\right)-\sigma_{i}^{-2}e_{i,x,t}\left(\theta\right)\iota_{i}^{'}B\left(\beta\right)^{-1}\left(\frac{\partial u_{t}\left(\theta\right)}{\partial\pi_{3}'}\right)\\
 & =-\sigma_{i}^{-2}\varepsilon_{i,t}\left(\theta\right)\left(e_{i,xx,t}\left(\theta\right)\sigma_{i}^{-1}\iota_{i}'B\left(\beta\right)^{-1}\left(\frac{\partial u_{t}\left(\theta\right)}{\partial\pi_{3}'}\right)\right)-\sigma_{i}^{-2}e_{i,x,t}\left(\theta\right)\iota_{i}^{'}B\left(\beta\right)^{-1}\left(\frac{\partial u_{t}\left(\theta\right)}{\partial\pi_{3}'}\right)\\
 & =-\sigma_{i}^{-2}\left(\sigma_{i}^{-1}e_{i,xx,t}\left(\theta\right)\varepsilon_{i,t}\left(\theta\right)+e_{i,x,t}\left(\theta\right)\right)\iota_{i}'B\left(\beta\right)^{-1}\left(\frac{\partial u_{t}\left(\theta\right)}{\partial\pi_{3}'}\right).
\end{align*}

\paragraph{Multivariate result for $l_{\sigma\pi_{3},t}\left(\theta\right)$.}

Taking finally $\frac{\partial u_{t}'\left(\theta\right)}{\partial\pi_{3}}=-\left[\begin{pmatrix}u_{0}\left(\theta\right) & \cdots & u_{T-1}\left(\theta\right)\\
\vdots &  & \vdots\\
u_{1-q}\left(\theta\right) & \cdots & u_{T-q}\left(\theta\right)
\end{pmatrix}\otimes I_{n}\right]\mathfrak{b_{t}}$ into account, we obtain that 
\begin{align*}
\frac{\partial}{\partial\pi_{3}'}\left(\frac{\partial l_{t}\left(\theta\right)}{\partial\sigma}\right) & =-\Sigma^{-2}\left(\Sigma^{-1}e_{xx,t}\left(\theta\right)\mathcal{E}_{t}\left(\theta\right)+E_{x,t}\left(\theta\right)\right)B\left(\beta\right)^{-1}\left(\frac{\partial u_{t}\left(\theta\right)}{\partial\pi_{3}'}\right)\\
 & =\Sigma^{-2}\left(\Sigma^{-1}e_{xx,t}\left(\theta\right)\mathcal{E}_{t}\left(\theta\right)+E_{x,t}\left(\theta\right)\right)B\left(\beta\right)^{-1}w_{b,t-1}'\left(\theta\right)
\end{align*}

\subsubsection{Diagonal term $\left(\sigma,\pi_{2}\right)$}

\paragraph{Multivariate result for $l_{\sigma\pi_{2},t}\left(\theta\right)$.}

Similarly to the case $\left(\sigma,\pi_{3}\right)$, but taking $\frac{\partial u_{t}'\left(\theta\right)}{\partial\pi_{2}}=-\left[\begin{pmatrix}x_{0} & \cdots & x_{T-1}\end{pmatrix}\otimes I_{n}\right]\mathfrak{b}_{t}=-x_{b,t-1}\left(\theta\right)$
into account, we obtain that 
\begin{align*}
\frac{\partial}{\partial\pi_{2}'}\left(\frac{\partial l_{t}\left(\theta\right)}{\partial\sigma}\right) & =-\Sigma^{-2}\left(\Sigma^{-1}e_{xx,t}\left(\theta\right)\mathcal{E}_{t}\left(\theta\right)+E_{x,t}\left(\theta\right)\right)B\left(\beta\right)^{-1}\left(\frac{\partial u_{t}\left(\theta\right)}{\partial\pi_{2}'}\right)\\
 & =\Sigma^{-2}\left(\Sigma^{-1}e_{xx,t}\left(\theta\right)\mathcal{E}_{t}\left(\theta\right)+E_{x,t}\left(\theta\right)\right)B\left(\beta\right)^{-1}x_{b,t-1}'\left(\theta\right)
\end{align*}

\subsection{Elements of the Hessian involving $\lambda$}

\subsubsection{Diagonal term $\left(\lambda,\lambda\right)$}

We have $l_{\lambda\lambda,t}\left(\theta\right)=e_{\lambda\lambda,t}\left(\theta\right)$

\subsubsection{Diagonal term $\left(\lambda,\sigma\right)$}

Since $\frac{\partial}{\partial\sigma_{i}}l_{t}\left(\theta\right)=-e_{i,x,t}\left(\theta\right)\sigma_{i}^{-2}\iota_{i}^{'}B\left(\beta\right)^{-1}u_{t}\left(\theta\right)-\sigma_{i}^{-1}$,
we have $\frac{\partial}{\partial\lambda_{i}'}\left(\frac{\partial l_{t}\left(\theta\right)}{\partial\sigma_{i}}\right)=-\sigma_{i}^{-2}\varepsilon_{i,t}\left(\theta\right)e_{i,x\lambda_{i},t}\left(\theta\right)$
such that
\[
l_{\sigma\lambda,t}\left(\theta\right)=-\Sigma^{2}\mathcal{E}_{t}\left(\theta\right)e_{x\lambda,t}\left(\theta\right).
\]

\subsubsection{Diagonal term $\left(\lambda,\beta\right)$}

Directly, we obtain{\scriptsize{}
\begin{align*}
\frac{\partial}{\partial\lambda'}\left(\frac{\partial l_{t}\left(\theta\right)}{\partial\beta}\right) & =\frac{\partial}{\partial\lambda'}\left[-H'\sum_{i=1}^{q}\left(I_{n}\otimes b_{i}'B'\left(\beta\right)^{-1}\Sigma^{-1}\right)\left(\varepsilon_{t-i}\left(\theta\right)\otimes e_{x,t}\left(\theta\right)\right)-H'\left(I_{n}\otimes B'\left(\beta\right)^{-1}\Sigma^{-1}\right)\left(\varepsilon_{t}\left(\theta\right)\otimes e_{x,t}\left(\theta\right)\right)-H'vec\left(B'\left(\beta\right)^{-1}\right)\right]\\
 & =-H'\sum_{i=1}^{q}\left(I_{n}\otimes b_{i}'B'\left(\beta\right)^{-1}\Sigma^{-1}\right)\left(\varepsilon_{t-i}\left(\theta\right)\otimes e_{x\lambda,t}\left(\theta\right)\right)-H'\left(I_{n}\otimes B'\left(\beta\right)^{-1}\Sigma^{-1}\right)\left(\varepsilon_{t}\left(\theta\right)\otimes e_{x\lambda,t}\left(\theta\right)\right)
\end{align*}
}{\scriptsize\par}

\subsubsection{Diagonal term $\left(\lambda,\pi_{3}\right)$}

Directly, we obtain
\[
\frac{\partial}{\partial\lambda'}\left(\frac{\partial l_{t}\left(\theta\right)}{\partial\pi_{3}}\right)=-w_{b,t-1}\left(\theta\right)\Sigma^{-1}e_{x\lambda,t}\left(\theta\right)
\]

\subsubsection{Diagonal term $\left(\lambda,\pi_{2}\right)$}

Directly, we obtain
\[
\frac{\partial}{\partial\lambda'}\left(\frac{\partial l_{t}\left(\theta\right)}{\partial\pi_{2}}\right)=-x_{b,t-1}\left(\theta\right)B'\left(\beta\right)^{-1}\Sigma^{-1}e_{x\lambda,t}\left(\theta\right)
\]

\section{\label{sec:hessian_ulln}Verifying Uniform Convergence of Hessian}

\subsection{Expression for Hessian}

Here, we summarize the expressions derived in \ref{sec:hessian_expr}
in order to subsequently verify the respective terms satisfy a ULLN,
i.e. we need to that the $\mathbb{E}\left(\sup_{\theta\in\Theta_{0}}\left\Vert \frac{\partial^{2}\log\left(f\left(x,\theta\right)\right)}{\partial\theta\partial\theta'}\right\Vert \right)<\infty$
holds.

\begin{align*}
l_{\pi_{2}\pi_{2},t}\left(\theta\right) & =x_{b,t-1}\left(\theta\right)B'\left(\beta\right)^{-1}\Sigma^{-1}e_{xx,t}\left(\theta\right)\Sigma^{-1}B(\beta)^{-1}x_{b,t-1}'\left(\theta\right).\\
l_{\pi_{3}\pi_{3},t}\left(\theta\right) & =\left(e_{x,t}'\left(\theta\right)\Sigma^{-1}B'\left(\beta\right)^{-1}\otimes I_{n^{2}p}\right)\left(\mathcal{C}^{-1}\right)_{\left[t,\bullet\right]}\left(\mathcal{D}+\mathcal{E}\right)+w_{b,t-1}\left(\theta\right)B'\left(\beta\right)^{-1}\Sigma^{-1}e_{xx,t}\left(\theta\right)\Sigma^{-1}B\left(\beta\right)^{-1}w_{b,t-1}'\left(\theta\right).\\
l_{\pi_{2}\pi_{3},t}\left(\theta\right) & =x_{b,t-1}\left(\theta\right)B'\left(\beta\right)^{-1}\Sigma^{-1}e_{xx,t}\left(\theta\right)\Sigma^{-1}B\left(\beta\right)^{-1}w_{b,t-1}'\left(\theta\right).
\end{align*}

\begin{align*}
l_{\beta\beta,t}\left(\theta\right) & =H'K_{n,n}\left\{ \left[B'\left(\beta\right)^{-1}\Sigma^{-1}e_{x,t}\left(\theta\right)u_{t}'\left(\theta\right)B'\left(\beta\right)^{-1}\right]\otimes B\left(\beta\right)^{-1}\right\} H & \left(A_{1}\right)\\
 & \quad+H'\left\{ \left[B\left(\beta\right)^{-1}u_{t}\left(\theta\right)e_{x,t}'\left(\theta\right)\Sigma^{-1}B\left(\beta\right)^{-1}\right]\otimes B'\left(\beta\right)^{-1}\right\} K_{n,n}H & \left(A_{3}\right)\\
\\
 & \quad-H'\left\{ B\left(\beta\right)^{-1}u_{t}\left(\theta\right)\otimes\left[B'\left(\beta\right)^{-1}\Sigma^{-1}e_{xx,t}\left(\theta\right)\Sigma^{-1}B\left(\beta\right)^{-1}\left(\frac{\partial u_{t}\left(\theta\right)}{\partial\beta'}\right)\right]\right\}  & \left(A_{5}\right)\\
 & \quad-H'\left\{ u_{t}'\left(\theta\right)B'\left(\beta\right)^{-1}\otimes\left[\left(\frac{\partial u_{t}'\left(\theta\right)}{\partial\beta}\right)B'\left(\beta\right)^{-1}\Sigma^{-1}e_{xx,t}\left(\theta\right)\Sigma^{-1}B\left(\beta\right)^{-1}\right]\right\} H & \left(B_{2}\right)\\
\\
 & \quad-H'\left[B\left(\beta\right)^{-1}u_{t}\left(\theta\right)u_{t}'\left(\theta\right)B'\left(\beta\right)^{-1}\left(\theta\right)\otimes B'\left(\beta\right)^{-1}\Sigma^{-1}e_{xx,t}\left(\theta\right)\Sigma^{-1}B\left(\beta\right)^{-1}\right]H & \left(A_{4}\right)\\
 & \quad+\left(\frac{\partial u_{t}'\left(\theta\right)}{\partial\beta}\right)\left[B'\left(\beta\right)^{-1}\Sigma^{-1}e_{xx,t}\left(\theta\right)\Sigma^{-1}B\left(\beta\right)^{-1}\right]\left(\frac{\partial u_{t}\left(\theta\right)}{\partial\beta'}\right) & \left(B_{3}\right)\\
\\
 & \quad+H\left[B\left(\beta\right)^{-1}\left(\frac{\partial u_{t}\left(\theta\right)}{\partial\beta'}\right)\otimes B'\left(\beta\right)^{-1}\Sigma^{-1}e_{x,t}\left(\theta\right)\right] & \left(A_{2}\right)\\
 & \quad+\left[\frac{\partial}{\partial\beta'}\left(\frac{\partial u_{t}'\left(\theta\right)}{\partial\beta}\right)\right]B'\left(\beta\right)^{-1}\Sigma^{-1}e_{x,t}\left(\theta\right) & \left(B_{1}\right)\\
\\
 & \quad-H'K_{nn}\left(B'\left(\beta\right)^{-1}\otimes B\left(\beta\right)^{-1}\right)H & \left(C\right)
\end{align*}

\begin{align*}
l_{\beta\pi_{3},t}\left(\theta\right) & =H'\sum_{i=1}^{q}\left(\left\{ B\left(\beta\right)^{-1}w_{b,t-i}'\left(\theta\right)\right\} \otimes b_{i}'\right)B'\left(\beta\right)^{-1}\Sigma^{-1}e_{x,t}\left(\theta\right)\\
 & \quad+\left(H'\sum_{i=1}^{q}\left(B\left(\beta\right)^{-1}u_{t-i}\left(\theta\right)\otimes b_{i}'\right)\right)B'\left(\beta\right)^{-1}\Sigma^{-1}e_{xx,t}\left(\theta\right)\Sigma^{-1}B'\left(\beta\right)^{-1}w_{b,t-i}'\left(\theta\right)\\
 & \quad-H\left(\left\{ B\left(\beta\right)^{-1}w_{b,t-i}'\left(\theta\right)\right\} \otimes B'\left(\beta\right)^{-1}\Sigma^{-1}e_{x,t}\left(\theta\right)\right)\\
 & \quad-H\left(B\left(\beta\right)^{-1}u_{t}\left(\theta\right)\otimes B'\left(\beta\right)^{-1}\Sigma^{-1}e_{xx,t}\left(\theta\right)\Sigma^{-1}B'\left(\beta\right)^{-1}w_{b,t-i}'\left(\theta\right)\right)
\end{align*}

{\scriptsize{}
\begin{align*}
l_{\beta\pi_{2},t}\left(\theta\right) & =\left(H'\sum_{i=1}^{q}\left(B\left(\beta\right)^{-1}u_{t-i}\left(\theta\right)\otimes b_{i}'\right)\right)B'\left(\beta\right)^{-1}\Sigma^{-1}e_{xx,t}\left(\theta\right)\Sigma^{-1}B'\left(\beta\right)^{-1}x_{b,t-1}'\left(\theta\right)\\
 & \qquad-H\left(\left\{ B\left(\beta\right)^{-1}x_{b,t-1}'\left(\theta\right)\right\} \otimes B'\left(\beta\right)^{-1}\Sigma^{-1}e_{x,t}\left(\theta\right)\right)-H\left(B\left(\beta\right)^{-1}u_{t}\left(\theta\right)\otimes B'\left(\beta\right)^{-1}\Sigma^{-1}e_{xx,t}\left(\theta\right)\Sigma^{-1}B'\left(\beta\right)^{-1}x_{b,t-1}'\left(\theta\right)\right).
\end{align*}
}{\scriptsize\par}

\begin{align*}
l_{\sigma\sigma,t}\left(\theta\right) & =\Sigma^{-4}e_{xx,t}\left(\theta\right)\mathcal{E}_{t}^{2}\left(\theta\right)+2\Sigma^{-3}E_{x,t}\left(\theta\right)\mathcal{E}_{t}\left(\theta\right)+\Sigma^{-2}\\
l_{\sigma\beta,t}\left(\theta\right) & =\Sigma^{-2}\left(\Sigma^{-1}e_{xx,t}\left(\theta\right)\mathcal{E}_{t}\left(\theta\right)+E_{x,t}\left(\theta\right)\right)\left[\left(u_{t}'\left(\theta\right)B'\left(\beta\right)^{-1}\otimes B\left(\beta\right)^{-1}\right)-\left(\sum_{i=1}^{q}\left(u_{t-i}'B'\left(\beta\right)^{-1}\left(\theta\right)\otimes B\left(\beta\right)^{-1}b_{i}\right)\right)\right]H\\
l_{\sigma\pi_{3},t}\left(\theta\right) & =\Sigma^{-2}\left(\Sigma^{-1}e_{xx,t}\left(\theta\right)\mathcal{E}_{t}\left(\theta\right)+E_{x,t}\left(\theta\right)\right)B\left(\beta\right)^{-1}w_{b,t-1}'\left(\theta\right)\\
l_{\sigma\pi_{2},t}\left(\theta\right) & =\Sigma^{-2}\left(\Sigma^{-1}e_{xx,t}\left(\theta\right)\mathcal{E}_{t}\left(\theta\right)+E_{x,t}\left(\theta\right)\right)B\left(\beta\right)^{-1}x_{b,t-1}'\left(\theta\right)\\
l_{\lambda\lambda,t}\left(\theta\right) & =e_{\lambda\lambda,t}\left(\theta\right)\\
l_{\sigma\lambda,t}\left(\theta\right) & =-\Sigma^{2}\mathcal{E}_{t}\left(\theta\right)e_{x\lambda,t}\left(\theta\right).\\
l_{\beta\lambda,t}\left(\theta\right) & =-H'\sum_{i=1}^{q}\left(I_{n}\otimes b_{i}'B'\left(\beta\right)^{-1}\Sigma^{-1}\right)\left(\varepsilon_{t-i}\left(\theta\right)\otimes e_{x\lambda,t}\left(\theta\right)\right)-H'\left(I_{n}\otimes B'\left(\beta\right)^{-1}\Sigma^{-1}\right)\left(\varepsilon_{t}\left(\theta\right)\otimes e_{x\lambda,t}\left(\theta\right)\right)\\
l_{\pi_{3}\lambda,t}\left(\theta\right) & =-w_{b,t-1}\left(\theta\right)\Sigma^{-1}e_{x\lambda,t}\left(\theta\right)\\
l_{\pi_{2}\lambda,t}\left(\theta\right) & =-x_{b,t-1}\left(\theta\right)B'\left(\beta\right)^{-1}\Sigma^{-1}e_{x\lambda,t}\left(\theta\right)
\end{align*}

We choose the Frobenius norm, i.e. $\left\Vert A\right\Vert _{F}=\sqrt{\sum_{i=1}^{m}\sum_{j=1}^{n}a_{ij}^{2}}$
for an $\left(m\times n\right)$-dimensional matrix $A$, as particular
matrix norm because it satisfies $\left\Vert \begin{pmatrix}A_{11} & A_{12}\\
A_{21} & A_{22}
\end{pmatrix}\right\Vert _{F}^{2}=\left\Vert A_{11}\right\Vert _{F}^{2}+\left\Vert A_{12}\right\Vert _{F}^{2}+\left\Vert A_{21}\right\Vert _{F}^{2}+\left\Vert A_{22}\right\Vert _{F}^{2}$ for partitioned matrices, which in turn implies that $\mathbb{E}\left(\sup_{\theta\in\Theta_{0}}\left\Vert \begin{pmatrix}A_{11} & A_{12}\\
A_{21} & A_{22}
\end{pmatrix}\right\Vert \right)\leq\mathbb{E}\left(\sup_{\theta\in\Theta_{0}}\left\Vert A_{11}\right\Vert \right)+\mathbb{E}\left(\sup_{\theta\in\Theta_{0}}\left\Vert A_{12}\right\Vert \right)+\mathbb{E}\left(\sup_{\theta\in\Theta_{0}}\left\Vert A_{21}\right\Vert \right)+\mathbb{E}\left(\sup_{\theta\in\Theta_{0}}\left\Vert A_{22}\right\Vert \right)$. Moreover, the Frobenius norm is submultiplicative, i.e. $\left\Vert AB\right\Vert _{F}\leq\left\Vert A\right\Vert _{F}\left\Vert B\right\Vert _{F}$,
for matrices of appropriate dimension as a consequence of the Cauchy-Schwarz
inequality\footnote{Note that the Frobenius norm and the spectral norm $\left\Vert \cdot\right\Vert _{2}$
of a matrix (the matrix norm induced by the Euclidean vector norm),
satisfy the relation $\left\Vert A\right\Vert _{2}\leq\left\Vert A\right\Vert _{F}\leq\sqrt{\min\left(m,n\right)}\left\Vert A\right\Vert _{2}$
for an $\left(m\times n\right)$-dimensional matrix $A$, see e.g.
\citet{GvL13} page 72. Together with the fact that both $\left\Vert \cdot\right\Vert _{2}$
and $\left\Vert \cdot\right\Vert _{F}$ are unitarily invariant it
follows that even $\left\Vert AB\right\Vert _{F}\leq\left\Vert A\right\Vert _{2}\left\Vert B\right\Vert _{F}$
holds. This can be seen from considering the SVD of $A=U\Sigma V'$
(where $U$ and $V$ are orthogonal and $\Sigma$ is a diagonal matrix
with non-negative elements), such that $\left\Vert AB\right\Vert _{F}^{2}=\left\Vert U\Sigma V'B\right\Vert _{F}^{2}=\left\Vert \Sigma V'B\right\Vert _{F}^{2}=\sum\left|\sigma_{i}b_{ij}\right|^{2}\leq\max\left\{ \sigma_{i}\right\} \sum b_{ij}^{2}=\left\Vert \Sigma\right\Vert _{2}\left\Vert B\right\Vert _{F}=\left\Vert A\right\Vert _{2}\left\Vert B\right\Vert _{F}$.}, and satisfies $\left\Vert A_{m\times n}\otimes B_{p\times q}\right\Vert _{F}\leq C\cdot\left\Vert A_{m\times n}\right\Vert _{F}\left\Vert B_{p\times q}\right\Vert _{F}$
which follows from the equivalence of the matrix norms $\left\Vert \cdot\right\Vert _{2}$
and $\left\Vert \cdot\right\Vert _{F}$, from the fact that $\left\Vert A\otimes B\right\Vert _{2}=\left\Vert A\right\Vert _{2}\left\Vert B\right\Vert _{2}$
and that $\left\Vert I_{n}\right\Vert _{2}=1$.\footnote{To be more precise, we have 
\begin{align*}
\left\Vert A_{m\times n}\otimes B_{p\times q}\right\Vert _{F} & =\left\Vert \left(A_{m\times n}\otimes I_{p}\right)\left(I_{n}\otimes B_{p\times q}\right)\right\Vert _{F}\leq\left\Vert \left(A_{m\times n}\otimes I_{p}\right)\right\Vert _{2}\left\Vert \left(I_{n}\otimes B_{p\times q}\right)\right\Vert _{F}\leq\\
 & \leq\left\Vert A_{m\times n}\right\Vert _{2}\sqrt{\min\left(np,nq\right)}\left\Vert \left(I_{n}\otimes B_{p\times q}\right)\right\Vert _{2}=\sqrt{\min\left(np,nq\right)}\left\Vert A_{m\times n}\right\Vert _{2}\left\Vert B_{p\times q}\right\Vert _{2}\leq\\
 & \leq\sqrt{\min\left(np,nq\right)}\left\Vert A_{m\times n}\right\Vert _{F}\left\Vert B_{p\times q}\right\Vert _{F}.
\end{align*}
}

In order to verify that each of the partitioned matrices converges
(in expectation) uniformly, we note that $x_{b,t-1}(\theta),\ w_{b,t-1}\left(\theta\right)$,
the derivatives with respect to $\pi$ in the matrices $\mathcal{D}$
and $\mathcal{E}$ in the block pertaining to $l_{\pi_{3}\pi_{3},t}\left(\theta\right)$
are (causal dynamic) transformations of the (true) inputs $\left(\varepsilon_{t}\right)$.
Consider, e.g., $x_{b,t-1}\left(\theta\right)=\left(x_{t-1}\otimes b_{\theta}'(z)^{-1}\right)=\left(\begin{pmatrix}y_{t-1}\\
\vdots\\
y_{t-p}
\end{pmatrix}\otimes b_{\theta}'(z)^{-1}\right)$, where $y_{t-1}=a(z)^{-1}b(z)\varepsilon_{t-1}$ and the subscript
$\theta$ in $b_{\theta}'(z)$ is intended to emphasize that $b_{\theta}'(z)$
is a function of the parameters to be optimized while $a(z)^{-1}b(z)$
in $y_{t-1}=a(z)^{-1}b(z)\varepsilon_{t}$ corresponds to the truth.
Obviously, all elements in $x_{b,t-1}(\theta)$ are dynamic transformations
of the process $\left(\varepsilon_{t}\right)$ with geometrically
decreasing coefficients. It is easy to see that the power series (depending
on $\theta$) corresponding to the dynamic transformations converge
uniformly on the compact set $\Theta_{0}$. The same is true for $w_{b,t-1}(\theta)=\left(\begin{pmatrix}u_{t-1}\left(\theta\right)\\
\vdots\\
u_{t-q}\left(\theta\right)
\end{pmatrix}\otimes b_{\theta}'(z)^{-1}\right)$ where, e.g., $u_{t-1}\left(\theta\right)=b_{\theta}(z)^{-1}a_{\theta}(z)a(z)^{-1}b(z)\varepsilon_{t-1}$
and where the subscript $\theta$ is intended to emphasize the same
fact as before. 

The remainder of the argument is identical to the one in \citet{LMS_svarIdent16}
page 302, i.e. 
\[
\left|e_{i,x,t}\left(\theta\right)\right|,\ e_{i,x,t}^{2}\left(\theta\right),\ \left|e_{i,xx,t}\left(\theta\right)\right|,\ \left\Vert e_{i,x\lambda_{i},t}\left(\theta\right)\right\Vert ,\ \left\Vert e_{i,\lambda_{i}\lambda_{i},t}\left(\theta\right)\right\Vert 
\]
 are bounded by $C\left(1+\left\Vert u_{t}\left(\theta\right)\right\Vert ^{a_{i}}\right)$
for a generic (not always the same) constant $C$ according to assumption
\ref{assu:ULLN}. Since $u_{t}\left(\theta\right)=b_{\theta}(z)^{-1}a_{\theta}(z)a(z)^{-1}b(z)\varepsilon_{t}$
, it follows from the fact that power series with geometrically decreasing
coefficient are absolutely convergent within the unit circle and uniformly
convergent on every compact subset thereof and from Assumption \ref{assu:ULLN}
that 
\[
\mathbb{E}\left(\sup_{\theta\in\Theta_{0}}\left\Vert \frac{\partial^{2}\log\left(f\left(x,\theta\right)\right)}{\partial\theta\partial\theta'}\right\Vert \right)<\infty
\]
 holds.

\section{\label{sec:hessian_equal_opg}Expectation of Hessian Equals the Expectation
of the negative Outer Product of the Score}

The derivations are essentially identical to the ones in \citet{LMS_svarIdent16}.

\end{document}